\tikzstyle{level 1}=[level distance=1.5cm, sibling distance=2.5cm]
\tikzstyle{level 2}=[level distance=1.5cm, sibling distance=1.5cm]
\tikzstyle{level 3}=[level distance=1.5cm, sibling distance=1cm]
\tikzstyle{level 4}=[level distance=1.5cm, sibling distance=2cm]
\patchcmd{\quote}{\rightmargin}{\leftmargin 0em \rightmargin}{}{}
\newtheorem{definition}{Definition}
\newtheorem{remark}{Remark}
\newtheorem{example}{Example}
\newtheorem{lemma}{Lemma}
\newtheorem{theorem}{Theorem}
\title{Assume-Admissible Synthesis%
\footnote{Supported by the ERC starting grant inVEST (FP7-279499)}}
\author{Romain Brenguier, Jean-Fran\c cois Raskin, Ocan Sankur}
\affil{Universit\'e Libre de Bruxelles, Brussels, Belgium}
\begin{document}
\maketitle
\begin{abstract}
In this paper, we introduce a novel rule for synthesis of reactive systems,  applicable to systems made of $n$ components which have each their own objectives. 
It is based on the notion of {\em admissible} strategies. We compare our novel
rule with previous rules defined in the literature, and we show that contrary to
the previous proposals, our rule defines sets of solutions which are {\em rectangular}. This property leads to solutions which are robust and resilient. We  provide algorithms with optimal complexity and also an abstraction framework. 
\end{abstract}

\section{Introduction}
\label{section:intro}
The automatic synthesis of reactive systems has recently attracted a considerable  attention. 
The theoretical foundations of most of the contributions in this area rely on two-player zero sum games played on graphs: one player  (player~1) models the system to synthesize, and the other player (player~2) models its environment. The game is zero-sum: the objective of player~1 is to enforce the specification of the system while the objective of player~2 is the {\em negation} of this specification. This is a {\em worst-case} assumption: because the cooperation of the environment cannot be assumed, we postulate that it is {\em antagonistic}.

A fully adversarial environment is usually a bold abstraction of reality. Nevertheless, it is popular because it is {\em simple} and {\em sound}: a {\em winning strategy} against an antagonistic player is winning against {\em any} environment which pursues its own objective. But this approach may fail to find a winning strategy even if there exist solutions when the objective of the environment is taken into account. 
Also, this model is for two players only: system {\it vs} environment. 
In practice, both the system and the environment may be composed of several parts to be constructed individually or whose objectives should be considered one at a time.
It is thus crucial to take into account different players' objectives when synthesizing strategies;
accordingly, 
alternative notions have been proposed in the literature.

A first classical alternative is to {\em weaken} the winning condition of player~1 using the objective of the 
environment, requiring the system to win only when the environment meets its objective. 
This approach together with its weaknesses have been discussed in~\cite{BloemEJK14}, we will add to that later in the paper.  A second alternative is to use concepts from $n$-players {\em non}-zero sum games. This is the approach taken both by {\em assume-guarantee synthesis}~\cite{CH07} ({\sf AG}), and by {\em rational synthesis}~\cite{FismanKL10} ({\sf RS}). {\sf AG} relies on {\em secure equilibria}~\cite{KHJ06} ({\sf SE}), a refinement of Nash equilibria~\cite{nash50} ({\sf NE}). In {\sf SE}, objectives are lexicographic: players first try to maximize their own specifications, and then try to falsify the specifications of others. It is shown in~\cite{KHJ06} that {\sf SE} are those {\sf NE} which represent enforceable contracts between the two players. In {\sf RS}, the system is assumed to be monolithic and the environment is made of components that are {\em partially controllable}. In {\sf RS}, we search for a profile of strategies where the system ensures its objective and the players that model the environment are given an ``{\em acceptable}'' strategy profiles, from which it is assumed that they will not deviate.  ``Acceptable'' is formalized either by {\sf NE}, dominating strategies ({\sf Dom}), or subgame perfect equilibria ({\sf SPE}). 

\noindent\textbf{Contributions.}
As a first and central contribution, we propose a novel notion of synthesis where we take into account different players' objectives using the concept of {\em admissible} strategies~\cite{adam2008admissibility,berwanger07,BRS14}. 
For a player with objective $\phi$, a strategy $\sigma$ is {\em dominated} by $\sigma'$ if $\sigma'$ does as well as $\sigma$ w.r.t. $\phi$ against all strategies of the other players, and better for some of those strategies. A strategy $\sigma$ is {\em admissible} if it is {\em not} dominated by another strategy. 
In~\cite{berwanger07}, the admissibility notion was lifted to games played on graphs, and algorithmic questions left open were solved in~\cite{BRS14},
with the goal of \emph{model checking} the set of runs that survive the iterative elimination of dominated strategies.
Here, we use this notion to derive a meaningful notion to \emph{synthesize} systems with several players, with the following idea.
\emph{Rational} players should only play admissible strategies since dominated strategies are clearly \emph{suboptimal}.
In {\em assume-admissible synthesis} ({\sf AA}), we make the assumption that players play admissible strategies. 
Then for each player, we search for an admissible strategy that is \emph{winning} against all admissible strategies of other players. 
{\sf AA} is {\em sound}: any strategy profile that is winning against admissible strategies of other players, satisfies the objectives of all the players (Theorem~\ref{thm:aa}). 

As a second contribution, we compare the different synthesis rules. First we apply all the rules on a simple but representative example, and show the main advantages of {\sf AA} w.r.t. the other rules. 
Then we compare systematically the different approaches. We show when a solution for one rule implies a solution for another rule and we prove that, contrary to other rules, \AA yields rectangular sets of solutions (Theorem~\ref{thm:rectangular}). We argue that the rectangularity property is essential for practical applications.

As a third contribution, we provide algorithms to decide the existence of assume-admissible winning strategy profiles and prove the optimal complexity of our algorithm (Theorem~\ref{thm:aaalgo}): \PSPACE-complete for M\"uller, and \P\TIME\ for B\"uchi objectives.

As a last important contribution, we  provide an abstraction framework which allows us to define sufficient conditions to compute sets of winning assume-admissible strategies for each player in the game compositionally (Theorem~\ref{thm:abstract}). 

\noindent\textbf{Additional pointers to related works.}
We have already mentioned assume-guarantee synthesis~\cite{CH07} and rational synthesis~\cite{FismanKL10,KPV14}. Those are the closest related works to ours as they pursue the same scientific objective: to synthesis strategy profiles for non-zero sum multi-player games by taking into account the specification of each player. As those works are defined for similar formal setting, we are able to provide formal statements in the core of the paper that add elements of comparison with our work.

In~\cite{Faella09}, Faella studies several alternatives to the notion of winning strategy including the notion of admissible strategy. His work is for two-players only, and only the objective of one player is taken into account, the objective of the other player is left unspecified. Faella uses the notion of admissibility to define a notion of {\em best-effort} in synthesis while we use the notion of admissibility to take into account the 
objectives of the other players in an $n$ player setting where each player has his own objective.

The notion of admissible strategy is definable in strategy logics~\cite{ChatterjeeHP10,MogaveroMV10} and decision problems related to the \AA rule can be reduced to satisfiability queries in such logics. Nevertheless this would not lead to worst-case optimal algorithms. Based on our previous work~\cite{BRS14}, we develop in this paper worst-case optimal algorithms.

In~\cite{DammF14}, Damm and Finkbeiner use the notion of {\em dominant
  strategy} to provide a compositional semi-algorithm for the (undecidable)
  distributed synthesis problem.  So while we use the notion of admissible
  strategy, they use a notion of dominant strategy. The notion of dominant
  strategy is {\em strictly stronger}:
  every dominant strategy is admissible but an admissible strategy is not
  necessary dominant. Also, in multiplayer games with
  omega-regular objectives with complete information (as considered here),
  admissible strategies are always guaranteed to exist~\cite{berwanger07} while
  it is not the case for dominant strategies.  We will show in an example that
  the notion of dominant strategy is too strong for our purpose. 
  Also, note that the objective of Damm and Finkbeiner is different from ours: they use dominance as a mean to formalize a notion of {\em best-effort} for components of a distributed system w.r.t. their common objective, while we use admissibility to take into account the objectives of the other components  when looking for a winning strategy for one component to enforce its own objective.  Additionally,
  our formal setting is different from their setting in several respects. First, they consider zero-sum games between a
  distributed team of players (processes) against a unique environment, each
  player in the team has the same specification (the specification of the
      distributed system to synthesize) while the environment is considered as
  adversarial and so its specification is the negation of the specification of
  the system. In our case, each player has his {\em own} objective and we do not
  distinguish between protagonist and antagonist players.  Second, they consider
  distributed synthesis: each individual process has its own view of the system
  while we consider games with perfect information in which all players have a
  complete view of the system state.  Finally, let us point out
  that Damm and Finkbeiner use the term {\em admissible} for specifications and
{\em not} for strategies (as already said, they indeed consider dominant
    strategies and not admissible strategies). 
 In our case, we use the notion of
{\em admissible} strategy which is classical in game theory,
  see e.g.~\cite{FT91,adam2008admissibility}. This vocabulary mismatch is
  unfortunate but we decided to stick to the term of ``admissible strategy'' 
	which is well accepted in the literature, 
	and already used in several previous works on (multi-player) 
	games played on graphs~\cite{berwanger07,Faella09,BRS14}. 

\smallskip
\noindent{{\bf Structure of the paper.}} 
Sect.~$2$ contains definitions.
In Sect.~3, we review synthesis rules introduced in the literature and define assume-admissible synthesis. 
In Sect.~4, we consider an example; this allows us to underline some weaknesses of the previous rules. 
Sect.~5 presents a formal comparison of the different rules.
Sect.~6 contains algorithms for B\"uchi and M\"uller objectives, and Sect.~7 abstraction
techniques applied to our rule.

\section{Definitions}
\label{section:def}
  A \newdef{turn-based multiplayer arena} is a tuple
  $\A = \left\langle \Agt, (\Stat_i)_{i\in\Agt}, \sinit, (\Act_i)_{i\in\Agt}, \delta \right\rangle$
  where
  $\Agt$ is a finite set of players;
  for $i \in \Agt$, $\Stat_i$ is a finite set of player-$i$ states; 
  we let $\Stat = \biguplus_{i\in \Agt} \Stat_i$;
  $\sinit \in \Stat$ is the initial state;
  for every $i \in \Agt$, $\Act_i$ is the set of player-$i$ actions;
  we let $\Act = \bigcup_{i\in \Agt} \Act_i$;
  and $\delta \colon \Stat \times \Act \mapsto \Stat$ is the transition function.
A \newdef{run}~$\rho$ is a sequence of alternating states and actions~$\rho = s_1a_1s_2a_2\ldots \in (\Stat\cdot \Act)^\omega$ such that for all~$i\geq 1$,
$\delta(s_i,a_i)= s_{i+1}$. We write $\rho_i = s_i$, and $\act_i(\rho) = a_i$. 
A \newdef{history} is a finite prefix of a run ending in a state.
We denote by $\rho_{\leq k}$ the history $s_1a_1 \ldots s_k$;
and write $\last(\rho_{\leq k}) = s_k$, the last state of the history.
The set of states \newdef{occurring infinitely often} in a run~$\rho$ is $\inff{\rho} = \left\{ s \in \S \mid \forall j\in \mathbb{N}.\ \exists i > j, \rho_i = s \right\}$.

  A \newdef{strategy} of player $i$ is a function $\sigma_i : (\Stat^* \cdot \Stat_i) \rightarrow \Act_i$.
  A \newdef{strategy profile} for the set of players~$P \subseteq \Agt$ is a tuple of strategies, one for each player of~$P$.
  We write $-i$ for the set $\Agt \setminus \{ i \}$.
  Let $\stratset_{i}(\A)$ be the set of the  strategies of player $i$ in $\A$, written $\stratset_i$ if $\A$ is clear from context,
  and $\stratset_P$ the strategy profiles of~$P\subseteq \Agt$.

A run~$\rho$ is \emph{compatible} with strategy~$\sigma$ for player~$i$ if 
for all~$j\geq 1$, $\rho_j \in \Stat_i$ implies $\act_j(\rho) = \sigma(\rho_{\leq j})$.
It is compatible with strategy profile $\sigma_\Agt$ if it is compatible
 with each $\sigma_i$ for $i\in\Agt$.
The \emph{outcome} of a strategy profile $\sigma_\Agt$
is the unique run compatible with $\sigma_\Agt$ starting at $\sinit$,
denoted $\outcome_\A(\sigma_\Agt)$.
We write $\outcome_{\A,s}(\sigma_\Agt)$ for the outcome starting at state~$s$.
Given $\sigma_P \in \Sigma_P$
with $P \subseteq \Agt$, let $\outcome_\A(\sigma_P)$ denote the set of runs  compatible with~$\sigma_P$, and extend it to $\outcome_\A(\Sigma')$ where~$\Sigma'$ is a set of strategy profiles.
For~$E \subseteq \Stat_i \times \Act_i$, let $\strat_i(E)$ denote the set of
player-$i$ strategies~$\sigma$ whose compatible outcomes use action~$a$
from a state~$s$ only if $(s,a) \in E$.

An \emph{objective} $\phi$ is a subset of runs.
A strategy $\sigma_i$ of \player{i} is \newdef{winning} 
for objective $\phi_i$ if for all $\sigma_{-i}\in \stratset_{-i}$, $\outcome_\A(\sigma_i,\sigma_{-i}) \in \win{i}$.
A \emph{game} is an arena equipped with an objective for each player, written $\Game = \langle \A, (\win{i})_{i\in \Agt}\rangle$ where for each \player{i}, $\win{i}$ is an objective.
Given a strategy profile $\sigma_P$ for the set of players~$P$, we write $\Game,\sigma_P \models \phi$ if $\outcome_\A(\sigma_P) \subseteq \phi$.
We write $\outcome_\Game(\sigma_P) = \outcome_\A(\sigma_P)$, and $\outcome_\Game = \outcome_\Game(\Sigma)$
for $\Sigma \subseteq \stratset_i$.
For any coalition~$C \subseteq \Agt$, and objective~$\phi$, we denote by $\Win_C(\A,\phi)$
the set of states~$s$ such that there exists~$\sigma_C \in \Sigma_C$ with $\out_{\Game,s}(\sigma_C) \subseteq \phi$.

Although we prove some of our results for general objectives, 
we give algorithms for $\omega$-regular
objectives represented by Muller conditions.
A Muller condition is given by a family~$\mathcal{F}$ of
 sets of states: $\win{i} = \{ \rho \mid \inff{\rho}\in \mathcal{F}\}$.
Following~\cite{hunter07}, we assume that~$\mathcal{F}$ is given by a Boolean circuit whose inputs are~$\S$, which evaluates to true exactly on valuations
encoding subsets~$S \in \mathcal{F}$.
We also use linear temporal logic (LTL)~\cite{Pnueli77} to describe objectives.
LTL formulas are defined by $\phi := \G\phi \mid \F\phi \mid \X\phi \mid \phi \U \phi \mid \phi \W \phi \mid S$ where $S \subseteq \Stat$
(We refer to \cite{Emerson90} for the semantics.)
We consider the special case of B\"uchi objectives, given by $\G\F(B) = \{ \rho \mid B \cap \inff{\rho} \ne \varnothing\}$.
Boolean combinations of formulas $\G\F(S)$ define
Muller conditions representable by polynomial-size circuits.

In any game~$\Game$, a player~$i$ strategy $\sigma_i$ is \emph{dominated} by $\sigma'_i$ 
if for all~$\sigma_{-i} \in \Sigma_{-i}$, $\Game,\sigma_i,\sigma_{-i} \models \phi_i$ implies $\Game,\sigma'_i,\sigma_{-i} \models \phi_i$
and there exists~$\sigma_{-i} \in \Sigma_{-i}$, such that $\Game,\sigma'_i,\sigma_{-i} \models \phi_i$ and $\Game,\sigma_i,\sigma_{-i} \not\models \phi_i$,
(this is classically called \emph{weak} dominance, but we call it dominance for simplicity).
A strategy which is not dominated is \emph{admissible}.
Thus, admissible strategies are maximal, and incomparable, with respect to the dominance relation.
We write~$\adm_{i}(\Game)$ for the set of \emph{admissible} strategies in~$\Sigma_i$, and $\adm_P(\Game) = \prod_{i \in P} \adm_i(G)$ the product of the sets of
admissible strategies for~$P\subseteq \Agt$. 

Strategy~$\sigma_i$ is \emph{dominant} (\dom) if for all~$\sigma_i'$ and $\sigma_{-i}$,
  $\Game,\sigma_i',\sigma_{-i} \models \phi_i$ implies
$\Game,\sigma_i,\sigma_{-i} \models \phi_i$. The set of dominant strategies for player~$i$ 
is written $\dom_i(\Game)$. 
A \emph{Nash equilibrium} (\NE) for~$\Game$ is a strategy profile~$\sigma_\Agt$ such that for all~$i \in \Agt$, and $\sigma_i' \in \Sigma_i$,
$\Game,\sigma_{-i},\sigma_i' \models \phi_i$ implies
$\Game,\sigma_{\Agt} \models \phi_i$; thus no player can improve its outcome by deviating from the prescribed strategy.
A Nash equilibrium for $\Game$ from $s$, is a Nash equilibrium for~$\Game$ where the initial state is replaced by $s$.
A \emph{subgame-perfect equilibrium} (\SPE) for~$\Game$ is a strategy profile~$\sigma_\Agt$ such that for all histories~$h$,
$(\sigma_i \circ h)_{i\in\Agt}$ is a Nash equilibrium in $\Game$ from state $\last(h)$, where given a strategy $\sigma$, $\sigma \circ h$ denotes the strategy $\last(h)\cdot h' \mapsto \sigma(h \cdot h')$.

\section{Synthesis Rules}
\label{section:rules}
In this section, we review synthesis rules proposed in the literature, and introduce a novel one: the {\em assume-admissible} synthesis rule (\AA). Unless stated otherwise, we fix for this section a game~$\Game$, with players $\Agt=\{1,\dots,n\}$ and their objectives $\phi_1,\dots,\phi_n$.

\begin{quote}
{\bf Rule \Coop:} The objectives are \emph{achieved cooperatively} if
there is a strategy profile~$\sigma_\Agt=(\sigma_1,\sigma_2,\dots,\sigma_n)$ such that $\Game,\sigma_\Agt \models \bigwedge_{i\in\Agt} \phi_i$.
\end{quote}

This rule~\cite{MannaW81,ClarkeE81} asks for a strategy profile that {\em jointly} satisfies the objectives of all the players. 
This rule makes {\em very strong assumptions}: players fully cooperate and strictly follow their respective strategies.
This concept is {\em not robust} against deviations and postulates that the behavior of every component in the system is
 {\em controllable}. 
This weakness is well-known: see e.g.~\cite{CH07} where the rule is called {\em weak co-synthesis}.

\begin{quote}
{\bf Rule \Win.} 
The objectives are \emph{achieved adversarially} if
  there is a strategy profile~$\sigma_\Agt=(\sigma_1,\dots,\sigma_n)$ such that for all~$i \in \Agt$, 
  $\Game,\sigma_i \models \phi_i$.
\end{quote}

This rule does {\em not} require any cooperation among players: the rule asks to synthesize for each player $i$ a strategy which enforces \hisher objective $\phi_i$ against all possible strategies of the other players. 
Strategy profiles obtained by \Win are extremely {\em robust}: each player is able to ensure \hisher objective no matter how the other players behave. 
Unfortunately, this rule is often not applicable in practice: often, none of the players has a winning strategy against {\em all} possible strategies of the other players. The next rules soften this requirement by taking into account the objectives of other players.

\begin{quote}
{\bf Rule  {\sf Win-under-Hyp}}:
  Given a two-player game~$\Game$ with $\Agt=\{1,2\}$ in which player~1 has objective $\phi_1$, player~2 has objective $\phi_2$, player~1 can \emph{achieve adversarially} $\phi_1$ under {\em hypothesis} $\phi_2$, 
  if there is a strategy $\sigma_1$ for player~1 such that $\Game, \sigma_1 \models \phi_2 \rightarrow \phi_1$.
\end{quote}

The rule {\em winning under hypothesis} applies for two-player games only. Here, we consider the synthesis of a strategy for player~1 against player~2 under the hypothesis that player~$2$ behaves according to \hisher specification. 
This rule is a relaxation of the rule \Win as player~1 is {\em only} expected to win when player~2 plays so that the outcome of the game satisfies $\phi_2$. While this rule is often reasonable, it is {\em fundamentally}  plagued by the following problem: instead of trying to satisfy $\phi_1$, player~1 could try to falsify $\phi_2$, see e.g.~\cite{BloemEJK14}. This problem disappears if player~2 has a winning strategy to enforce $\phi_2$, and the rule is then safe. We come back to that later in the paper (see Lemma~1).

Chatterjee et al. in~\cite{CH07} proposed synthesis rules  inspired by {\sf Win-under-Hyp} but avoid the aforementioned problem. The rule was originally proposed in a  model with two components and a scheduler. We study here two natural extensions for $n$ players.

\smallskip
\begin{quote}
{\bf Rules ${\sf AG}^{\land}$ and ${\sf AG}^{\lor}$}:
The objectives are achieved by
  \begin{itemize*} 
  \item[($\AG^{\land}$)]  \emph{assume-guarantee-$\land$} if
    there exists a strategy profile~$\sigma_\Agt$ such that\linebreak
  \begin{inparaenum}
    \item $\Game,\sigma_\Agt \models \bigwedge_{i\in\Agt} \phi_i$,
    \item for all players $i$, $\Game,\sigma_i \models  (\bigwedge_{j\in\Agt\setminus\{i\}} \phi_j) \Rightarrow \phi_i$.
    \end{inparaenum}
 
   \item[($\AG^{\lor}$)] \emph{assume-guarantee-$\lor$}\footnote{This rule was introduced in~\cite{CDFR14}, under the name {\sf Doomsday equilibria}, as a generalization of the {\sf AG} rule of~\cite{CH07} to the case of $n$-players.}  if
  there exists a strategy profile~$\sigma_\Agt$ such that\linebreak
  \begin{inparaenum}
    \item $\Game,\sigma_\Agt \models \bigwedge_{i\in\Agt} \phi_i$,
    \item for all players $i$, $\Game,\sigma_i \models  (\bigvee_{j\in\Agt\setminus\{i\}} \phi_j) \Rightarrow \phi_i$.
    \end{inparaenum}
    
 \end{itemize*}
 \end{quote}
    
The two rules differ in the second requirement: ${\sf AG}^{\land}$ requires that player~$i$ wins whenever {\em all} the other players win, while ${\sf AG}^{\lor}$ requires player~$i$ to win whenever {\em one} of the other player wins. Clearly ${\sf AG}^{\lor}$ is stronger, and the two rules are equivalent for two-player games. 
As shown in~\cite{KHJ06}, for two-player games, a profile of strategy for ${\sf AG}^{\land}$ (or ${\sf AG}^{\lor}$) is a Nash equilibrium in a  derived game where players want, in lexicographic order, first to satisfy their own objectives, and then as a secondary objective, want to falsify the objectives of the other players. As {\sf NE}, ${\sf AG}^{\land}$ and ${\sf AG}^{\lor}$ require players to {\em synchronize} on a particular strategy profiles. As we will see, this is not the case for the new rule that we propose.

\cite{FismanKL10} and \cite{KPV14} introduce two versions of {\em rational synthesis} (\RS).  In the two cases, one of the player, say player~1, models the system while the other players model the environment. The existential version ($\RS^{\exists}$) searches for a strategy for the system, and a profile of strategies for the environment, such that the objective of the system is satisfied, and the profile for the environment is {\em stable} according to a solution concept which is either \NE, \SPE, or \dom. The universal version ($\RS^{\forall}$) searches for a strategy for the system, such that for all environment strategy profiles that are {\em stable} according to the solution concept, the objective of the system holds.  We write $\Sigma_{G,\sigma_1}^{\NE}$, resp. $\Sigma_{\Game,\sigma_1}^{\SPE}$, for the set of strategy profiles $\sigma_{-1} = (\sigma_2,\sigma_3,\dots,\sigma_n)$ 
that are \NE (resp. \SPE) equilibria in the game $\Game$ when player~1 plays $\sigma_1$, and $\Sigma_{G,\sigma_1}^{\dom}$ for the set of strategy profiles
$\sigma_{-1}$ where each strategy $\sigma_j$, $2 \leq j \leq n$, is dominant in the game $\Game$ when player~$1$ plays $\sigma_1$.

\smallskip
\begin{quote} {\bf Rules $\RS^{\exists,\forall}(\NE, \SPE, \dom)$}:
Let $\gamma \in \{ \NE,\SPE,\dom\}$, the objective is achieved by:
  \begin{itemize}
	\item[$(\RS^{\exists}(\gamma))$] {existential rational synthesis under $\gamma$}  if there is a strategy $\sigma_1$ of player~1, and a profile 
          $\sigma_{-1} \in \Sigma_{\Game,\sigma_1}^{\gamma}$, such that $\Game,\sigma_1,\sigma_{-1} \models \phi_1$.
	\item[$(\RS^{\forall}(\gamma))$]  {universal rational synthesis under $\gamma$} if there is a strategy $\sigma_1$ of player~1, such that $\Sigma_{\Game,\sigma_1}^{\gamma}\not=\emptyset$, and for all
          $\sigma_{-1} \in \Sigma_{\Game,\sigma_1}^{\gamma} $, $\Game,\sigma_1,\sigma_{-1} \models \phi_1$.
  \end{itemize}
\end{quote}
Clearly, $(\RS^{\forall}(\gamma))$ is stronger than $(\RS^{\exists}(\gamma))$ and more robust. 
As $\RS^{\exists,\forall}(\NE,\SPE)$ are derived from \NE and \SPE, they require players to synchronize on particular strategy profiles. 

\smallskip
\noindent\textit{Novel rule} We now present our novel rule based on the notion of {\em admissible strategies}.
\smallskip
\begin{quote}{\bf Rule \AA}:
The objectives are achieved by {\em assume-admissible (\AA) strategies} if
  there is a strategy profile~$\sigma_\Agt$ such that:
  \begin{inparaenum}
  \item for all $i \in \Agt$, $\sigma_i \in \adm_i(\Game)$;
  \item for all $i \in \Agt$,  $\forall \sigma_{-i}' \in \adm_{-i}(\Game).\ \Game,\sigma_{-i}',\sigma_i \models \phi_i$.
  \end{inparaenum}
\end{quote}
\smallskip

\noindent
A player-$i$ strategy satisfying conditions 1 and 2 above is called \emph{assume-admissible-winning}
(\emph{\AA-winning}). A profile of $\AA$-winning strategies is an \emph{$\AA$-winning strategy profile}.
The rule \AA  requires that each player has a strategy \emph{winning} against {\em admissible} strategies of other players.
So we assume that players do not play strategies which are {\em dominated}, which is reasonable as dominated strategies are clearly {\em suboptimal options}.

Contrary to \Coop, ${\sf AG}^{\land}$, and ${\sf AG}^{\lor}$, \AA does not require that the strategy profile is winning for each 
player. As for \Win, this is a consequence of the definition:
\begin{restatable}{theorem}{thmaawins}
  \label{thm:aa}
  For all \AA-winning strategy profile $\sigma_\Agt$, 
  $\Game, \sigma_\Agt \models \bigwedge_{i\in\Agt} \phi_i$.
\end{restatable}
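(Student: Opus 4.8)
The plan is to observe that the two defining conditions of an \AA-winning profile are "self-referential" in a way that makes the claim essentially immediate. Fix an \AA-winning strategy profile $\sigma_\Agt = (\sigma_1,\dots,\sigma_n)$ and an arbitrary player $i \in \Agt$. By condition~1 of the rule, $\sigma_j \in \adm_j(\Game)$ for every $j \in \Agt$; in particular, looking only at the coordinates $j \neq i$, the opponent profile $\sigma_{-i} = (\sigma_j)_{j \neq i}$ is a member of $\adm_{-i}(\Game) = \prod_{j \neq i} \adm_j(\Game)$.

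Next I would instantiate condition~2 for player~$i$ at this particular opponent profile. Condition~2 asserts $\Game,\sigma_{-i}',\sigma_i \models \phi_i$ for \emph{every} $\sigma_{-i}' \in \adm_{-i}(\Game)$; applying it with $\sigma_{-i}' := \sigma_{-i}$, which we just argued is legitimate, yields $\Game,\sigma_{-i},\sigma_i \models \phi_i$, i.e. $\Game,\sigma_\Agt \models \phi_i$ — the unique outcome $\outcome_\Game(\sigma_\Agt)$ from $\sinit$ satisfies $\phi_i$. Since $i$ was arbitrary, this holds for all $i \in \Agt$, and because $\outcome_\Game(\sigma_\Agt)$ is a single run, $\Game,\sigma_\Agt \models \phi_i$ for all $i$ is the same statement as $\Game,\sigma_\Agt \models \bigwedge_{i\in\Agt}\phi_i$, which is the desired conclusion.

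I do not expect a genuine obstacle here: the whole argument is a one-step consequence of unfolding the definition of \AA-winning. The only point requiring a little care is the bookkeeping that $\sigma_{-i}$, assembled from the coordinates of the profile itself, really is an element of $\adm_{-i}(\Game)$ — but that is exactly what condition~1 provides, via the product definition of $\adm_{-i}(\Game)$. Note also that, unlike the non-vacuousness of the rule, this soundness statement does not rely on the guaranteed existence of admissible strategies from~\cite{berwanger07}: we never need $\adm_i(\Game) \neq \varnothing$, only that the given $\sigma_j$'s are admissible.
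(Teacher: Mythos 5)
Your proof is correct and follows exactly the same argument as the paper's: condition~1 shows $\sigma_{-i} \in \adm_{-i}(\Game)$, condition~2 is then instantiated at $\sigma_{-i}$ to get $\Game,\sigma_\Agt \models \phi_i$, and one quantifies over $i$. No gap; the extra remark about not needing non-emptiness of $\adm_i(\Game)$ is a correct, if inessential, observation.
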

The condition that \AA strategies are admissible is necessary for Thm.~\ref{thm:aa};
it does not suffice to have strategies that are winning against admissible strategies (see Appendix).

\section{Synthesis Rules at the Light of an Example}
\label{section:example}
We illustrate the synthesis rules on an example of a real-time scheduler with two tasks.
The system is composed of {\sf Sched} (player~1) and {\sf Env} (player~2).  {\sf Env} chooses the truth value for $r_1,r_2$ ($r_i$ is a request for task~$i$), and {\sf Sched} controls $q_1,q_2$ ($q_i$ means that task~$i$ has been scheduled).
Our model is a turn-based game: first, {\sf Env} chooses a value for $r_1,r_2$, then in the next round {\sf Sched}
chooses a value for $q_1,q_2$, and we repeat forever. 
The requirements for {\sf Sched} and {\sf Env} are as follows:
  \begin{inparaenum}
  	\item {\sf Sched} is not allowed to schedule the two tasks at the same time. When $r_1$ is true, then task~$1$ must be scheduled ($q_1$) {\em within} three rounds. When $r_2$ is true, task~$2$ must be scheduled ($q_2$) in {\em exactly} three rounds.
	\item Whenever {\sf Env} issues $r_i$ then it does not issue this request again before the occurrence of the grant $q_i$. {\sf Env} issues infinitely many requests $r_1$ and $r_2$.
  \end{inparaenum} 
We say that a request~$r_i$ is \emph{pending} whenever the corresponding grant has not yet been issued.
Those requirements can be expressed in LTL as follows:
\begin{itemize}
	\item $\phi_{{\sf Sched}} = \G (r_1 \rightarrow \X q_1 \lor \X \X \X q_1) \land \G(r_2 \rightarrow\X \X \X q_2) \land \G \lnot (q_1 \land q_2) .$
	\item $\phi_{{\sf Env}} = \G (r_1 \rightarrow \X (\lnot r_1 \W q_1)) \land \G (r_2 \rightarrow \X (\lnot r_2 \W q_2)) \land (\G \F r_1) \land  (\G \F r_2).$
\end{itemize}

\noindent
{{\bf A solution compatible with the previous rules in the literature.}} 
First, we note that there is no winning strategy neither for {\sf Sched}, nor for {\sf Env}. In fact, first let $\hat{\sigma_1}$ be the strategy of {\sf Sched} that never schedules any of the two tasks, i.e. leaves $q_1$ and $q_2$ constantly false. This is clearly forcing $\neg \phi_{\sf Env}$ against all strategies of {\sf Env}. Second, let $\hat{\sigma_2}$ be s.t. {\sf Env} always requests the scheduling of both task~1 and task~2, i.e. $r_1$ and $r_2$ are constantly true. 
It is easy to see that this enforces $\neg \phi_{\sf Sched}$ against any
strategy of {\sf Sched}. So, there is no solution with rule {\sf
Win}\footnote{Also, it is easy to see that {\sf Env} does not have a dominant
strategy for his specification (details are in Appendix). So, considering dominant strategies as {\em best-effort strategies} would not lead to a solution for this example. To find a solution, we need to take into account the objectives of the other players.}.
But clearly those strategies are also not compatible with the objectives of the respective players, so this leaves the possibility to apply successfully the other rules. 
We now consider a strategy profile  which is a solution for all the rules except for \AA.

Let $(\sigma_1,\sigma_2)$ be strategies for player~1 and 2 respectively, such that the outcome of $(\sigma_1,\sigma_2)$  is "{\sf Env} emits $r_1$, then {\sf Sched} emits $q_1$,  {\sf Env} emits $r_2$, then {\sf Sched} waits one round and emits $q_2$, and repeat." 
If a deviation from this {\em exact} execution is observed, then the two players switch to strategies $\hat{\sigma_1}$  and $\hat{\sigma_2}$ respectively, i.e. to the strategies that falsify the specification of the other players. The reader can now convince himself/herself that $(\sigma_1,\sigma_2)$ is a solution for {\Coop}, \AG and ${\sf RS}^{\exists}(\NE,\SPE,\dom)$. Furthermore, we claim that $\sigma_1$ is a solution for {\sf Win-under-Hyp} and ${\sf RS}^{\forall}(\NE,\SPE,\dom)$.\footnote{The interested reader can find a detailed explanation for all those claims in the appendix. }
But, assume now that {\sf Env} is a device driver which requests the scheduling
of tasks by the scheduler of the kernel of an OS. 
Clearly $(\sigma_1,\sigma_2)$, which is compatible with all the previous rules (but {\sf Win}), makes little sense in this context. On the other hand, $\phi_{\sf Sched}$ and $\phi_{\sf Env}$ are natural specifications for such a system. So, there is clearly room for other synthesis rules!

\noindent
{{\bf Solutions provided by {\sf AA}, our novel rule.}}
%
For {\sf Env}, we claim that the set of {\em admissible strategies}, noted  ${\sf Adm}(\phi_{{\sf Env}})$, are exactly those that $(i)$ do not emit a new request before the previous one has been acknowledged, and $(ii)$ do always eventually emit a (new) request when the previous one has been granted. Indeed as we have seen above, {\sf Env} and {\sf Sched} can cooperate to satisfy $\phi_{{\sf Sched}} \land  \phi_{{\sf Env}}$, so any strategy of {\sf Env} which would imply the falsification of  $\phi_{{\sf Env}}$ is dominated and so it is not admissible. Also, we have seen that {\sf Env} does not have a winning strategy for $\phi_{{\sf Env}}$, so {\sf Env} cannot do better.

Now, let us consider the following strategy for {\sf Sched}.
 $(i)$ if pending requests $r_1$ and $r_2$ were made one round ago, then grant $q_1$;
if pending requests $r_1$ and $r_2$ were made three rounds ago, then behave arbitrarily (it is no more possible to satisfy the specification);
 $(ii)$ if pending request $r_2$ was made three rounds ago, but not~$r_1$, then grant~$q_2$; $(iii)$ if pending $r_1$ was made three rounds ago, but not~$r_2$, then grant~$q_1$. 
 We claim that this strategy is {\em admissible} and while it is {\em not} winning against {\em all} possible strategies of {\sf Env}, it is {\em winning} against {\em all admissible} strategies of {\sf Env}. 
 So, this strategy enforces $\phi_{\sf Sched}$ against all reasonable strategies of {\sf Env} w.r.t. to \hisher own objective $\phi_{\sf Env}$. In fact, there is a whole set of such strategies for {\sf Sched}, noted ${\sf WinAdm_{\sf Sched}}$. Similarly, there is a whole set of strategies for {\sf Env} which are both {\em admissible} and winning against the {\em admissible} strategies of {\sf Sched}, noted ${\sf WinAdm_{\sf Env}}$. We prove in the next section that the solutions to \AA are {\em rectangular sets}: they are exactly the solutions in ${\sf WinAdm_{\sf Sched}} \times {\sf WinAdm_{\sf Env}}$. This ensures that \AA leads to {\em resilient} solutions: players do not need to synchronize with the other players on a particular strategy profile but they can arbitrarily choose inside their sets of strategies that are admissible and winning against the admissible strategies of the other players.

\section{Comparison of Synthesis Rules}
\label{section:comparison}
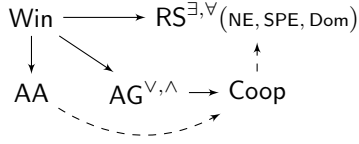
\begin{figure}[h]
  \centering
  \begin{tikzpicture}
    \node at (0,-1) (AA) {$\AA$};
    \node at (1.5,-1) (AG) {$\AG^{\lor,\land}$};
    \node at (3,-1) (C) {$\Coop$};
    \node at (3,0) (RS) {$\textsf{RS}^{\exists,\forall}($\scriptsize$\NE,\SPE,\dom$\normalsize$)$};
    \node at (0,0) (W) {\textsf{Win}};
    \draw[-latex',dashed] (AA) edge[bend right] (C);
    \draw[-latex',dashed] (C) edge (RS);
    \draw[-latex'] (W) edge (AA);
    \draw[-latex'] (W) edge (AG);
    \draw[-latex'] (W) edge (RS);
    \draw[-latex'] (AG) edge (C);
  \end{tikzpicture}
  \caption{Comparison of synthesis rules.
  }
  \label{fig:implications}
\end{figure}
In this section, we compare the synthesis rules to understand which ones
yield solutions more often, and to assess their robustness.
 Some relations are easy to establish; for instance,
rules $\Win,\AG^\lor,\AG^{\land},\AA$ imply \textsf{Coop} by definition
(and Thm.~\ref{thm:aa}).
We summarize
the implication relations between the rules in Fig.~\ref{fig:implications}.
We present the 
rules $\AG^\lor,\AG^\land$,
 and the variants of $\textsf{RS}^\cdot(\cdot)$
in one group, respectively. 
A dashed arrow from \textsf{A} to \textsf{B} means
that rule~\textsf{A} implies \emph{some} rule in \textsf{B};
while a plain arrow means that \textsf{A} implies \emph{all}
rules in \textsf{B}
(\textit{e.g.} \AA{} implies $\AG^\land$ but not~$\AG^\lor$;
while \textsf{Win} implies both rules.)
An absence of path means that \textsf{A} does not imply any variant of \textsf{B}.
Thus the figure explains which approaches yield solutions more often,
by abstracting away the precise variants.
The following theorem states the correctness of our diagram;
the appendix contains detailed proofs between all pairs of  rules.
\begin{theorem}\label{thm:implications}
  The implication relations of Fig.~\ref{fig:implications} hold.
\end{theorem}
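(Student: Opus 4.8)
The diagram makes two kinds of assertion: each drawn arrow is a positive implication — a \emph{plain} arrow from \textsf{A} to \textsf{B} meaning that \textsf{A} implies \emph{every} variant bundled in \textsf{B}, a \emph{dashed} one that it implies \emph{some} variant — and the \emph{absence} of a directed path from \textsf{A} to \textsf{B} asserts the matching non-implication. The plan is therefore: (i) prove each of the six drawn arrows; (ii) take the transitive closure, which already yields every remaining positive claim (e.g.\ $\Win \Rightarrow \Coop$, $\AA \Rightarrow \RS^\exists(\NE)$); and (iii) for every ordered pair of nodes with no path between them, and for every dashed path that must \emph{not} be upgradable to a plain one, exhibit a game separating the rules. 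Steps (i)--(ii) rest on a handful of uniform arguments; step (iii) carries the content.

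For the positive direction the one observation that does most of the work is that a strategy $\sigma_i$ \emph{winning} for $\phi_i$ is admissible (it satisfies $\phi_i$ against \emph{every} opponent profile, hence cannot be dominated) and dominant, and that after fixing any subset of the other players' strategies it is still winning in the residual game. From this: $\Win \Rightarrow \AA$ (clause~1 of \AA is admissibility; clause~2 asks for winning against \emph{admissible} opponents, weaker than winning against all of them); $\Win \Rightarrow \AG^\land$ and $\Win \Rightarrow \AG^\lor$ (the unique outcome of a winning profile satisfies every $\phi_i$, so $\bigwedge_i \phi_i$ holds and each clause $(\cdots)\Rightarrow\phi_i$ is vacuously true); and $\Win \Rightarrow \RS^{\exists,\forall}(\gamma)$ for $\gamma\in\{\NE,\SPE,\dom\}$ (fix player~1's winning $\sigma_1$: in the residual game every other player's winning strategy is dominant, hence a Nash component, and — invoking existence of finite-memory subgame-perfect equilibria for Muller objectives — completable to an \SPE{} component, so $\Sigma_{\Game,\sigma_1}^\gamma\neq\varnothing$, while $\phi_1$ holds on \emph{every} outcome because $\sigma_1$ is winning). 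The other three arrows are immediate: $\AG^\land,\AG^\lor \Rightarrow \Coop$ since the first clause of $\AG$ \emph{is} $\Coop$; $\AA \Rightarrow \Coop$ is Theorem~\ref{thm:aa}; and $\Coop \Rightarrow \RS^\exists(\NE)$, because a profile jointly satisfying all objectives is, once player~1's component is fixed, a Nash equilibrium of the residual game (no other player can gain, its own objective already holds) witnessing $\phi_1$. Composing these gives all remaining positive edges.

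The negative part is a case analysis over the non-adjacent pairs, and several cases collapse together. A single degenerate game — the environment's objective unsatisfiable no matter what the system does, e.g.\ $\phi_1 = \F a$ enforced by the system against $\phi_2 = \G\neg a$ — makes all six $\RS$ rules hold (every environment strategy is vacuously dominant, Nash and subgame-perfect, and $\phi_1$ always holds) while $\Coop$ fails; since $\AA,\AG^\land,\AG^\lor,\Win$ all imply $\Coop$, this one example also yields $\RS\not\Rightarrow\AA,\AG,\Win$. The real-time scheduler of Section~\ref{section:example} is $\Coop$-, $\AG$- and $\RS^\exists(\NE,\SPE,\dom)$-solvable but admits no $\Win$ solution, giving $\Coop,\AG,\RS^\exists\not\Rightarrow\Win$. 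The remaining separations — $\AA\not\Rightarrow\AG$ and the non-$\NE$ fragments of $\RS$; $\AG\not\Rightarrow\AA$ and those $\RS$ fragments; $\Coop\not\Rightarrow\AA,\AG$ and those $\RS$ fragments; and $\AG^\land$ versus $\AG^\lor$ (which coincide for two players, so a three-player arena is needed) — are each witnessed by small turn-based reachability/B\"uchi games, the recurring patterns being a game whose only admissible-and-optimal behaviour is not subgame-perfect / not dominant / not stable against \emph{all} opponents, and a game whose $\AG$ obligations can only be met with non-admissible ``punishment'' strategies.

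Two points make this more than routine. First, bookkeeping: every plain arrow must be checked against \emph{all} the variants the target node bundles, every non-implication against all of them, and one must verify that no claimed counterexample is silently rescued by an indirect implication through a third node — which is why it pays to package $\RS\not\Rightarrow\Coop$ so that it propagates to $\AA,\AG,\Win$. Second, two constructions are genuinely delicate: the \SPE{} component in $\Win\Rightarrow\RS^\forall(\SPE)$, which leans on existence of finite-memory subgame-perfect equilibria for $\omega$-regular objectives in the residual game; and the separating games for $\AG\not\Rightarrow\AA$ and $\Coop\not\Rightarrow\AA$, where one must pin down the admissible strategies of the \emph{whole} game explicitly to be certain the intended rule holds while $\AA$ genuinely fails. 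I expect the latter to be the main obstacle.
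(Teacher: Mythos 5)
Your positive half is essentially the paper's: $\Win\Rightarrow\AA$ because winning strategies are admissible; $\Win\Rightarrow\AG^{\lor},\AG^{\land}$ because the implication clauses hold trivially (the consequent $\phi_i$ holds on every outcome of $\sigma_i$ --- ``trivially'', not ``vacuously'', but that is only wording); $\Win\Rightarrow\RS^{\forall}(\gamma)$ using winning strategies as the $\dom$/$\NE$ witnesses and existence of subgame-perfect equilibria for the $\SPE$ case; $\AG\Rightarrow\Coop$ and $\AA\Rightarrow\Coop$ by definition and Theorem~\ref{thm:aa}. Your route for the dashed arrow $\Coop\Rightarrow\RS$ is actually simpler than the paper's: you observe that a cooperatively winning profile is already a Nash equilibrium (nobody can improve on an objective that already holds), giving $\RS^{\exists}(\NE)$ directly, whereas the paper builds a deviation-triggered $\SPE$ to get the stronger $\RS^{\exists}(\SPE)$. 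Since the dashed arrow only asserts ``some variant of $\RS$'', your shortcut is sufficient.

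The gap is in the negative half, which you yourself identify as carrying the content: you describe the \emph{shape} of the separating games but construct none of them. The theorem is not proved until concrete games are exhibited for (at least) $\AG^{\land}\not\Rightarrow\AA$, $\AG^{\lor}\not\Rightarrow\AA$, $\AA\not\Rightarrow\AG^{\land}$, $\AA\not\Rightarrow\AG^{\lor}$, $\Coop\not\Rightarrow\AG^{\land}$, $\Coop\not\Rightarrow\RS^{\exists}(\dom)$, and $\RS^{\forall}(\dom)\not\Rightarrow\RS^{\forall}(\NE)$ (the last is needed because the $\RS$ node bundles variants and the plain arrow $\Win\rightarrow\RS$ must not induce spurious internal implications). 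Two of these are not routine in a way your plan does not anticipate: the paper proves that for \emph{two-player} games $\AA\Rightarrow\AG^{\land}$, so any counterexample for $\AA\not\Rightarrow\AG^{\land}$ must have at least three players --- you flag the three-player requirement only for separating $\AG^{\land}$ from $\AG^{\lor}$, which is not one of the claims of the figure, and not for the $\AA$-versus-$\AG$ separations where it actually bites. Likewise $\AA\not\Rightarrow\AG^{\lor}$ needs a three-player game in which one player's objective is trivially true so that the disjunctive assumption collapses; verifying that the intended player has an $\AA$-winning strategy there requires pinning down the admissible strategies exactly, which is the part you correctly predict to be hard but do not carry out. As it stands the proposal is a sound proof outline with the counterexample constructions --- the bulk of the theorem --- still missing.
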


In the controller synthesis framework using two-player games between a controller and its environment,
some works advocate the use of environment objectives which the environment can
guarantee against any controller~\cite{CHJ08}.
Under this assumption, {\sf Win-under-Hyp} implies \AA:

\begin{restatable}{lemma}{lmwuhaa}
  \label{lem:aadv->aa}  
  Let $\Game=\langle \A,\phi_1,\phi_2\rangle$ be a two-player game.
  If player~$2$ has a winning strategy for~$\phi_2$ and {\sf Win-under-Hyp} has a solution, 
  then \AA has a solution.
\end{restatable}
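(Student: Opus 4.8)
The plan is to take a winning strategy $\tau_2$ of player~2 for $\phi_2$ and a strategy $\sigma_1$ witnessing {\sf Win-under-Hyp} (so $\Game,\sigma_1 \models \phi_2 \rightarrow \phi_1$), and massage them into an \AA-winning strategy profile. First I would observe that since player~2 has a winning strategy for $\phi_2$, \emph{every} admissible strategy of player~2 is also winning for $\phi_2$: indeed, any strategy that fails $\phi_2$ against some player-1 strategy is dominated by $\tau_2$ (which satisfies $\phi_2$ everywhere), hence not admissible. Consequently $\adm_2(\Game) \subseteq \{\sigma_2 \mid \Game,\sigma_2\models\phi_2\}$, and combined with $\Game,\sigma_1\models\phi_2\rightarrow\phi_1$ this already shows that $\sigma_1$ is \emph{winning against all admissible strategies of player~2}: for every $\sigma_2'\in\adm_2(\Game)$, the outcome satisfies $\phi_2$ and therefore $\phi_1$. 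So condition~2 of \AA for player~1 is met by $\sigma_1$ as is — it only remains to fix admissibility of the two strategies.

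Next I would deal with player~2. We cannot simply keep $\tau_2$ because it need not be admissible, but we can replace it by any admissible strategy that dominates it; such a strategy exists because in finite games with $\omega$-regular objectives, admissible strategies exist and the dominance order is well-founded on the relevant classes — more carefully, one invokes the result (from \cite{berwanger07}, recalled in the excerpt) that every strategy is dominated by an admissible one, or directly that $\adm_2(\Game)\neq\varnothing$ and, since $\tau_2$ is winning, some admissible $\sigma_2^\star$ dominating it is also winning for $\phi_2$. Pick such a $\sigma_2^\star\in\adm_2(\Game)$. Then condition~2 for player~2 requires $\sigma_2^\star$ to be winning against all $\sigma_1'\in\adm_1(\Game)$ for $\phi_2$, which holds since $\sigma_2^\star$ is winning for $\phi_2$ against \emph{all} player-1 strategies.

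Finally I would fix admissibility of player~1's strategy: replace $\sigma_1$ by an admissible strategy $\sigma_1^\star$ that dominates it. The key point is that dominance is monotone with respect to the property ``winning against $\adm_2(\Game)$'': if $\sigma_1$ wins against every admissible player-2 strategy and $\sigma_1^\star$ dominates $\sigma_1$, then for each such $\sigma_2'$ we have $\Game,\sigma_1,\sigma_2'\models\phi_1$, hence $\Game,\sigma_1^\star,\sigma_2'\models\phi_1$ by definition of dominance. So $\sigma_1^\star$ still satisfies condition~2 for player~1, and it is admissible by construction. The pair $(\sigma_1^\star,\sigma_2^\star)$ is then \AA-winning. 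I expect the only delicate step to be the existence of an admissible \emph{dominating} strategy that preserves the winning-against-$\adm_{-i}$ property — this is where I would lean on the cited results guaranteeing that admissible strategies exist and that every strategy is below an admissible one, rather than reproving it; everything else is a short unfolding of the definitions. A minor subtlety worth a sentence is that the argument ``every admissible player-2 strategy wins $\phi_2$'' uses that the \emph{initial} state lies in $\Win_{\{2\}}(\A,\phi_2)$, i.e. the winning strategy is from $\sinit$, which is exactly the hypothesis.
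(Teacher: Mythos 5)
Your proposal is correct and follows essentially the same route as the paper's proof: note that a winning strategy for player~2 forces all admissible player-2 strategies to win $\phi_2$, deduce that the {\sf Win-under-Hyp} witness $\sigma_1$ wins against $\adm_2(\Game)$, and then replace $\sigma_1$ by an admissible strategy dominating it (citing \cite{berwanger07}), using monotonicity of dominance to preserve condition~2. The only cosmetic difference is that the paper takes the winning strategy $\sigma_2^w$ itself for player~2, observing directly that winning strategies are admissible, whereas you pass through a dominating admissible strategy — harmless, but an unnecessary detour.
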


We now consider the \emph{robustness} of the profiles synthesized using the above rules.
An \AA-winning strategy profile~$\sigma_\Agt$ is robust in the following sense:
The set of \AA-winning profiles is \emph{rectangular}, \textit{i.e.} \emph{any} combination of \AA-winning strategies independently chosen
for each player, is an \AA-winning profile. Second, 
if one replaces \emph{any} subset of strategies in \AA-winning profile $\sigma_\Agt$ by arbitrary admissible strategies, the objectives of all the other players still hold.
Formally, a \emph{rectangular set} of strategy profiles is a set that is a Cartesian product of sets of strategies, given for each player.
A synthesis rule is \emph{rectangular} if the set of strategy profiles satisfying the rule is rectangular.
The $\RS$ rules require a specific definition since \player{1} has a particular role:
we say that $\RS^{\forall,\exists}(\gamma)$ is rectangular if given for any strategy $\sigma_1$ witnessing the rule, the set of strategy profiles $(\sigma_{2},\dots,\sigma_n) \in \Sigma^{\gamma}_{\Game,\sigma_1}$ such that $\Game, \sigma_1,\dots,\sigma_n \models \phi_1$ is rectangular.
We show that apart from~$\AA$, only $\Win$ and $\RS^\forall(\dom)$ are rectangular among
the other rules:

\begin{restatable}{theorem}{thmrectangular}
  \label{thm:rectangular}
  We have 
  \begin{inparaenum}
  \item
    Rule \AA{} is rectangular; and
    for all games~$\Game$, 
    \AA-winning strategy profile~$\sigma_P$, coalition $C\subseteq \Agt$, if $\sigma'_C \in \adm_C(\Game)$,
    then
    $\Game,\sigma_{-C},\sigma'_C \models \bigwedge_{i \in -C} \phi_i$.
  \item 
    The rules $\Win$ and $\RS^\forall(\dom)$ are rectangular;
    the rules $\Coop$, $\AG^\lor$, $\AG^\land$, \linebreak $\RS^\exists(\NE,\SPE,\dom)$, and $\RS^\forall(\NE,\SPE)$ are not rectangular.
  \end{inparaenum}
\end{restatable}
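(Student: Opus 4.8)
The proof splits into the two numbered parts. For part (1), the key observation is that \AA-winning is a property of a \emph{single} player's strategy: player-$i$ strategy $\sigma_i$ is \AA-winning iff $\sigma_i \in \adm_i(\Game)$ \emph{and} $\forall \sigma'_{-i} \in \adm_{-i}(\Game).\ \Game,\sigma'_{-i},\sigma_i \models \phi_i$, and neither condition refers to the strategies played by the other players in the profile. Hence if $\sigma_\Agt$ and $\tau_\Agt$ are both \AA-winning profiles, any hybrid obtained by choosing, for each player $i$, either $\sigma_i$ or $\tau_i$, is again a profile of \AA-winning strategies, i.e. an \AA-winning profile. This is exactly rectangularity. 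The strengthening — replacing a subset $C$ of strategies by arbitrary admissible strategies $\sigma'_C$ preserves $\bigwedge_{i\in -C}\phi_i$ — follows directly from condition~2 applied to each $i \in -C$: since $\sigma_i$ is \AA-winning and the resulting profile restricted to $-i$ lies in $\adm_{-i}(\Game)$ (the components in $C$ are admissible by hypothesis, those in $-C\setminus\{i\}$ are admissible because $\sigma_{-C}$ comes from an \AA-winning profile hence from $\adm(\Game)$), we get $\Game,\sigma_{-i}[C\mapsto\sigma'_C],\sigma_i \models \phi_i$. One must be slightly careful that this argument is about profiles over the full set $P$ used in the statement, but it is the same bookkeeping.

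For part (2), $\Win$ is rectangular by the same single-player-property argument: ``$\Game,\sigma_i\models\phi_i$'' constrains only $\sigma_i$. For $\RS^\forall(\dom)$, fix a witnessing $\sigma_1$; the set $\Sigma^{\dom}_{\Game,\sigma_1}$ is itself a Cartesian product $\prod_{j\ge 2}\dom_j(\Game[\sigma_1])$ by definition (each $\sigma_j$ must be dominant independently), and the extra requirement $\Game,\sigma_1,\dots,\sigma_n\models\phi_1$ holds for \emph{every} element of this product (that is precisely what $\RS^\forall$ demands), so the qualifying set is all of $\Sigma^{\dom}_{\Game,\sigma_1}$, which is rectangular. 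For the negative results, the plan is to exhibit, for each rule, a small (two- or three-player) game together with two profiles satisfying the rule whose ``hybrid'' does not — the real work is designing these counterexamples. For $\Coop,\AG^\lor,\AG^\land$ and the $\RS^\exists$ variants, the natural idea is a game with two ``coordinated'' good outcomes (e.g. both players go ``left-left'' or both ``right-right'' satisfying everyone) where mixing gives ``left-right'', violating someone's objective or the equilibrium/dominance condition; the scheduler example of Section~\ref{section:example} already illustrates the phenomenon for the rules it satisfies. For $\RS^\forall(\NE)$ and $\RS^\forall(\SPE)$ one needs a game where two different $\sigma_1$-agnostic environment equilibria exist but their recombination is not an equilibrium — since $\Sigma^{\NE}_{\Game,\sigma_1}$ and $\Sigma^{\SPE}_{\Game,\sigma_1}$ are generally not rectangular, a coordination gadget among the environment players suffices.

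The main obstacle is not part (1) — that is essentially a definitional unfolding — but assembling a \emph{uniform and economical} family of counterexamples for part (2): ideally one or two arenas that simultaneously refute rectangularity for $\Coop$, both $\AG$ variants, all three $\RS^\exists$ variants, and $\RS^\forall(\NE)$, $\RS^\forall(\SPE)$, so the proof does not balloon into seven separate constructions. I expect to reuse a single ``matching-pennies-flavored'' coordination arena with two mirror-image winning runs, verify by inspection that each listed rule is satisfied by the two pure profiles, and then point to the single off-diagonal run that breaks $\bigwedge_i\phi_i$ (killing $\Coop$, $\AG^{\lor,\land}$, and the $\RS^\exists$ rules at once), handling the $\RS^\forall(\NE/\SPE)$ cases with a minor variant where the environment coalition's equilibrium structure is non-rectangular. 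Checking the $\SPE$ case carefully (subgame-perfection at all histories, not just the root) is the fiddliest part and is where I would spend the most care.
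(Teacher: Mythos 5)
Your proposal is correct and follows essentially the same route as the paper: part (1) is exactly the observation that both conditions defining an \AA-winning strategy are properties of a single player's strategy (so the witness set is $\prod_i \Sigma_i^{aa}$, and the coalition-replacement claim is condition~2 applied to each $i\in -C$), and part (2) uses the same single-player argument for $\Win$, the same ``every dominant profile qualifies, and dominance is componentwise'' argument for $\RS^\forall(\dom)$, and the same shape of counterexample the paper uses for all the negative cases — a two-step tree with two ``diagonal'' profiles $(a,c)$, $(b,d)$ satisfying the rule whose hybrid $(a,d)$ fails. The only remaining work is instantiating the leaf labels per rule (the paper needs a few distinct labelings of the same arena, e.g.\ always-winning environment players for $\RS^\exists(\dom)$ versus $\RS^\forall(\NE)$), which your plan correctly anticipates.
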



\section{Algorithm for Assume-Admissible Synthesis}
\label{section:algo-aa}
\begin{wrapfigure}{r}{0.42\textwidth}
	\vspace{0cm}
  \begin{center}
    \begin{tikzpicture}[xscale=2,every node/.style={draw,circle,minimum
      size=3mm,node distance=2cm}]
      \node at (0,0) (A) {$s_1$};
      \node[rectangle,minimum size=6mm,right of=A] (B) {$s_2$};
      \node[right of=B] (C) {$s_3$};
      \path[-latex']
        ($(A)+(-0.4,0)$) edge (A)
        (A) edge[bend left] (B)
        (B) edge[bend left] (A)
        (A) edge[loop above] (A)
        (C) edge[loop above] (C)
        (B) edge (C);
      \draw[-latex'] (A) .. controls +(0.5,0.8) and +(-0.5,0.8) .. (C);
    \end{tikzpicture}
  \end{center}
  \caption{Game~$G$ with two players~$\Agt=\{1,2\}$. Player~$1$ controls the
  round states, and has objective $\G\F s_2$, and player~$2$ controls the square state and has objective $\G\F s_1$.}
  \label{fig:running-example}
	\vspace{-0.2cm}
\end{wrapfigure}
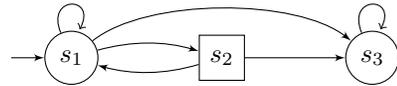
In this section, we recall the characterization of the outcomes of admissible strategy profiles given in~\cite{BRS14},
and we derive algorithms for the  \AA rule.
We use the game of Fig.~\ref{fig:running-example} as a running example for this section. Clearly, none of the players of this game has a winning strategy for his own objective when not taking into account the objective of the other player, but, as we will see, both players have an admissible and winning strategy against the {\em admissible} strategies of the other player, and so the \AA rule applies. 

The notion of \emph{value} associated to the states  of a game plays an important role in the characterization of  admissible strategies and their outcomes
\cite{berwanger07,BRS14}. Fix a game~$\Game$.
A state~$s$ has value~$1$ for player~$i$, written $\val_i(s)=1$, if player~$i$ has a winning strategy from
$s$; 
$\val_i(s) = -1$ if for all strategy profiles $\sigma_\Agt \in \Sigma_\Agt$, $\outcome_{\Game,s}(\sigma_\Agt)$ does not satisfy $\phi_i$;
and otherwise $\val_i(s) = 0$.
A \emph{\player{j} decreases its own value in history $h$} if there is a position $k$ such that $\val_j(h_{k}) > \val_j(h_{k+1})$ and $h_k \in \Stat_j$.
We proved in \cite{BRS14}, that admissible strategies do not decrease their own values.
Let us call such strategies \emph{value-preserving}.
In fact, if the current state has value~$1$, there is a winning strategy
which stays within the winning region;
if the value is~$0$, 
then although other players may force the play into states of value $-1$, a good strategy for player~$i$ will not do this by itself.

\begin{lemma}[{\cite[Lem.~1]{BRS14}}]
  \label{lemma:nodecrease}
  For all games~$\Game$, players~$i$, and histories~$\rho$, 
  if $\last(\rho) \in \Stat_i$ and $\sigma_i \in \Adm_i$ then $\val_i(\delta(\last(\rho),\sigma_i(\rho))) = \val_i(\last(\rho))$.
\end{lemma}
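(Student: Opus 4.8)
The plan is to argue by contradiction. Suppose $\sigma_i \in \Adm_i$ but there is a history $\rho$, which I may take compatible with $\sigma_i$ (its behaviour on histories it cannot produce is irrelevant to the dominance order), such that $s := \last(\rho) \in \Stat_i$ and, writing $s' := \delta(s,\sigma_i(\rho))$ and $v := \val_i(s)$, we have $\val_i(s') < v$. As values lie in $\{-1,0,1\}$ and $\val_i(s') \geq -1$, necessarily $v \in \{0,1\}$. I will build a strategy $\sigma'_i$ that dominates $\sigma_i$, contradicting admissibility.

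First an auxiliary fact: from any player-$i$ state $s$ with $\val_i(s) = v \geq 0$ there is an action $a^\star$ with $\val_i(\delta(s,a^\star)) = v$. If $v = 1$ this is immediate, since a winning strategy from $s$ stays winning after its first move. If $v = 0$, then not every successor of $s$ can have value $-1$: if it did, every run from $s$ would fail $\phi_i$ — using that Muller objectives are prefix-independent, so prepending $s$ and one action to a losing run keeps it losing — whence $\val_i(s) = -1$, a contradiction; and, again by prefix-independence, no successor of a player-$i$ state can have value strictly above $v$, so some successor has value exactly $v$.

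Now fix such an $a^\star$ at $s$, set $\hat s := \delta(s,a^\star)$, and let $\sigma'_i$ agree with $\sigma_i$ everywhere except that at $\rho$ it plays $a^\star$, and from $\rho\,a^\star$ on it follows a fixed good continuation from $\hat s$: a winning strategy from $\hat s$ if $v = 1$; and, if $v = 0$, the player-$i$ component of some strategy profile whose outcome from $\hat s$ satisfies $\phi_i$ (one exists since $\val_i(\hat s) = 0$). To see $\sigma'_i$ is no worse than $\sigma_i$, take $\sigma_{-i}$ with $\Game,\sigma_i,\sigma_{-i} \models \phi_i$: if the outcome of $(\sigma_i,\sigma_{-i})$ does not have $\rho$ as a prefix, then $\sigma_i$ and $\sigma'_i$ act identically along it; if it does and $v = 1$, then $(\sigma'_i,\sigma_{-i})$ also reaches $\rho$ (the strategies agree on all proper prefixes of $\rho$), moves via $a^\star$ into player~$i$'s winning region, and thereafter wins; and if it does and $v = 0$, this case cannot arise, since from $\rho$ the strategy $\sigma_i$ steps into the value-$(-1)$ state $s'$, which by prefix-independence precludes $\Game,\sigma_i,\sigma_{-i} \models \phi_i$. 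In every case $\Game,\sigma'_i,\sigma_{-i} \models \phi_i$. To see $\sigma'_i$ is strictly better against some $\sigma_{-i}$, pick $\sigma_{-i}$ that drives the play along $\rho$ (possible since $\rho$ is compatible with $\sigma_i$, hence with $\sigma'_i$) and then, if $v = 1$, plays from $s'$ a counter-strategy witnessing that player~$i$ has no winning strategy from $s'$, or, if $v = 0$, plays from $\hat s$ the opponents' component of the $\phi_i$-achieving profile above; against this $\sigma_{-i}$ we get $\Game,\sigma'_i,\sigma_{-i} \models \phi_i$ but $\Game,\sigma_i,\sigma_{-i} \not\models \phi_i$. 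Hence $\sigma'_i$ dominates $\sigma_i$, the desired contradiction.

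The bookkeeping — that $\sigma'_i$ is a well-defined strategy, and that agreement on proper prefixes of $\rho$ lets reachability of $\rho$ pass between $(\sigma_i,\sigma_{-i})$ and $(\sigma'_i,\sigma_{-i})$ — is routine. The point that deserves care is the split on $v$: for $v = 0$ the strategy $\sigma_i$ already loses once $\rho$ is reached, so the ``strictly better'' clause is immediate but the ``no worse'' clause must be argued through reachability of $\rho$; for $v = 1$ the offending move need not lose against every opponent, so there one leans instead on $\sigma'_i$ re-entering and staying in player~$i$'s winning region. Both the case analysis and the value-preserving-action fact use prefix-independence of the objectives — precisely the property Muller conditions provide.
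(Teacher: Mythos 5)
Your proof is correct. The paper itself gives no proof of this lemma---it is imported from \cite[Lem.~1]{BRS14}---but your argument, which replaces a value-decreasing move by a value-preserving action $a^\star$ followed by a winning continuation (if $v=1$) or a cooperatively winning one (if $v=0$) and then verifies both clauses of weak dominance, is precisely the standard proof of that result; the only implicit ingredient is determinacy of the induced two-player zero-sum game (needed to extract the spoiling strategy from $s'$ when $v=1$), which holds for the $\omega$-regular objectives considered here. Your restriction to histories compatible with $\sigma_i$ is not only legitimate but necessary: by Lemma~\ref{lemma:admissible-non-reachable} an admissible strategy can be altered arbitrarily at histories it cannot reach and remain admissible, so the statement is false as literally written for incompatible $\rho$, and it is only ever applied along outcomes of $\sigma_i$, where your version is exactly what is needed.
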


For player~$i$, let us define the sets $V_{i,x} = \{ s \mid \val_i(s) = x\}$ for~$x \in \{-1,0,1\}$, which partition~$\Stat$. We define the set of \emph{value-preserving edges} for player~$i$ as
\(E_i = \{ (s,a) \in \Stat \times \Act \mid s\in \S_i \Rightarrow \val_i(\delta(s,a)) = \val_i(s)\}.\)
Observe that value-preserving strategies for player~$i$ are exactly those respecting~$E_i$.
In our running example of Fig.~\ref{fig:running-example}, it should be clear that any strategy that chooses a transition that goes to $s_3$ is {\em not} admissible nor for Player~1 neither for Player 2, as by making this choice both players are condemned to lose their own objective while their other choices leave a chance to win; so the choice of going to $s_3$ would decrease their own value. So, we can already conclude that Player~2 always chooses $s_2 \mapsto s_1$, his only admissible strategy.

Not all value-preserving strategies are admissible:
for M\"uller objectives, staying inside the winning region does not imply the objective.
Moreover, in states of value~$0$, admissible strategies must visit states where other players can ``help'' satisfy the objective.
Formally, \emph{help states} for player~$i$ are other players' states with value~$0$ and at least two different successors of value $0$ or $1$.
Let
\(
H_i = \{ s \in \S \setminus \S_i  \mid \val_i(s) = 0 \land \exists s' \neq s''.\ s' \in \delta(s,\Act)
\land s'' \in \delta(s,\Act) \land \val_i(s')\ge 0 \land \val_i(s'')\ge 0\}.
\)
Given this,
the following lemma, adapted from~\cite{BRS14}, characterizes the outcomes of admissible strategies.
We denote by $\G (E_i)$ the set of runs that respect  $E_i$, \textit{i.e.}
 $\G (\bigvee_{(s,a)\in E_i} s \land \X(\delta(s,a)))$.

\begin{restatable}{lemma}{lmphii}
  \label{lemma:adm-outcomes}
  For all games~$\Game$, and players~$i$,
  $\Out_\Game \cap \Phi_i = \Out_\Game(\Adm_i,\Sigma_{-i})$,
  where
  $\Phi_i = \G(E_i) \land (\G\F(V_{i,1}) \Rightarrow \phi_i) \land (\G\F(V_{i,0}) \Rightarrow \phi_i \lor \G\F(H_i))$.
\end{restatable}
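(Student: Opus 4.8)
The plan is to prove the two inclusions of the equality $\Out_\Game \cap \Phi_i = \Out_\Game(\Adm_i,\Sigma_{-i})$ separately, relying on the characterization machinery already set up: Lemma~\ref{lemma:nodecrease} (admissible strategies are value-preserving), the definitions of $V_{i,x}$, $E_i$, $H_i$, and the quoted result from \cite{BRS14}.

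\textbf{Inclusion $\Out_\Game(\Adm_i,\Sigma_{-i}) \subseteq \Out_\Game \cap \Phi_i$.} Take a run $\rho$ compatible with some $\sigma_i \in \Adm_i$ and some $\sigma_{-i}$. Clearly $\rho \in \Out_\Game$. That $\rho \models \G(E_i)$ is exactly Lemma~\ref{lemma:nodecrease} applied at every position where $\last$ belongs to $\S_i$. For the implication $\G\F(V_{i,1}) \Rightarrow \phi_i$: if $\rho$ visits a value-$1$ state infinitely often, then by value-preservation it stays in $V_{i,1}$ from the first such visit on; I then argue that an admissible $\sigma_i$, when confined to the winning region, must actually play a winning strategy there — otherwise it would be dominated by a strategy that plays winningly inside $V_{i,1}$ while agreeing elsewhere. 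Hence $\rho \models \phi_i$. For the last conjunct $\G\F(V_{i,0}) \Rightarrow \phi_i \lor \G\F(H_i)$: suppose $\rho$ stays (eventually) in $V_{i,0}$ but visits $H_i$ only finitely often and $\rho \not\models \phi_i$. Then from some point on the play is in value-$0$ states of player~$i$ and never passes through a help state, so along the tail every non-player-$i$ state has a unique value-$\ge 0$ successor and player~$i$'s own moves are value-preserving; this pins down the set of runs consistent with the tail tightly enough that I can build a strategy $\sigma_i'$ that agrees with $\sigma_i$ up to the divergence point but wins from the tail (using that the state has value $0$, hence \emph{some} strategy profile realizes $\phi_i$, and the absence of help states forces cooperation to be unnecessary — the winning continuation is essentially forced), contradicting admissibility of $\sigma_i$. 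This is the step I expect to require the most care; it is essentially the content imported from \cite{BRS14} and I would cite their Theorem/Lemma for the precise argument rather than re-deriving it.

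\textbf{Inclusion $\Out_\Game \cap \Phi_i \subseteq \Out_\Game(\Adm_i,\Sigma_{-i})$.} Given a run $\rho \in \Out_\Game$ with $\rho \models \Phi_i$, I must exhibit an admissible $\sigma_i$ and some $\sigma_{-i}$ with $\rho = \outcome_\Game(\sigma_i,\sigma_{-i})$. The strategy $\sigma_{-i}$ for the other players is simply defined to follow $\rho$ (and play arbitrarily off $\rho$). For $\sigma_i$: along $\rho$, player~$i$ follows $\rho$; off $\rho$, i.e.\ after any deviation by the other players to a history $h$, player~$i$ switches to a strategy that is admissible \emph{in the subgame rooted at $\last(h)$} and value-preserving — such a strategy exists by \cite{berwanger07}. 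I then check that the resulting global $\sigma_i$ is admissible in $\Game$: any strategy $\sigma_i'$ that dominates it must do at least as well against all $\sigma_{-i}'$ and strictly better against some; a strictly-better $\sigma_{-i}'$ must deviate from $\rho$ (since along $\rho$, $\Phi_i$ already guarantees $\phi_i$ whenever it is achievable — the three conjuncts of $\Phi_i$ are precisely engineered so that the outcome $\rho$ is not a ``missed win''), and after such a deviation $\sigma_i$ plays an admissible subgame strategy, which cannot be strictly dominated in that subgame; propagating this yields a contradiction. The key observation making this work is that $\Phi_i$ being satisfied by $\rho$ certifies that $\rho$ is not dominated \emph{as an outcome}: if $\G\F(V_{i,1})$ holds then $\phi_i$ holds; if only $\G\F(V_{i,0})$ holds then either $\phi_i$ holds or the play keeps returning to help states, so no unilateral change by player~$i$ could have forced $\phi_i$.

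\textbf{Remark on the main obstacle.} The genuinely delicate direction is showing that value-preservation plus the help-state condition is \emph{sufficient} for a strategy to be (extendable to) an admissible one — equivalently, that the outcome $\rho$ satisfying $\Phi_i$ admits an admissible witness. This is where the structure of Muller objectives and the argument of \cite{BRS14} are essential, and I would present it by reducing to their Lemma rather than reproving it; the novel content here is mostly the bookkeeping that assembles on-$\rho$ behaviour with off-$\rho$ admissible subgame strategies into a single admissible strategy, and verifying that no dominating strategy can exist given that $\Phi_i$ already encodes ``$\rho$ is a best-effort outcome for player~$i$''.
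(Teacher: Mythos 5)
Your overall route differs from the paper's: the paper does not re-derive the characterization of outcomes of admissible strategies at all. It invokes \cite[Lemma~6]{BRS14}, which already gives an automaton $\mathcal{A}_i^1$ with $\mathcal{A}_i^1 \cap \Out_\Game = \Out_\Game(\Adm_i,\Sigma_{-i})$, and the entire content of the proof is a translation step: observing that runs satisfying $\G(E_i)$ have value sequences of one of the three forms $0^*(-1)^\omega$, $0^*1^\omega$, $0^\omega$, and matching the automaton's acceptance condition in each case to the conjuncts of $\Phi_i$. Your plan instead proves the two inclusions from first principles, deferring only the hardest sub-step (the $\G\F(V_{i,0})$ conjunct of the first inclusion) to \cite{BRS14}. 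That is a legitimate alternative decomposition in principle, but it forces you to carry out admissibility arguments that the paper never has to make explicit, and this is where the proposal breaks.

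The concrete gap is in your second inclusion. You construct $\sigma_i$ to follow $\rho$ and switch to admissible subgame strategies after an opponent deviation, and then argue that a dominating $\sigma_i'$ would need a ``strictly-better $\sigma_{-i}'$'' that \emph{deviates from $\rho$}. That claim is false: when $\rho \models \Phi_i$ but $\rho \not\models \phi_i$ (e.g.\ $\rho$ eventually enters $V_{i,-1}$, or stays in $V_{i,0}$ visiting $H_i$ infinitely often without satisfying $\phi_i$ --- cases your construction must cover and which Lemma~\ref{lem:admissible-path} does not), a candidate $\sigma_i'$ can satisfy the ``strictly better somewhere'' clause against a $\sigma_{-i}'$ whose outcome against $\sigma_i$ \emph{is} $\rho$, simply by having $\sigma_i'$ itself leave $\rho$ at a \player{i} state and exploit $\sigma_{-i}'$'s off-$\rho$ behaviour. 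Your argument never rules this out; what actually blocks domination is the ``never worse'' clause, and establishing that requires the value-based admissibility criterion of \cite[Lem.~9]{berwanger07} (as in the paper's proof of Lemma~\ref{lem:admissible-path}): for every value-$0$ prefix $h$ of $\rho$ one must exhibit \emph{some} opponent profile against which the constructed $\sigma_i$ wins, which for a losing $\rho$ means locating an opponent deviation point after $h$ into a value-$\geq 0$ state --- at a help state in $H_i$ in the $0^\omega$ case, or at the state where the value drops to $-1$ in the $0^*(-1)^\omega$ case --- and this verification is entirely absent from your sketch. Combined with the fact that your first inclusion's hardest conjunct is in any case a citation to \cite{BRS14}, the honest repair is to follow the paper and reduce the whole statement to the automaton of \cite[Lemma~6]{BRS14}, keeping only the three-forms observation and the case analysis as the new content.
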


In our running example of Fig.~\ref{fig:running-example}, a strategy of Player~1 which, after some point, always chooses $s_1 \mapsto s_1$ is dominated by strategies that chose infinitely often $s_1 \mapsto s_2$. This is a corollary of the lemma above. Indeed, while all those strategies only visit states with value $0$ (and so do not decrease the value for Player~1), the strategy that always chooses $s_1 \mapsto s_1$ has an outcome which is loosing for Player~1 while the other strategies are compatible with outcomes that are winning for Player~1. So, outcome of admissible strategies for Player~1 that always visit states with values $0$, also visits $s_2$ infinitely often. Using the fact that strategies are value-preserving and the last observation, we can now conclude that both players have (admissible) winning strategies against the admissible strategies of the other players. For instance when Player~1 always chooses to play $s_1 \mapsto s_2$, he wins against the admissible strategies of Player~2.

Note that $\Phi_i$ can be decomposed into a safety condition $S_i = \G (E_i)$ and a \emph{prefix independent} condition $M_i = (\G\F(V_{i,1}) \Rightarrow \phi_i) \land (\G\F(V_{i,0}) \Rightarrow (\phi_i \lor \G\F(H_i))$ which can be expressed by a M\"uller condition described by a circuit of polynomial size.

For player~$i$, we let
\(\Omega_i = \Out_\Game(\Adm_i) \land (\Out_\Game(\Adm_{-i}) \Rightarrow \phi_i),\)
which describes the outcomes of admissible strategies of player~$i$, 
which satisfy objective~$\phi_i$ under the hypothesis that they are compatible with other players' admissible strategies.
In fact, it follows from~\cite{BRS14} that $\Omega_i$ captures the \emph{outcomes} of \AA-winning strategies for player~$i$.
\begin{restatable}{lemma}{lmomegai}
	\label{lemma:aa-omega}
  A \player{i} strategy is \AA-winning iff it is winning for objective $\Omega_i$.
\end{restatable}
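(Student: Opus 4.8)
The plan is to prove both directions by unfolding the definitions of \AA-winning and of $\Omega_i$, using Lemma~\ref{lemma:adm-outcomes} to translate between admissible strategies and their outcome sets. Recall that $\Omega_i = \Out_\Game(\Adm_i) \land (\Out_\Game(\Adm_{-i}) \Rightarrow \phi_i)$, viewed as a set of runs, and that a player-$i$ strategy $\sigma_i$ is \emph{winning for objective $\Omega_i$} means every outcome of $\sigma_i$ (against all strategies of $-i$) lies in $\Omega_i$. The two conditions defining \AA-winning are: (1) $\sigma_i \in \adm_i(\Game)$, and (2) for all $\sigma'_{-i} \in \adm_{-i}(\Game)$, $\Game,\sigma'_{-i},\sigma_i \models \phi_i$.

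First I would prove the forward direction. Assume $\sigma_i$ is \AA-winning; take any $\sigma_{-i} \in \Sigma_{-i}$ and let $\rho = \outcome_\Game(\sigma_i,\sigma_{-i})$; I must show $\rho \in \Omega_i$, i.e. $\rho \in \Out_\Game(\Adm_i)$ and ($\rho \in \Out_\Game(\Adm_{-i}) \Rightarrow \rho \in \phi_i$). The first conjunct is immediate: $\rho$ is an outcome of the admissible strategy $\sigma_i$ (by condition~1), so $\rho \in \Out_\Game(\Adm_i)$. For the second conjunct, suppose $\rho \in \Out_\Game(\Adm_{-i})$; then $\rho$ is compatible with some admissible profile $\sigma'_{-i} \in \adm_{-i}(\Game)$. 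Here I need the standard fact that an outcome compatible with $\sigma_i$ and with $\sigma'_{-i}$ is exactly $\outcome_\Game(\sigma_i,\sigma'_{-i})$ — since the arena is turn-based and deterministic, a run compatible with strategies for all players is the unique outcome of that profile. Hence $\rho = \outcome_\Game(\sigma_i,\sigma'_{-i})$, and condition~2 gives $\rho \in \phi_i$. So $\rho \in \Omega_i$, proving $\sigma_i$ is winning for $\Omega_i$.

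For the converse, assume $\sigma_i$ is winning for $\Omega_i$. Every outcome of $\sigma_i$ lies in $\Out_\Game(\Adm_i)$, and by Lemma~\ref{lemma:adm-outcomes}, $\Out_\Game(\Adm_i,\Sigma_{-i}) = \Out_\Game \cap \Phi_i$, so every outcome of $\sigma_i$ satisfies $\Phi_i$; this should let me conclude $\sigma_i \in \adm_i(\Game)$ — the point being that $\Phi_i$ characterizes precisely the runs surviving the elimination of dominated player-$i$ strategies, and a player-$i$ strategy all of whose outcomes satisfy $\Phi_i$ is admissible (this is the characterization from~\cite{BRS14}; I would cite it or derive it from Lemma~\ref{lemma:adm-outcomes} together with the fact that admissible strategies exist from every reachable state). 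This gives condition~1. For condition~2, take $\sigma'_{-i} \in \adm_{-i}(\Game)$ and let $\rho = \outcome_\Game(\sigma_i,\sigma'_{-i})$. Since $\sigma'_{-i}$ is an admissible profile, $\rho \in \Out_\Game(\Adm_{-i})$; since $\rho$ is an outcome of $\sigma_i$, the winning assumption gives $\rho \in \Omega_i$, whose second conjunct now forces $\rho \in \phi_i$, i.e. $\Game,\sigma'_{-i},\sigma_i \models \phi_i$. Thus $\sigma_i$ is \AA-winning.

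The main obstacle is the step in the converse direction that infers membership $\sigma_i \in \adm_i(\Game)$ from the fact that all outcomes of $\sigma_i$ satisfy $\Phi_i$. Lemma~\ref{lemma:adm-outcomes} as stated only equates outcome \emph{sets}: $\Out_\Game(\Adm_i,\Sigma_{-i})$ is the set of runs that are an outcome of \emph{some} admissible player-$i$ strategy, which a priori is weaker than saying that any strategy whose outcomes all satisfy $\Phi_i$ is itself admissible. I would close this gap by appealing to the finer statement from~\cite{BRS14} that $\Phi_i$ is a sufficient (not just necessary) condition on a strategy's outcomes for admissibility — or, failing a direct citation, by a short argument: if $\sigma_i \notin \adm_i$ it is dominated by some $\sigma''_i$, and one exhibits an outcome of $\sigma_i$ violating $\Phi_i$ by routing through a value-decreasing edge or avoiding the relevant help states, contradicting the hypothesis. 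Everything else is bookkeeping with the definitions.
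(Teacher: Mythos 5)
Your forward direction, and the treatment of condition~2 in the converse, are correct and are the routine unfolding one would expect; note that the paper does none of this explicitly and simply invokes \cite[Prop.~5]{BRS14}, which already states the full biconditional (being a strategy that survives one elimination step \emph{and} wins against all such strategies of the others, iff winning for $\Omega_i$). The genuine problem is the step you flag yourself: inferring $\sigma_i \in \adm_i(\Game)$ from the fact that every outcome of $\sigma_i$ lies in $\Out_\Game(\Adm_i)$, i.e.\ satisfies $\Phi_i$. That implication is not merely unproved --- it is false, so neither of your proposed repairs can work: there is no ``finer statement'' of this form to cite, and a dominated strategy need not have any outcome violating $\Phi_i$. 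Concretely, let \player{1} have objective $\G\F(W)$ with $W$ absorbing, let \player{1} control $p,p'$ and \player{2} control $q_1,q_2$, with transitions $p \xrightarrow{a} q_1$, $p \xrightarrow{b} q_2$, $q_1 \rightarrow W$ or $q_1 \rightarrow p$, $q_2 \rightarrow p$ or $q_2 \rightarrow p'$, and $p' \rightarrow p$. All of $p,p',q_1,q_2$ have value $0$ for \player{1}, every edge from a \player{1} state preserves that value (so $\G(E_1)$ holds on every run), and $q_2 \in H_1$ since it has two distinct successors $p \neq p'$ of value $0$. The strategy $\sigma_1$ that always plays $b$ at $p$ never reaches $W$, yet each of its outcomes stays in $V_{1,0}$ and visits $q_2 \in H_1$ infinitely often, hence satisfies $\Phi_1$; nevertheless $\sigma_1$ is dominated by ``always play $a$'', which wins against the adversary moving $q_1 \rightarrow W$ while $\sigma_1$ wins against nothing. (This does not contradict Lemma~\ref{lemma:adm-outcomes}: each individual outcome of $\sigma_1$ is the outcome of \emph{some} admissible strategy, namely one that would have played $a$ on the counterfactual histories through $p'$; that is exactly why equality of outcome sets does not transfer to the strategy itself.)

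Consequently, in the converse direction admissibility of $\sigma_i$ cannot be extracted from the first conjunct of $\Omega_i$ alone; it must be derived from the \emph{whole} hypothesis that $\sigma_i$ wins for $\Omega_i$ --- in particular from the fact that $\sigma_i$ achieves $\phi_i$ against everything in $\Out_\Game(\Adm_{-i})$ --- and this is precisely the nontrivial content of \cite[Prop.~5]{BRS14} that the paper delegates to the citation. As written, your argument establishes the forward implication and half of the converse, but leaves the admissibility half of the converse resting on a false lemma.
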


Objective~$\Omega_i$ is not directly expressible as a M\"uller condition, since $\Phi_i$ and $\bigwedge_j \Phi_j$ contain safety parts. 
Nevertheless, the information whether $\G(E_i)$, or $\G(\cup_{j \neq i} E_j)$ has been violated 
can be encoded in the state space. Formally, for each player~$i$, we define game~$\Game_i'$ by taking the product of~$\Game$ with $\{\top,0,\bot\}$; that is, the states are 
$\Stat\times\{\top,0,\bot\}$, and the initial state $(\sinit,0)$. The transitions are defined
as for~$\Game$ for the first component; while from state $(s,0)$, any action~$a$ outside~$E_i$ leads to 
$(\delta(s,a),\bot)$, and any action~$a$ outside~$E_j$, $j \neq i$, leads to $(\delta(s,a),\top)$.
The second component 
is absorbing at $\bot,\top$. We now rewrite the condition~$\Omega_i$ for~$\Game_i'$ as
$\Omega'_i = \left(\G\F (\Stat\times\{0\} ) \land M'_i \land  (\wedge_{j\ne i}
M'_{j}  \Rightarrow \phi'_i) \right) \lor \left(\G\F (\Stat\times\{\top\}) \land
M'_i\right)$,
where~$M_i'$ is the set of runs of~$\Game_i'$ whose projections to~$\Game$
are in~$M_i$, and similarly for~$\phi_i'$.

Now, checking \AA-synthesis is reduced to 
solving games with M\"uller conditions. Moreover, we also obtain a polynomial-time
algorithm when all objectives are B\"uchi conditions, by showing that~$\Omega_i'$ 
is expressible by a parity condition with four colors.

\begin{restatable}{theorem}{thmaaalgo}
	\label{thm:aaalgo}
  \AA-synthesis in multiplayer games is \PSPACE-complete, and \P-complete for B\"uchi objectives.
  Player~$i$ wins for objective~$\Omega_i$ in~$\Game$ iff he wins for objective~$\Omega_i'$ in~$\Game_i'$. 
\end{restatable}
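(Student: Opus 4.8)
The plan is to split the statement into three parts: (a) the equivalence ``player~$i$ wins for $\Omega_i$ in $\Game$ iff player~$i$ wins for $\Omega_i'$ in $\Game_i'$''; (b) the upper bounds (\PSPACE\ for Muller, \P\ for B\"uchi); and (c) the matching hardness bounds. Part~(a) is the conceptual core and I would do it first. By Lemma~\ref{lemma:aa-omega} an \AA-winning strategy is exactly a winning strategy for $\Omega_i = \Out_\Game(\Adm_i) \land (\Out_\Game(\Adm_{-i}) \Rightarrow \phi_i)$, and by Lemma~\ref{lemma:adm-outcomes} we can unfold $\Out_\Game(\Adm_i) = \Out_\Game \cap \Phi_i$ with $\Phi_i = \G(E_i) \land M_i$, and $\Out_\Game(\Adm_{-i}) = \bigcap_{j\ne i} (\Out_\Game \cap \Phi_j)$. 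So $\Omega_i$ is equivalent (over runs of $\Game$) to $\G(E_i) \land M_i \land \big(\bigwedge_{j\ne i}(\G(E_j)\land M_j) \Rightarrow \phi_i\big)$. The point of passing to $\Game_i'$ is that the three safety flags $\G(E_i)$ and $\bigwedge_{j\ne i}\G(E_j)$ partition the runs into three absorbing regimes recorded in the $\{\top,0,\bot\}$-component: the run either stays in $\Stat\times\{0\}$ (all safety constraints respected, so player~$i$ must ensure $M_i$ and, if additionally all $M_j$ hold, also $\phi_i$); or it first leaves $E_j$ for some $j\ne i$ and lands in $\top$ (then the antecedent $\bigwedge_{j\ne i}\G(E_j)$ is false forever, so player~$i$ only needs $M_i$, i.e.\ still needs $\G(E_i)\land M_i$, which on the $\top$-branch is just $M_i$ since $E_i$ was respected up to that point and the branch is absorbing — here one must be careful about the order in which $E_i$ versus $E_j$ is first violated, and the definition of $\Game_i'$ handles this by only branching to $\top$ from the $0$-state via an $E_j$-but-not-$E_i$-violating action); or it leaves $E_i$ and lands in $\bot$ (then $\Out_\Game(\Adm_i)$ fails, so $\Omega_i$ fails, and indeed $\Omega_i'$ excludes the $\bot$-branch entirely). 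I would verify that the transition discipline of $\Game_i'$ — from $(s,0)$, an action outside $E_i$ goes to $\bot$, an action outside $E_j$ ($j\ne i$) but inside $E_i$ goes to $\top$, and $\top,\bot$ are absorbing — makes $\Omega_i'$ hold on a run of $\Game_i'$ exactly when its $\Game$-projection is in $\Omega_i$; then a strategy translation in both directions (strategies for $\Game$ and $\Game_i'$ are in bijection since the extra component is a deterministic function of the history) gives the equivalence of winning.

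For the upper bounds I would argue that $M_i'$, $\phi_i'$ and the ``$\G\F(\Stat\times\{0\})$'' / ``$\G\F(\Stat\times\{\top\})$'' tests are all prefix-independent $\omega$-regular conditions described by polynomial-size circuits (the $M_i$ part is a Boolean combination of $\G\F$ conditions by the remark after Lemma~\ref{lemma:adm-outcomes}, and the two flag-tests are single $\G\F$'s on absorbing subsets), hence $\Omega_i'$ is a Muller condition on $\Game_i'$ given by a polynomial-size circuit. Solving such circuit-Muller games is in \PSPACE\ (cf.~\cite{hunter07}), and since $\Game_i'$ is only polynomially larger than $\Game$, \AA-synthesis is in \PSPACE. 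For B\"uchi objectives $\phi_i = \G\F(B_i)$, the key extra step is to show $\Omega_i'$ can be rewritten as a parity condition with four priorities: on the $0$-branch and on the $\top$-branch the relevant $M_j$ become simple B\"uchi/co-B\"uchi combinations (since when $\phi_j$ is B\"uchi, $V_{j,0},V_{j,1},H_j$ are fixed vertex sets and $M_j = (\G\F V_{j,1}\Rightarrow \G\F B_j)\land(\G\F V_{j,0}\Rightarrow(\G\F B_j \lor \G\F H_j))$ is a Rabin-1/Streett-1-style condition), and I would explicitly collapse the resulting Boolean combination into a chain of four colours; then parity games with a fixed number of colours are solvable in \P, giving the \P\ upper bound. \P-hardness is inherited because reachability/B\"uchi games, already \P-hard, embed as single-player instances of \AA.

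The \PSPACE-hardness is the part I expect to take the most work. The natural route is to reduce from Muller games, whose solution is \PSPACE-complete for circuit-presented conditions~\cite{hunter07}: given a one-player-vs-environment Muller game we want to encode ``player~1 has a winning strategy'' as ``player~1 has an \AA-winning strategy'' in a derived multiplayer game. The obstacle is that \AA-winning is not the same as plain winning — it must be winning only against \emph{admissible} strategies of the others, and it must itself be admissible — so the reduction has to arrange the value structure (the $V_{j,x}$ and $H_j$ sets) so that the admissibility and value-preservation constraints of Lemma~\ref{lemma:nodecrease} and Lemma~\ref{lemma:adm-outcomes} are vacuous or forced, e.g.\ by giving the adversary no value-$0$ freedom and no help-states, so that its admissible strategies are \emph{all} its strategies, collapsing $\Omega_1$ back to the original objective. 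I would design the gadget so that every state has value $\pm1$ or, where value $0$ is unavoidable, the corresponding $\Phi_j$ component is trivially satisfied; verifying that this forces $\Adm_{-1} = \Sigma_{-1}$ (or at least that winning against $\Adm_{-1}$ coincides with winning against $\Sigma_{-1}$) is the delicate bookkeeping step. Combined with parts (a) and (b) and the \P-hardness above, this yields the theorem.
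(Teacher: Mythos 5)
Your proposal is correct and follows essentially the same route as the paper: the equivalence in part~(a) is exactly the paper's Lemma on $\Game_i'$ (unfolding $\Omega_i$ via Lemma~\ref{lemma:adm-outcomes}, recording the safety flags $\G(E_i)$, $\G(E_j)$ in the absorbing $\{\top,0,\bot\}$ component, and transporting strategies through the run bijection), and the B\"uchi bound is likewise obtained by collapsing $\Omega_i'$ to a four-priority parity condition and solving the product parity game in \P. The only place you overestimate the work is \PSPACE-hardness: the paper does not need any value-structure gadget, it simply embeds a two-player zero-sum Muller game by giving the second player the objective $\texttt{true}$, so that every one of its strategies is winning and hence admissible ($\Adm_2 = \Sigma_2$), which immediately collapses \AA-winning for player~$1$ to ordinary winning --- exactly the condition $\Adm_{-1}=\Sigma_{-1}$ you identify, achieved for free.
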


\section{Abstraction}
\label{section:abstraction}
We present abstraction techniques to compute assume-admissible strategy profiles
following the \emph{abstract interpretation} framework~\cite{cc77}; see~\cite{hmmr00} for games.
Abstraction is a crucial feature for scalability in practice, and we show here that
the $\AA$ rule is amenable to abstraction techniques.
The problem is not directly reducible to computing \AA-winning strategies in abstract games obtained as \textit{e.g.} in \cite{agj04}; in fact, it can be easily seen that the set of admissible strategies of an abstract game is incomparable with those of the concrete game in general.

\smallskip
\noindent\textbf{Overview.}
Informally, to compute an \AA-winning strategy for player~$i$, we construct an abstract game 
$\AbsA_i'$ with objective~$\underline{\Omega}_i'$
s.t. winning strategies of player~$i$ in~$\AbsA_i'$ map to \AA-winning strategies in~$\Game$.
To define~$\AbsA'$, we re-visit the steps of the algorithm of
Section~\ref{section:algo-aa} by defining
approximations computed on the abstract state space.
More precisely, 
we show how to compute under- and over-approximations of the sets ${V}_{x,k}$,
   namely $\underline{V}_{x,k}$ and $\overline{V}_{x,k}$, using fixpoint
computations on the abstract state space only. We then use these sets to define
approximations of the value preserving edges ($\underline{E}_k$
and~$\overline{E}_k$) and those of the help states ($\underline{H}_k$
and~$\overline{H}_k$). These are then combined to define objective
$\underline{\Omega'}_k$ s.t. if player~$k$ wins the abstract game 
for $\underline{\Omega'}_k$, then he wins the original game for $\Omega_k'$, and
thus has an \AA-winning strategy.

\smallskip
\noindent\textbf{Abstract Games.}
Consider $\Game = \langle \A, (\phi_i)_{i\in\Agt}\rangle$  with $ \A = \left\langle \Agt, (\Stat_\agt)_{\agt\in\Agt}, \sinit, (\Act_\agt)_{\agt\in\Agt}, \delta \right\rangle$
where each $\phi_\agt$ is a M\"uller objective 
given by a family of sets of states $(\mathcal{F}_\agt)_{\agt \in \Agt}$.
Let~$\Stat^\abs = \biguplus_{i \in \Agt}\Stat^\abs_i$ denote a finite set, 
namely the \emph{abstract state space}.
A \emph{concretization function} $\gamma \colon \Stat^\abs\mapsto 2^\Stat$ is a function such that:
\begin{inparaenum}
  \item the abstract states partitions the state space: $\biguplus_{s^\abs \in \Stat^\abs} \gamma(s^\abs) = \Stat$,
  \item it is compatible with players' states:
    for all players~$i$
    and $s^\abs \in \Stat_i^\abs$, $\gamma(s^\abs) \subseteq \Stat_i$.
  \end{inparaenum}%
We define the corresponding \emph{abstraction} function ${\alpha} : \Stat \rightarrow \Stat^\abs$ where
$\alpha(s)$ is the unique state $s^\abs$ s.t. $s \in \gamma(s^\abs)$.
We also extend  $\alpha,\gamma$ naturally to sets of states; 
and to histories, by replacing each element of the sequence by its image.

We further assume that $\gamma$ is \emph{compatible} with all objectives~$\mathcal{F}_\agt$ in the sense 
that the abstraction of a set~$S$ is sufficient to determine whether $S \in \mathcal{F}_\agt$:
for all $i \in \Agt$, for all $S,S' \subseteq \Stat$ with $\alpha(S) = \alpha(S')$, we have
     $S \in \mathcal{F}_i \Leftrightarrow S' \in \mathcal{F}_i$.
If the objective $\phi_i$ is given by a circuit, then the circuit for the corresponding abstract objective~$\phi_i^\abs$ is obtained by replacing each input on state~$s$ by $\alpha(s)$.
We thus have $\rho \in \phi_i$ if, and only if, $\alpha(\rho) \in \phi^\abs_i$.

The \emph{abstract transition relation $\Delta^{\abs}$} induced by~$\gamma$ is defined by:
\((s^\abs,a,t^\abs) \in \Delta^{\abs} \Leftrightarrow  \exists s \in \gamma(s^\abs), \exists t \in \gamma(t^\abs), t = \delta(s,a).\)
We write $\post_\Delta(s^\abs,a) = \{ t^\abs \in \Stat^\abs \mid \Delta(s^\abs,a,t^\abs)\}$,
and $\post_\Delta(s^\abs,\Act) =\cup_{a \in \Act} \post_\Delta(s^\abs,a)$.
For each coalition~$C\subseteq \Agt$, we define a game in which players $C$ play together against coalition $-C$;
and the former resolves  non-determinism in $\Delta^\abs$.
Intuitively, the winning region for $C$ in this abstract game will be an
over-approximation of the original winning region.
Given~$C$, the \emph{abstract arena $\AbsA^{C}$} is
$\langle \{ C , -C \}, (\Stat_C,\Stat_{-C}), \alpha(\sinit), \linebreak (\Act_C, \Act_{-C}), \delta^{\abs,C} \rangle$,
where
$\Stat_C = \left(\bigcup_{i\in C} \Stat^\abs_i\right) \cup \left(\bigcup_{i\in \Agt} \Stat^\abs_i \times \Act_i \right)$, 
  $\Stat_{-C} = \bigcup_{i\not\in C} \Stat^\abs_i$;
and
$\Act_C = \left(\bigcup_{i\in C} \Act_i\right) \cup \Stat^\abs$ 
  and $\Act_{-C} = \bigcup_{i\in -C} \Act_i$.
The relation $\delta^{\abs,C}$ is given by: 
if $s^\abs\in \Stat^\abs$, then $\delta^{\abs,C}(s^\abs,a) = (s^\abs,a)$.
If $(s^\abs,a) \in \Stat^\abs \times \Act$ and $t^\abs \in \Stat^\abs$ satisfies 
$(s^\abs,a,t^\abs)\in \Delta^\abs$ 
then $\delta^{\abs,C}((s^\abs,a), t^\abs) = t^\abs$;
while for $(s^\abs,a,t^\abs)\not\in \Delta^\abs$, the play leads to an arbitrarily chosen 
state~$u^\abs$ with $\Delta(s^\abs,a,u^\abs)$.
Thus, from states $(s^\abs,a)$, coalition $C$ chooses a successor $t^\abs$.

We extend $\gamma$ to histories of~$\AbsA^{C}$ by first removing states of $(\Stat^\abs_i\times \Act_i)$;
and extend $\alpha$ by inserting these intermediate states.
Given a strategy~$\sigma$ of player~$k$ in $\AbsA^{C}$, we define its \emph{concretization} as the strategy $\gamma(\sigma)$ of~$\Game$ that, at any history~$h$ of~$\Game$, plays $\gamma(\sigma)(h) = \sigma(\alpha(h))$.
We write $\Win_D(\AbsA^{C},\phi^{\abs}_k)$ for the states of $\Stat^\abs$ from which the coalition $D$ has a winning strategy in $\AbsA^{C}$ for objective $\phi^\abs_k$, with $D \in \{C,-C\}$.
Informally, it is easier for coalition~$C$ to achieve an objective in~$\AbsA^{C}$ than in~$\Game$,
that is, $\Win_C(\AbsA^C,\phi^\abs_k)$ over-approximates $\Win_C(\A,\phi_k)$:
 
\begin{restatable}{lemma}{lmwinabs}
  \label{lem:winning-abstract}
  If the coalition $C$ has a winning strategy for objective $\phi_k$ in $\Game$ from $s$ then it has a winning strategy for $\phi_k^\abs$ in $\AbsA^{C}$ from $\alpha(s)$.
\end{restatable}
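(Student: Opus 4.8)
The plan is to prove Lemma~\ref{lem:winning-abstract} by lifting a concrete winning strategy $\sigma_C$ for coalition $C$ in $\Game$ to a winning strategy $\sigma_C^\abs$ in $\AbsA^C$. The intuition is that the abstract game $\AbsA^C$ is obtained by quotienting $\Game$ under $\alpha$ and then handing to coalition $C$ the additional power of resolving the nondeterminism introduced in $\Delta^\abs$; both of these changes only help $C$, so any concrete winning strategy should have an abstract counterpart.

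First I would spell out the correspondence between plays. A play in $\AbsA^C$ alternates between abstract states $s^\abs \in \Stat^\abs$, intermediate states $(s^\abs,a)$, and successor abstract states $t^\abs$ with $(s^\abs,a,t^\abs)\in\Delta^\abs$; stripping the intermediate states (as in the paper's extension of $\gamma$ to histories) yields a sequence $s_1^\abs a_1 s_2^\abs a_2 \cdots$ such that for each $k$, $(s_k^\abs,a_k,s_{k+1}^\abs)\in\Delta^\abs$. By definition of $\Delta^\abs$, for each step there exists a concrete witness: states $u_k \in \gamma(s_k^\abs)$ and $v_k\in\gamma(s_{k+1}^\abs)$ with $v_k = \delta(u_k,a_k)$. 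The subtlety is that these witnesses need not chain up ($v_k$ need not equal $u_{k+1}$), so I cannot directly read off a single concrete play. The standard fix is to define $\sigma_C^\abs$ so that coalition $C$ internally maintains a concrete history that \emph{is} consistent, using its power to choose successors in $\AbsA^C$: whenever $C$ must resolve nondeterminism from an intermediate state $(s^\abs,a)$, it picks $t^\abs = \alpha(\delta(u,a))$ where $u = \last$ of the internally-simulated concrete history, and then extends that concrete history by $a\cdot\delta(u,a)$. When it is the turn of a player $i\in C$ at abstract state $s^\abs$, it plays $\sigma_C$ applied to the internal concrete history (whose last state lies in $\gamma(s^\abs)\subseteq\Stat_i$ by compatibility of $\gamma$ with players' states). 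When it is $-C$'s turn and $-C$ plays some action $a$, $C$ reads it off and appends $a\cdot\delta(u,a)$ to the internal history; the adversary's choice of the abstract successor, if any, is irrelevant because from $(s^\abs,a)$ with $(s^\abs,a,t^\abs)\in\Delta^\abs$ it is $C$ who chooses $t^\abs$. Wait — I must double-check: in $\AbsA^C$, from $(s^\abs,a)$ it is always coalition $C$ (as owner of all states $\Stat^\abs_i\times\Act_i$) who picks the successor, so $C$ fully controls the abstract successor sequence given the action sequence. Good; this means the internally-simulated concrete play and the abstract play project onto each other exactly: $\alpha(\text{concrete play}) = \text{abstract play}$.

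The key steps in order: (1) formalize $\sigma_C^\abs$ as above, carrying a concrete ``shadow'' history as memory; (2) prove by induction on the length of any play $\pi$ of $\AbsA^C$ consistent with $\sigma_C^\abs$ that the shadow history $h(\pi)$ is a genuine history of $\Game$ consistent with $\sigma_C$, starting at some $s_0\in\gamma(\alpha(\sinit))$ — actually starting at $s$ itself if we start the shadow at $s$, using $\alpha(s)$ as the initial abstract state — with $\alpha(h(\pi)) = \pi$ (after stripping intermediates); (3) take limits: an infinite play $\pi$ consistent with $\sigma_C^\abs$ gives an infinite run $\rho$ of $\Game$ consistent with $\sigma_C$, with $\alpha(\rho)=\pi$; (4) since $\sigma_C$ is winning from $s$, $\rho \in \phi_k$; (5) invoke the paper's observation that $\rho\in\phi_i \iff \alpha(\rho)\in\phi_i^\abs$ (which rests on $\gamma$ being compatible with the objectives $\mathcal{F}_k$, i.e. $\alpha$-equivalent sets agree on membership in $\mathcal{F}_k$, and $\inff{\alpha(\rho)} = \alpha(\inff{\rho})$), to conclude $\pi\in\phi_k^\abs$. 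Hence $\sigma_C^\abs$ is winning for $\phi_k^\abs$ from $\alpha(s)$.

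The main obstacle is step (1)–(2): getting the bookkeeping of the shadow history exactly right, in particular verifying that at every $C$-owned abstract state the shadow history indeed ends in a state of the correct player's state set so that $\sigma_C$ is applicable, and that when the adversary $-C$ plays an action $a$ at a $-C$-owned abstract state $s^\abs$, the transition $\delta(u,a)$ is well-defined (it is, since $\delta$ is total) and its abstraction matches the abstract successor forced by $\delta^{\abs,C}$ — which here is the intermediate state $(s^\abs,a)$, after which $C$ regains control and can steer to $\alpha(\delta(u,a))$. One has to be slightly careful that $a\in\Act_{-C}$ is a legal action of the player owning $s^\abs$ in $\Game$ as well, which follows from $\gamma$ being compatible with players' states. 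Modulo this careful-but-routine simulation argument, the lemma follows; the deeper content is entirely in the fact that $\AbsA^C$ only ever gives coalition $C$ \emph{more} freedom, never less.
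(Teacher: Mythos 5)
Your proposal is correct and follows essentially the same route as the paper: an inductive simulation in which coalition $C$ maintains a concrete shadow history that is an outcome of $\sigma_C$ and concretizes the abstract history, uses its control over the intermediate states $(s^\abs,a)$ to steer the abstract play to $\alpha(\delta(\last(h),a))$, and concludes in the limit via the compatibility of $\gamma$ with the Muller objectives. The paper's proof is exactly this argument, only stated more tersely.
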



\smallskip
\noindent\textbf{Value-Preserving Strategies.}
\label{section:approximate-admissible}
We now provide under- and over-approximations for value-preserving
strategies for a given player.
We start by computing approximations $\underline{V}_{k,x}$
and~$\overline{V}_{k,x}$ of the sets
$V_{k,x}$, and then use these
to obtain approximations of the value-preserving edges $E_k$.
Fix a game $\Game$, and a player $k$. 
Let us define the \emph{controllable predecessors} for player~$k$ as
 $\cpre_{\AbsA^{\Agt\setminus\{k\}},k}(X)
= \{ s^\abs \in \Stat^\abs_k \mid \exists a \in \Act_k, \post_{\Delta}(s^\abs,a) \subseteq X\}
\cup \{ s^\abs \in \Stat^\abs_{\Agt\setminus \{k\}} \mid \forall a \in \Act_{-k}, \post_{\Delta}(s^\abs,a)
\subseteq X\}$.
We let
\[
\begin{array}{l r}
\overline{V}_{k,1} = \Win_{\{k\}}(\AbsA^{\{k\}}, \phi^{\abs}_k),
&
\overline{V}_{k,-1} = \Win_\emptyset(\AbsA^{\emptyset}, \lnot \phi^{\abs}_k),\\ 
\overline{V}_{k,0} = \Win_{\Agt\setminus \{k\}}(\AbsA^{\Agt\setminus\{k\}}, \lnot \phi^{\abs,k})
     \cap \Win_{\Agt}(\AbsA^{\Agt},\phi^{\abs}_k), & \\
\underline{V}_{k,1} = \Win_{\{k\}}(\AbsA^{\Agt\setminus\{k\}}, \phi^{\abs}_k), 
&
\underline{V}_{k,-1} =  \Win_\emptyset(\AbsA^{\Agt}, \lnot\phi^{\abs}_k) \\ 
\underline{V}_{k,0} = \nu X. \big(\cpre_{\AbsA^{\Agt\setminus\{k\}},k}(X \cup \underline{V}_{k,1} \cup \underline{V}_{k,-1}) \cap F\big),\\
\qquad \text{where }
F = \Win_{\Agt\setminus \{k\}}(\AbsA^{\{k\}}, \lnot \phi^{\abs}_k) 
     \cap \Win_{\Agt}(\AbsA^{\emptyset},\phi^{\abs}_k).
\end{array}
\]
The last definition uses the $\nu X.f(X)$ operator which is the greatest fixpoint of~$f$.
These sets define approximations of the sets $V_{k,x}$. Informally, this follows from the
fact that to define \textit{e.g.} $\overline{V}_{k,1}$, we use the game~$\AbsA^{\{k\}}$, 
where player~$k$ resolves itself the non-determinism, and thus has more power than in~$\Game$.
In contrast, for~$\underline{V}_{k,1}$, we solve~$\AbsA^{\Agt\setminus\{k\}}$ where the adversary 
resolves non-determinism. We state these properties formally:

\begin{restatable}{lemma}{lmapprv}
  \label{lemma:approximate-v}
  For all players~$k$ and $x \in  \{-1,0,1\}$, 
  $\gamma(\underline{V}_{k,x}) \subseteq V_{k,x} \subseteq \gamma(\overline{V}_{k,x})$.
\end{restatable}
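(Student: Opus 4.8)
The plan is to establish the two inclusions $\gamma(\underline{V}_{k,x}) \subseteq V_{k,x}$ and $V_{k,x} \subseteq \gamma(\overline{V}_{k,x})$ separately for each value $x \in \{-1,0,1\}$, using Lemma~\ref{lem:winning-abstract} as the main bridge between abstract and concrete winning regions. The key observation driving everything is a monotonicity-in-coalition-power principle: in the abstract arena $\AbsA^{C}$, the coalition $C$ resolves the nondeterminism of $\Delta^\abs$, so enlarging $C$ can only help $C$ and hurt $-C$. Concretely, $\Win_C(\AbsA^{C},\phi^\abs_k)$ over-approximates $\Win_C(\A,\phi_k)$ (this is exactly Lemma~\ref{lem:winning-abstract}), whereas when the \emph{opponent} resolves nondeterminism — as in $\AbsA^{\Agt\setminus\{k\}}$ from player~$k$'s perspective — the abstract winning region under-approximates the concrete one. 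I would isolate this as the first technical step, proving: (i) if player~$k$ wins $\phi_k^\abs$ in $\AbsA^{\Agt\setminus\{k\}}$ from $\alpha(s)$, then player~$k$ wins $\phi_k$ in $\Game$ from $s$; and dually (ii) if player~$k$ wins $\phi_k^\abs$ in $\AbsA^{\{k\}}$ from $\alpha(s)$ is the best case, and losing there (i.e.\ $-k$ wins $\lnot\phi_k^\abs$) transfers down. The compatibility of $\gamma$ with the Muller families $\mathcal{F}_i$ ensures $\rho \in \phi_i \Leftrightarrow \alpha(\rho) \in \phi_i^\abs$, so outcomes of concretized strategies satisfy the concrete objective exactly when their abstractions satisfy the abstract one.

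With this in hand, the cases for $x = 1$ and $x = -1$ are short. For $\underline{V}_{k,1} = \Win_{\{k\}}(\AbsA^{\Agt\setminus\{k\}},\phi^\abs_k)$: a winning strategy here concretizes (via $\gamma(\sigma)(h) = \sigma(\alpha(h))$) to a player-$k$ strategy in $\Game$ all of whose outcomes abstract into winning runs of the abstract game, hence are in $\phi_k$; so $s \in \Win_{\{k\}}(\A,\phi_k)$, i.e.\ $\val_k(s) = 1$. For $\overline{V}_{k,1} = \Win_{\{k\}}(\AbsA^{\{k\}},\phi^\abs_k)$: by Lemma~\ref{lem:winning-abstract} applied to $C = \{k\}$, $\val_k(s)=1$ implies $\alpha(s) \in \overline{V}_{k,1}$. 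The $x=-1$ cases are the mirror image with $C = \emptyset$ and objective $\lnot\phi_k^\abs$: $\val_k(s) = -1$ means the grand coalition $\Agt$ cannot satisfy $\phi_k$ from $s$, equivalently the empty coalition "wins" $\lnot\phi_k$ against everyone; Lemma~\ref{lem:winning-abstract} with $C = \emptyset$ gives $\alpha(s) \in \overline{V}_{k,-1}$, and the under-approximation direction uses that winning $\lnot\phi_k^\abs$ in $\AbsA^{\Agt}$ (opponent-resolves-nondeterminism from the viewpoint of coalition $\emptyset$) transfers to $\Game$.

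The genuinely delicate case — and the one I expect to be the main obstacle — is $x = 0$, because $\underline{V}_{k,0}$ is defined by a \emph{greatest fixpoint} $\nu X.\big(\cpre_{\AbsA^{\Agt\setminus\{k\}},k}(X \cup \underline{V}_{k,1} \cup \underline{V}_{k,-1}) \cap F\big)$ rather than a one-shot winning-region computation, and one must argue that every abstract state in this fixpoint concretizes into a value-$0$ concrete state. The upper bound $V_{k,0} \subseteq \gamma(\overline{V}_{k,0})$ is the easier half: $\val_k(s) = 0$ means the grand coalition can achieve $\phi_k$ from $s$ (so $s \in \Win_\Agt(\A,\phi_k)$, transferring up by Lemma~\ref{lem:winning-abstract} with $C=\Agt$) and the coalition $\Agt\setminus\{k\}$ can force $\lnot\phi_k$ (transferring up with $C = \Agt\setminus\{k\}$), matching the two conjuncts of $\overline{V}_{k,0}$. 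For the lower bound $\gamma(\underline{V}_{k,0}) \subseteq V_{k,0}$, the plan is: given $s^\abs \in \underline{V}_{k,0}$ and $s \in \gamma(s^\abs)$, the $F$-conjunct already places $s$ in $\Win_\Agt(\A,\phi_k)$ (using the $\AbsA^{\emptyset}$ game, where the adversary resolves nondeterminism, which under-approximates) and in $\Win_{\Agt\setminus\{k\}}(\A,\lnot\phi_k)$ similarly, so $\val_k(s) \le 0$ \emph{and} $\val_k(s) \ge 0$ provided we can rule out value~$1$ — but $\Win_{\Agt\setminus\{k\}}$ forcing $\lnot\phi_k$ is precisely what prevents $\val_k(s)=1$. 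The fixpoint structure with $\cpre$ is then needed only to make sure the inductive/coinductive invariant "every reachable abstract state is again in $X \cup \underline{V}_{k,1} \cup \underline{V}_{k,-1}$" holds so that the point really sits in one of the three value classes; I would discharge this by a standard coinduction: the $\nu$-fixpoint is the largest $X$ contained in $F$ and closed under $\cpre_{\AbsA^{\Agt\setminus\{k\}},k}(\,\cdot\, \cup \underline{V}_{k,1} \cup \underline{V}_{k,-1})$, which is exactly the invariant needed, and then invoke the $x=\pm1$ cases already proved together with the sandwich $\val_k(s) \le 0$, $\val_k(s) \ge 0$ to conclude $\val_k(s) = 0$. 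Care is needed with the intermediate states $\Stat^\abs_i\times\Act_i$ introduced in $\AbsA^{C}$ and with the fact that several different abstract games ($\AbsA^{\{k\}}$, $\AbsA^{\emptyset}$, $\AbsA^{\Agt\setminus\{k\}}$, $\AbsA^{\Agt}$) appear in the definition of the value-0 sets; I would handle this by noting that all of them share the same underlying $\Delta^\abs$ and differ only in who controls the nondeterminism, so a single uniform translation of strategies between $\Game$ and $\AbsA^{C}$ — the one recorded just before Lemma~\ref{lem:winning-abstract} — applies in each instance.
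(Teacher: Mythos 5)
Your proposal is correct and follows essentially the same route as the paper: Lemma~\ref{lem:winning-abstract} applied to the appropriate coalition gives the over-approximations directly, its contrapositive combined with determinacy (equivalently, your direct strategy-concretization argument) gives the under-approximations, and for $x=0$ the two conjuncts sandwich the value at $0$. The only difference is cosmetic: the paper dispenses with your coinduction on the $\nu$-fixpoint by simply observing $\underline{V}_{k,0}\subseteq F$ and proving $\gamma(F)\subseteq V_{k,0}$, which is all this lemma needs --- the fixpoint structure is only exploited later, for the closure properties of $\underline{E}_k$.
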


We thus have $\cup_x \gamma(\overline{V}_{k,x}) = \S$
(as $\cup_x V_{k,x} = \S$)
but this is not the case for $\underline{V}_{k,x}$;
so let us define $\underline{V} = \cup_{j \in \{-1,0,1\}} \underline{V}_{k,j}$.
We now define approximations of~$E_k$ based on the above sets.
\[
  \begin{array}{l}
    \overline{E}_k = \{ (s^\abs,a) \in \Stat^\abs \times \Act \mid
    s^\abs \in \Stat^\abs_k \Rightarrow
    \exists x,
    s^\abs \in \overline{V}_{k,x}, \post_{\Delta}(s^\abs,a) \cap \cup_{l \geq x} \overline{V}_{k,l}
    \neq \emptyset \},\\
    \underline{E}_k = \{ (s^\abs,a) \in \Stat^\abs \times \Act \mid
    s^\abs \in \Stat^\abs_k \Rightarrow
    \exists x,
    s^\abs \in \underline{V}_{k,x}, \post_{\Delta}(s^\abs,a) \subseteq  \cup_{l \geq x} \underline{V}_{k,l}\}
     \hfill\cup \{ (s^\abs,a) \mid s^\abs \not \in \underline{V}\}.
 \end{array}
\]

Intuitively, $\overline{E}_k$ is an over-approximation of~$E_k$,
and $\underline{E}_k$ under-approximates $E_k$ when restricted to states
in $\underline{V}$ (notice that $\underline{E}_k$ contains all actions from states outside~$\underline{V}$). 
In fact, our under-approximation will be valid only inside $\underline{V}$; but
we will require the initial state to be in this set, 
and make sure the play stays within~$\underline{V}$.
We show that sets $\underline{E}_k$ and~$\overline{E}_k$ provide approximations of
value-preserving strategies.
\begin{restatable}{lemma}{lmadmstrat}
  \label{lemma:under-admissible-strat}
  For all games~$\Game$, and players~$k$, $\strat_k(E_k) \subseteq \gamma(\strat_k(\overline{E}_k))$,
  and if $\sinit \in \gamma(\underline{V})$,
  then $\emptyset \neq \gamma(\strat_k(\underline{E}_k)) \subseteq \strat_k(E_k)$.
\end{restatable}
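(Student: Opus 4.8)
The plan is to reduce both strategy-level inclusions to inclusions between the concrete value-preserving edge set $E_k$ and the $\alpha$-preimages of the abstract sets $\overline{E}_k$ and $\underline{E}_k$, and then invoke monotonicity of $\strat_k(\cdot)$. Throughout I read $\gamma(\strat_k(\overline{E}_k))$ (resp.\ $\gamma(\strat_k(\underline{E}_k))$) as the set of concrete player-$k$ strategies whose compatible outcomes use an action $a$ from a state $s$ only when $(\alpha(s),a)\in\overline{E}_k$ (resp.\ $\in\underline{E}_k$); so the statement becomes $\strat_k(E_k)\subseteq \strat_k(\gamma(\overline{E}_k))$ and $\emptyset\neq\strat_k(\gamma(\underline{E}_k))\subseteq\strat_k(E_k)$, where $\gamma$ on an abstract edge set is $\{(s,a)\mid (\alpha(s),a)\in\cdot\}$. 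Two ingredients are used repeatedly: Lemma~\ref{lemma:approximate-v}, and the elementary identity $\val_k(s)=\max_a \val_k(\delta(s,a))$ for $s\in\Stat_k$ (player~$k$ can realise the best successor value and cannot beat it).

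\emph{Over-approximation.} First I would show $E_k\subseteq\gamma(\overline{E}_k)$. Take $(s,a)\in E_k$ with $s\in\Stat_k$ and set $x=\val_k(s)$. By Lemma~\ref{lemma:approximate-v}, $\alpha(s)\in\overline{V}_{k,x}$; since $(s,a)\in E_k$ we have $\val_k(\delta(s,a))=x$, so $\delta(s,a)\in V_{k,x}$ and, again by Lemma~\ref{lemma:approximate-v}, $\alpha(\delta(s,a))\in\overline{V}_{k,x}\subseteq\bigcup_{l\geq x}\overline{V}_{k,l}$. Since $\alpha(\delta(s,a))\in\post_\Delta(\alpha(s),a)$ by definition of $\Delta^\abs$, the witness $x$ certifies $(\alpha(s),a)\in\overline{E}_k$. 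Monotonicity of $\strat_k(\cdot)$ then yields $\strat_k(E_k)\subseteq\strat_k(\gamma(\overline{E}_k))=\gamma(\strat_k(\overline{E}_k))$.

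\emph{Under-approximation.} For the reverse inclusion I would first argue that $\gamma(\underline{V})$ is a trap along any outcome respecting $\gamma(\underline{E}_k)$: from a non-player-$k$ state in $\underline{V}$ every $\Delta^\abs$-successor stays in $\underline{V}$ (for $\underline{V}_{k,\pm1}$ this is closure of the winning regions $\Win_{\{k\}}(\AbsA^{\Agt\setminus\{k\}},\phi^\abs_k)$ and $\Win_\emptyset(\AbsA^\Agt,\lnot\phi^\abs_k)$ under the opponent's and under non-deterministic moves; for $\underline{V}_{k,0}$ it is the $\cpre$-clause of its greatest-fixpoint definition), while from a player-$k$ state in $\underline{V}_{k,x}$ every $\underline{E}_k$-action lands in $\bigcup_{l\geq x}\underline{V}_{k,l}\subseteq\underline{V}$. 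Hence, starting from $\sinit\in\gamma(\underline{V})$, the whole outcome stays in $\gamma(\underline{V})$. Along it, whenever $a$ is used from a player-$k$ state $s$ with $\alpha(s)\in\underline{V}_{k,x}$, Lemma~\ref{lemma:approximate-v} gives $\val_k(s)=x$ and $\alpha(\delta(s,a))\in\bigcup_{l\geq x}\underline{V}_{k,l}$ gives $\val_k(\delta(s,a))\geq x$; together with $\val_k(\delta(s,a))\leq\val_k(s)=x$ this forces $\val_k(\delta(s,a))=\val_k(s)$, i.e.\ $(s,a)\in E_k$. Thus $\gamma(\strat_k(\underline{E}_k))\subseteq\strat_k(E_k)$. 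Non-emptiness is then obtained by exhibiting one witness: from each player-$k$ state in $\underline{V}$ the fixpoint / winning-region definitions provide an $\underline{E}_k$-action keeping the play in $\underline{V}$, so a positional selection of such actions yields a concrete strategy in $\strat_k(\gamma(\underline{E}_k))$.

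\emph{Main obstacle.} I expect the delicate step to be this last non-emptiness claim, specifically verifying that a player-$k$ state $s^\abs\in\underline{V}_{k,0}$ admits an action in $\underline{E}_k$, i.e.\ one whose $\post_\Delta$ lies in $\underline{V}_{k,0}\cup\underline{V}_{k,1}$ and not merely in $\underline{V}_{k,0}\cup\underline{V}_{k,1}\cup\underline{V}_{k,-1}$ as the $\cpre$-clause literally allows: one must rule out the ``value-$(-1)$ leak'', presumably using the second conjunct of $F$ ($\Win_\Agt(\AbsA^\emptyset,\phi^\abs_k)$, cooperative reachability of $\phi_k$) together with the concrete identity $\val_k(s)=\max_a\val_k(\delta(s,a))$ pushed through Lemma~\ref{lemma:approximate-v}, plus the dual closure facts needed to make $\gamma(\underline{V})$ a genuine opponent-trap. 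This is the only part that requires unwinding the precise fixpoint definitions rather than a one-line monotonicity argument.
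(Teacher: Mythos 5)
Your two inclusions are proved essentially the way the paper proves them: you establish $E_k\subseteq\{(s,a)\mid(\alpha(s),a)\in\overline{E}_k\}$ pointwise via Lemma~\ref{lemma:approximate-v}, you show that $\underline{V}$ is a trap (closure of the regions $\underline{V}_{k,\pm1}$ under opponent moves and non-determinism, the $\cpre$-clause of the fixpoint for $\underline{V}_{k,0}$, and the definition of $\underline{E}_k$ at player-$k$ states), and you then transfer $\underline{E}_k$-edges inside $\underline{V}$ to $E_k$ using $\val_k(s)=\max_a\val_k(\delta(s,a))$ at player-$k$ states (legitimate here since the objectives are M\"uller, hence prefix-independent). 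This is exactly the content that the paper factors out as Lemma~\ref{lem:Ek} (items 1 and 3) plus the reachability argument in the proof of Lemma~\ref{lemma:under-admissible-strat}; your reading of $\gamma(\strat_k(\cdot))$ as ``concrete strategies whose outcomes respect the $\alpha$-preimage of the abstract edge set'' also matches how the paper actually uses the notation. So far, no objection.

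The genuine gap is the non-emptiness claim, which is part of the statement and which you do not prove: you only sketch that ``the fixpoint / winning-region definitions provide an $\underline{E}_k$-action'' at every player-$k$ state of $\underline{V}$, and you yourself flag the problematic case as an open obstacle. Concretely, for $s^\abs\in\underline{V}_{k,0}\cap\Stat^\abs_k$ the fixpoint only yields an action $a$ with $\post_\Delta(s^\abs,a)\subseteq\underline{V}_{k,0}\cup\underline{V}_{k,1}\cup\underline{V}_{k,-1}$, whereas membership $(s^\abs,a)\in\underline{E}_k$ (with the only possible witness $x=0$, the sets $\underline{V}_{k,x}$ being pairwise disjoint by Lemma~\ref{lemma:approximate-v}) requires $\post_\Delta(s^\abs,a)\subseteq\underline{V}_{k,0}\cup\underline{V}_{k,1}$; ruling out successors in $\underline{V}_{k,-1}$ is precisely what is needed, and ``presumably using the second conjunct of $F$'' is not an argument. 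Without this step you have no strategy in $\strat_k(\underline{E}_k)$ at all when the play reaches such a state (in particular when $\sinit$ abstracts to one), so the claim $\emptyset\neq\gamma(\strat_k(\underline{E}_k))$ is unsupported. For comparison, the paper handles non-emptiness through item 2 of Lemma~\ref{lem:Ek} (``for all $s^\abs\in\Stat^\abs_k$ there is $a$ with $(s^\abs,a)\in\underline{E}_k$''): the cases $\underline{V}_{k,\pm1}$ and $s^\abs\notin\underline{V}$ are immediate, and the $\underline{V}_{k,0}$ case is dispatched in one line by reading the greatest fixpoint as supplying an action whose abstract successors all remain in $\underline{V}_{k,0}$ -- i.e.\ the paper's own argument silently passes over exactly the ``value-$(-1)$ leak'' you identified. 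So your diagnosis of where the difficulty lies is accurate, but your proposal, as written, leaves that part of the lemma unproved rather than closing it.
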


\smallskip
\noindent\textbf{Abstract Synthesis of \AA-winning strategies.}
We  now  describe the computation of \AA-winning strategies 
in abstract games.
Consider game $\Game$ and assume sets $\underline{E}_i, \overline{E}_i$ are computed for all players~$i$. 
Roughly, to compute a strategy for player~$k$, we will constrain him to play
only edges from~$\underline{E}_k$, while other players~$j$ will play
in~$\overline{E}_j$. By Lemma~\ref{lemma:under-admissible-strat}, 
any strategy of player~$k$ maps to value-preserving strategies in the original
game, and all value-preserving strategies for other players are still present.
We now formalize this idea, incorporating the help states in the
abstraction.

We fix a player~$k$. We construct an abstract game in which winning for
player~$k$ implies that player~$k$ has an effective \AA-winning strategy in~$\Game$.
We also define $\calA_{k}' = \langle \{\{k\},-k\}, ({\Stat'}^\abs_k,
{\Stat'}^\abs_{-k} \cup {\Stat'}^\abs \times \Act), \alpha(\sinit), (\Act_k,
\Act_{-k}), \delta_{\calA^k}\rangle$,
where ${\Stat'}^\abs = \Stat^\abs \times \{\bot,0,\top\}$;
thus we modify $\AbsA^{\Agt\setminus\{k\}}$ by taking the product of the state space with $\{\top,0,\bot\}$.
Intuitively, as in Section~\ref{section:algo-aa}, initially the second component is $0$, meaning that no player has violated 
the value-preserving edges. The component becomes~$\bot$ whenever player~$k$ plays an action outside of~$\underline{E}_k$;
and~$\top$ if another player~$j$ plays outside~$\overline{E}_j$.
We extend~$\gamma$ to~$\AbsA_k'$ by $\gamma( (s^\abs,x)) = \gamma(s^\abs) \times \{x\}$,
and extend it to histories of~$\AbsA_k'$ by first removing the intermediate states
${\Stat'}^\abs \times \Act$.
We thus see $\calA_k'$ as an abstraction of~$\A'$ of Section~\ref{section:algo-aa}.

In order to define the objective of $\calA_k'$, let us first define approximations of the help states $H_k$,
where we write $\Delta(s^\abs,\Act,t^\abs)$ to mean
$\exists a \in \Act, \Delta(s^\abs,a,t^\abs)$.
\[\begin{array}{ll}
  \overline{H}_k & = \{ s^\abs \in \overline{V}_{k,0} \setminus \Stat^\abs_k \mid \exists t^\abs, u^\abs \in \overline{V}_{k,0} \cup \overline{V}_{k,1}.\ \Delta(s^\abs,\Act,t^\abs) \land \Delta(s^\abs,\Act,u^\abs)\}\\%
  \underline{H}_k & = \{
  s^\abs \in \underline{V}_{k,0} \setminus \Stat^\abs_k \mid \exists a \neq b \in \Act, \post_\Delta(s^\abs,a) \cap \post_\Delta(s^\abs,b) = \emptyset, \\
  &\hspace{3cm}
  \post_\Delta(s^\abs,a) \cup \post_\Delta(s^\abs,b)\subseteq \underline{V}_{k,0} \cup \underline{V}_{k,1} 
  \}.
\end{array}
\]
We define the following approximations of the objectives $M_k'$ and $\Omega_k'$ in $\calA'_k$. 
\[
\begin{array}{l}
  \underline{M}'_k = ( \G\F(\overline{V}_{k,1}) \Rightarrow \phi_k^\abs)
  \land \left( \G\F(\overline{V}_{k,0}) \Rightarrow (\phi^\abs_k \lor \G\F(\underline{H}_k)) \right),\\
  \overline{M}'_k = (\G\F(\underline{V}_{k,1}) \Rightarrow \phi_k^\abs)
  \land \left( \G\F(\underline{V}_{k,0}) \Rightarrow (\phi_k^\abs \lor \G\F(\overline{H}_k)) \right),\\
  \underline{\Omega'}_k = \left(\G\F(\Stat^\abs\times\{0\}) \land \underline{M'}_k \land \left( \bigwedge_{j \neq k} \overline{M'}_j \Rightarrow {\phi}^\abs_k \right)\right)
  \lor \left(\G\F(\Stat^\abs\times\{\top\}) \land \underline{M'}_k\right).\\
\end{array}
\]

\begin{restatable}{theorem}{thmabstract}
  \label{thm:abstract}
  For all games~$\Game$, and players~$k$,
  if $\sinit \in \underline{V}$, and player~$k$
  has a winning strategy in~$\AbsA'_k$ for objective~$\underline{\Omega}_k'$, 
  then he has a winning strategy in~$\Game_k'$ for~$\Omega_k$;
  and thus a $\AA$-winning strategy in~$\Game$.
\end{restatable}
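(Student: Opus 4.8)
The plan is to transfer a winning strategy from the abstract game $\AbsA'_k$ to the concrete game $\Game'_k$ via the concretization map $\gamma$, and then invoke Lemma~\ref{lemma:aa-omega} (together with Theorem~\ref{thm:aaalgo}) to conclude that the resulting strategy is $\AA$-winning in $\Game$. So let $\sigma$ be a winning strategy for player~$k$ in $\AbsA'_k$ for $\underline{\Omega}'_k$, and let $\gamma(\sigma)$ be its concretization in $\Game'_k$ (obtained by playing $\sigma(\alpha(h))$ at each concrete history $h$, ignoring intermediate $\Stat'^\abs\times\Act$ states). I would first record the book-keeping facts that make $\gamma(\sigma)$ well-defined and value-preserving: since $\sinit\in\underline{V}$ and $\sigma$ is winning, by construction of $\AbsA'_k$ the play stays within $\underline{V}$ as long as the second component is~$0$, and by Lemma~\ref{lemma:under-admissible-strat}, $\gamma(\sigma)\in\gamma(\strat_k(\underline{E}_k))\subseteq\strat_k(E_k)$, so $\gamma(\sigma)$ respects the value-preserving edges $E_k$ of $\Game$. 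This already gives that $\gamma(\sigma)$ is an admissible-candidate strategy in the sense of Lemma~\ref{lemma:nodecrease}.

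The core of the argument is then a run-by-run comparison. Fix any run $\rho'$ of $\Game'_k$ compatible with $\gamma(\sigma)$; its abstraction $\alpha(\rho')$ is a run of $\AbsA'_k$ compatible with $\sigma$, hence satisfies $\underline{\Omega}'_k$. I would now push this satisfaction down to $\Game'_k$ using three ingredients: (i) $\gamma$ is compatible with all the Muller objectives, so $\rho'\in\phi_i \iff \alpha(\rho')\in\phi^\abs_i$, and the second-component tracking in $\Game'_k$ and $\AbsA'_k$ agree (both switch to $\bot$ on player-$k$ moves leaving the relevant edge set, to $\top$ on other players' moves leaving theirs); (ii) for the $M'$-parts we compare the approximations: because $\underline{V}_{k,x}\subseteq\gamma(\overline{V}_{k,x})$ and $\underline{H}_k\subseteq\gamma(H_k)$ (resp. the reverse inclusions), the implications defining $\underline{M}'_k$ dominate those of $M'_k$ in the right direction, i.e. $\alpha(\rho')\in\underline{M}'_k \Rightarrow \rho'\in M'_k$, and symmetrically $\rho'\in M'_j \Rightarrow \alpha(\rho')\in\overline{M}'_j$ for each $j\neq k$, so the hypothesis $\bigwedge_{j\neq k}\overline{M}'_j\Rightarrow\phi^\abs_k$ transfers to $\bigwedge_{j\neq k}M'_j\Rightarrow\phi_k$; (iii) the $\G\F(\Stat^\abs\times\{0\})$ / $\G\F(\Stat^\abs\times\{\top\})$ disjunction in $\underline{\Omega}'_k$ corresponds exactly to the $\Stat\times\{0\}$ / $\Stat\times\{\top\}$ disjunction in $\Omega'_k$ since the second component is computed identically. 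Chaining these implications shows $\rho'\models\Omega'_k$; since $\rho'$ was arbitrary, $\gamma(\sigma)$ is winning for $\Omega'_k$ in $\Game'_k$, hence for $\Omega_k$ in $\Game$ by Theorem~\ref{thm:aaalgo}, hence $\AA$-winning in $\Game$ by Lemma~\ref{lemma:aa-omega}.

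The main obstacle, I expect, is step~(ii): verifying that the under/over-approximations of $V_{k,x}$, $E_k$, and $H_k$ (Lemmas~\ref{lemma:approximate-v} and~\ref{lemma:under-admissible-strat}, plus the analogous inclusions for $\underline{H}_k,\overline{H}_k$ which need a separate small argument from the definitions) are wired into $\underline{M}'_k$, $\overline{M}'_j$, and $\underline{\Omega}'_k$ with exactly the right monotonicity, so that the implication chain has no gap. In particular one must be careful that $\overline{V}_{k,1}$ appears on the \emph{antecedent} of the $\G\F(\cdot)\Rightarrow\phi_k$ implication in $\underline{M}'_k$ while $\underline{V}_{k,1}$ appears there in $\overline{M}'_j$, and check that $\alpha(\rho')$ visiting $\Stat^\abs\times\{0\}$ infinitely often forces, via staying in $\underline{V}$ and $\underline{E}_k\subseteq_{\underline V} E_k$, the corresponding concrete behaviour — and conversely that a concrete play respecting $E_k$ but not $\underline{E}_k$ cannot occur under $\gamma(\sigma)$. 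A secondary, purely notational obstacle is handling the intermediate states $\Stat'^\abs\times\Act$ inserted in $\AbsA'_k$ when transporting runs and the second-component automaton; this is routine but must be stated explicitly.
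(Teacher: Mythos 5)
Your proposal is correct and follows essentially the same route as the paper's proof (Lemma~\ref{lemma:winning-under-omega} in the appendix): concretize the abstract winning strategy, map each compatible concrete run to an abstract run satisfying $\underline{\Omega}'_k$, use the monotonicity $\gamma(\underline{M}'_k)\subseteq M'_k\subseteq\gamma(\overline{M}'_k)$ together with Lemma~\ref{lemma:under-admissible-strat} to rule out the $\bot$ component, and case-split on $\G\F(\Stat\times\{0\})$ versus $\G\F(\Stat\times\{\top\})$. The only imprecision is your claim in (iii) that the second-component tracking in $\Game'_k$ and $\AbsA'_k$ ``agrees'' --- it does not, since they use $E_k,E_j$ versus $\underline{E}_k,\overline{E}_j$, and the argument actually needs the two one-directional inclusions you correctly flag in your final paragraph.
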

Now, if Theorem~\ref{thm:abstract} succeeds to find an \AA-winning strategy for
each player~$k$, then the resulting strategy profile is \AA-winning.

\section{Conclusion}

In this paper, we have introduced a novel synthesis rule, called the {\em assume admissible synthesis}, for the synthesis of strategies in non-zero sum $n$ players games played on graphs with omega-regular objectives.  We use the notion of admissible strategy, a classical concept from game theory, to take into account the objectives of the other players when looking for winning strategy of one player. We have compared our approach with other approaches such as assume guarantee synthesis and rational synthesis that target the similar scientific objectives.
We have developed worst-case optimal algorithms to handle our synthesis rule as well as dedicated abstraction techniques.
As future works, we plan to develop a tool prototype to support our assume admissible synthesis rule.

\bibliographystyle{plain}
\bibliography{biblio}

\begin{thebibliography}{10}

\bibitem{adam2008admissibility}
Brandenburger Adam, Friedenberg Amanda, H~Jerome, et~al.
\newblock Admissibility in games.
\newblock {\em Econometrica}, 2008.

\bibitem{berwanger07}
Dietmar Berwanger.
\newblock Admissibility in infinite games.
\newblock In {\em Proc. of {STACS}'07}, volume 4393 of {\em LNCS}, pages
  188--199. Springer, February 2007.

\bibitem{BloemEJK14}
Roderick Bloem, R{\"{u}}diger Ehlers, Swen Jacobs, and Robert K{\"{o}}nighofer.
\newblock How to handle assumptions in synthesis.
\newblock In {\em {SYNT}'14}, volume 157 of {\em {EPTCS}}, pages 34--50, 2014.

\bibitem{BRS14}
Romain Brenguier, Jean{-}Fran{\c{c}}ois Raskin, and Mathieu Sassolas.
\newblock The complexity of admissibility in omega-regular games.
\newblock In {\em {CSL-LICS} '14, 2014}. ACM, 2014.

\bibitem{CDFR14}
Krishnendu Chatterjee, Laurent Doyen, Emmanuel Filiot, and Jean-Fran{\c{c}}ois
  Raskin.
\newblock Doomsday equilibria for omega-regular games.
\newblock In {\em {VMCAI}'14}, volume 8318, pages 78--97. Springer, 2014.

\bibitem{CH07}
Krishnendu Chatterjee and Thomas~A Henzinger.
\newblock Assume-guarantee synthesis.
\newblock In {\em {TACAS}'07}, volume 4424 of {\em LNCS}. Springer, 2007.

\bibitem{CHJ08}
Krishnendu Chatterjee, Thomas~A. Henzinger, and Barbara Jobstmann.
\newblock Environment assumptions for synthesis.
\newblock In {\em CONCUR 2008}, volume 5201 of {\em LNCS}, pages 147--161.
  Springer, 2008.

\bibitem{KHJ06}
Krishnendu Chatterjee, Thomas~A Henzinger, and Marcin Jurdzi{\'n}ski.
\newblock Games with secure equilibria.
\newblock {\em Theoretical Computer Science}, 365(1):67--82, 2006.

\bibitem{ChatterjeeHP10}
Krishnendu Chatterjee, Thomas~A. Henzinger, and Nir Piterman.
\newblock Strategy logic.
\newblock {\em Inf. Comput.}, 208(6):677--693, 2010.

\bibitem{ClarkeE81}
Edmund~M. Clarke and E.~Allen Emerson.
\newblock Design and synthesis of synchronization skeletons using
  branching-time temporal logic.
\newblock In {\em Logics of Programs}, volume 131 of {\em LNCS}, pages 52--71.
  Springer, 1981.

\bibitem{cc77}
Patrick Cousot and Radhia Cousot.
\newblock Abstract interpretation: a unified lattice model for static analysis
  of programs by construction or approximation of fixpoints.
\newblock In {\em {POPL}'77}. ACM, 1977.

\bibitem{DammF14}
Werner Damm and Bernd Finkbeiner.
\newblock Automatic compositional synthesis of distributed systems.
\newblock In {\em {FM} 2014}, volume 8442 of {\em LNCS}, pages 179--193.
  Springer, 2014.

\bibitem{agj04}
Luca de~Alfaro, Patrice Godefroid, and Radha Jagadeesan.
\newblock Three-valued abstractions of games: Uncertainty, but with precision.
\newblock In {\em {LICS}'04}. IEEE, 2004.

\bibitem{Emerson90}
E~Allen Emerson.
\newblock Temporal and modal logic.
\newblock {\em Handbook of Theoretical Computer Science, Volume B: Formal
  Models and Sematics (B)}, 995:1072, 1990.

\bibitem{Faella09}
Marco Faella.
\newblock Admissible strategies in infinite games over graphs.
\newblock In {\em {MFCS} 2009}, volume 5734 of {\em Lecture Notes in Computer
  Science}, pages 307--318. Springer, 2009.

\bibitem{FismanKL10}
Dana Fisman, Orna Kupferman, and Yoad Lustig.
\newblock Rational synthesis.
\newblock In {\em {TACAS}'10}, volume 6015 of {\em LNCS}, pages 190--204.
  Springer, 2010.

\bibitem{FT91}
Drew Fudenberg and Jean Tirole.
\newblock {\em Game Theory}.
\newblock MIT Press, Cambridge, MA, 1991.
\newblock Translated into Chinesse by Renin University Press, Bejing: China.

\bibitem{hmmr00}
Thomas~A. Henzinger, Rupak Majumdar, Freddy Y.~C. Mang, and Jean-Fran\c{c}ois
  Raskin.
\newblock Abstract interpretation of game properties.
\newblock In {\em SAS}, pages 220--239, 2000.

\bibitem{hunter07}
Paul Hunter.
\newblock {\em Complexity and Infinite Games on Finite Graphs}.
\newblock PhD thesis, Computer Laboratory, University of Cambridge, 2007.

\bibitem{dawar13}
Paul Hunter and Anuj Dawar.
\newblock Complexity bounds for regular games.
\newblock In Joanna Jedrzejowicz and Andrzej Szepietowski, editors, {\em
  {MFCS}'05}, volume 3618 of {\em LNCS}. Springer, 2005.

\bibitem{KPV14}
O.~Kupferman, G.~Perelli, and M.Y. Vardi.
\newblock Synthesis with rational environments.
\newblock In {\em Proc. 12th European Conference on Multi-Agent Systems}, LNCS.
  Springer, 2014.

\bibitem{LBCJM94}
David~E Long, Anca Browne, Edmund~M Clarke, Somesh Jha, and Wilfredo~R Marrero.
\newblock An improved algorithm for the evaluation of fixpoint expressions.
\newblock In {\em {CAV}'94}, volume 818 of {\em LNCS}. Springer, 1994.

\bibitem{MannaW81}
Zohar Manna and Pierre Wolper.
\newblock Synthesis of communicating processes from temporal logic
  specifications.
\newblock In {\em Logics of Programs}, volume 131 of {\em LNCS}, pages
  253--281. Springer, 1981.

\bibitem{MogaveroMV10}
Fabio Mogavero, Aniello Murano, and Moshe~Y. Vardi.
\newblock Reasoning about strategies.
\newblock In {\em {FSTTCS} 2010}, volume~8 of {\em LIPIcs}. Schloss Dagstuhl -
  Leibniz-Zentrum fuer Informatik, 2010.

\bibitem{nash50}
John Nash.
\newblock Equilibrium points in $n$-person games.
\newblock {\em Proc. NAS}, 1950.

\bibitem{NRZ14}
Daniel Neider, Roman Rabinovich, and Martin Zimmermann.
\newblock Down the borel hierarchy: Solving muller games via safety games.
\newblock {\em Theoretical Computer Science}, 560:219--234, 2014.

\bibitem{Pnueli77}
Amir Pnueli.
\newblock The temporal logic of programs.
\newblock In {\em Foundations of Computer Science, 1977., 18th Annual Symposium
  on}, pages 46--57. IEEE, 1977.

\bibitem{Seidl96}
Helmut Seidl.
\newblock Fast and simple nested fixpoints.
\newblock {\em Inf. Proc. Letters}, 59(6):303--308, 1996.

\bibitem{ummels2010stochastic}
Michael Ummels.
\newblock {\em Stochastic multiplayer games: theory and algorithms}.
\newblock Amsterdam University Press, 2010.

\end{thebibliography}

\newpage

\appendix

\section{Complements on Synthesis Rules (Section~\ref{section:rules})}
We give the proof of Theorem~\ref{thm:aa}.

\thmaawins*
\begin{proof}
  Let $\sigma_\Agt$ be a strategy profile witness of \AA.
  Let $i$ be a player, we have that $\sigma_{-i} \in \Adm_{-i}(\Game)$, because by condition 1, for all $j \ne i$, $\sigma_j \in \Adm_j(\Game)$.
  Then by condition $2$ we have that $\Game, \sigma_\Agt \models \phi_i$.
  Since this is true for all players $i$, we have that $\Game, \sigma_\Agt \models \bigwedge_{i\in\Agt} \phi_i$.
\end{proof}

The following example shows that 
\AA-winning strategies must be admissible themselves
for Thm.~\ref{thm:aa} to hold.

 \begin{figure}[ht]
 \centering{
    \begin{tikzpicture}[scale=2]
      \draw (0,0) node[draw,minimum size=8mm,circle] (A1) {$s_1$};
      \draw (0,-1) node[draw,minimum size=8mm,circle,inner sep=0pt] (A2) {$\phi_1,\phi_2$};
      \draw (1,0) node[draw,minimum size=8mm] (A3) {$s_2$};
      \draw (1,-1) node[draw,minimum size=8mm,circle] (A4) {$s_3$};
      \draw (2,0) node[draw,minimum size=8mm,circle] (A5) {$\bot$};
      \draw (2,-1) node[draw,minimum size=8mm,circle] (A6) {$\phi_1$};
      \draw[-latex',very thick] (A1) -- (A2);
      \draw[-latex',dashed] (A1) -- (A3);
      \draw[-latex',very thick] (A3) -- (A4);    
      \draw[-latex',dashed] (A3) -- (A5);
      \draw[-latex', very thick] (A4) -- (A2);    
      \draw[-latex', very thick] (A4) -- (A6);
    \end{tikzpicture}
  }
  \caption{Illustration of the necessity of Condition 1 in the definition of Assume-Admissible Synthesis.
    Player~1 controls circles and player~2 squares.}
  \label{fig:adm-cond2}
\end{figure}
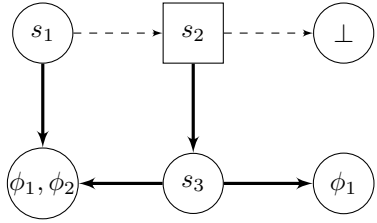

\begin{example}
  In \AA, the profile of strategy must be composed of admissible strategies only. This is necessary otherwise the players may make assumptions on each other which are not satisfied. 
  This is illustrated by the example of \figurename~\ref{fig:adm-cond2} in which two players have reachability objectives $\phi_1$ and~$\phi_2$ respectively.
  The states labeled by $\phi_i$ are the target states for the corresponding objective.
  Admissible strategies are shown in thick edges. Now, the player~$2$ strategy that chooses the dashed edge from $s_2$ satisfies Condition 2 of \AA, since $s_2$ is not reachable under admissible strategies of player~$1$.
  Similarly, the player~$1$ strategy that chooses the dashed edge from~$s_1$ satisfies Condition 2 of \AA since the thick edges lead back to a state satisfying~$\phi_1$. But then the resulting profile is such that none of the two players win.
\end{example}

We explain why the pair of strategies $(\sigma_1,\sigma_2)$ is a solution for the synthesis rules {\sf Coop}, ${\sf AG}$,  ${\sf RS}^{\exists}(\NE,\SPE)$:
  \begin{itemize}
  	\item Rule {\sf Coop}: the outcome of $(\sigma_1,\sigma_2)$  is "{\sf Env} emits $r_1$, then {\sf Sched} emits $q_1$,  {\sf Env} emits $r_2$, then {\sf Sched} waits one round and emits $q_2$, and repeat." This outcome is a model for both $\phi_{\sf Sched}$ and $\phi_{\sf Env}$.
	\item Rules ${\sf AG}^{\land,\lor}$: the two rules coincide as we have only two players. When both players follow $(\sigma_1,\sigma_2)$, we know that the outcome is a model for both $\phi_{\sf Sched}$ and $\phi_{\sf Env}$. We must in addition verify that $\outcome(\sigma_1) \models \phi_{\sf Env} \rightarrow \phi_{\sf Sched}$ and that $\outcome(\sigma_2) \models \phi_{\sf Sched} \rightarrow \phi_{\sf Env}$. This is the case as either the outcome of the game between the two players is "{\sf Env} emits $r_1$, then {\sf Sched} emits $q_1$,  {\sf Env} emits $r_2$, then {\sf Sched} waits one round and emits $q_2$, and repeat", or it is not but then player~1 switches to strategy $\hat{\sigma}_1$ and this forces outcomes that falsify $\phi_{\sf Env}$. The argument is similar to justify that $\outcome(\sigma_2) \models \phi_{\sf Sched} \rightarrow \phi_{\sf Env}$.
	\item Rules ${\sf RS}^{\exists}(\NE,\SPE,\dom)$: for the existential version of rational synthesis, we consider that {\sf Sched} (player~1) plays the role of the system. We must then prove that $(\sigma_1,\sigma_2)$ is such that $\phi_{\sf Sched}$ is satisfied and $\sigma_2$ is stable for \NE,\SPE, and \dom. This is the case because $\outcome(\sigma_1,\sigma_2)$ is a model for $\phi_{\sf Sched}$, so it satisfies the specification of the scheduler. Furthermore, it satisfies the specification of the environment, and so the environment has no incentive to deviate: $\sigma_2$ is stable for \NE. It is also stable for \SPE as any deviation of the environment triggers strategy $\hat{\sigma}_1$ for the scheduler and $\hat{\sigma}_2$ for the environment, as a consequence, there is no incentive for environment to deviate from $\hat{\sigma}_2$ as $\outcome(\hat{\sigma}_1) \models \neg \phi_{\sf Env}$.  It is stable for \dom because $\sigma_2$ is winning under $\sigma_1$.
  \end{itemize}
  
We explain why the pair of strategies $\sigma_1$ is a solution for the rules {\sf Win-under-Hyp} and ${\sf RS}^{\forall}(\NE,\SPE)$:
  \begin{itemize}
  	\item Rule {\sf Win-under-Hyp}: we have seen above that $\outcome(\sigma_1) \models \phi_{\sf Env} \rightarrow \phi_{\sf Sched}$ and so $\sigma_1$ is a solution for {\sf Win-under-Hyp}.
	\item Rules ${\sf RS}^{\forall}(\NE,\SPE,\dom)$: against $\sigma_1$, the only strategies for the environment that are stable for \NE, \SPE, \dom are those that make sure that the outcome is equal to  "{\sf Env} emits $r_1$, then {\sf Sched} emits $q_1$,  {\sf Env} emits $r_2$, then {\sf Sched} waits one round and emits $q_2$, and repeat." Indeed, any strategy that would deviate from that outcome would trigger the execution of $\hat{\sigma}_1$ for the scheduler and the outcome would then be losing for the environment making a deviation toward $\sigma_2$ profitable for instance.
  \end{itemize}

  \subparagraph{Absence of dominant strategies}
 Let us now convince the reader that Env has no dominant strategy for his
 specification. Indeed, it is easy to see that if there are dominant strategies
 then all admissible strategies must be dominant. Now, take two strategies in ${\sf Adm}(\phi_{{\sf Env}})$ that behaves the same with the exception that the first strategy when a request is granted ($q_i$ is true), directly issue another request $r_i$, for $i \in \{1,2\}$, while the second strategy waits one additional turn to issue the new request $r_i$. It is easy to see that none of the two strategies dominate the other one, as the {\sf Sched} can adapt his own strategy to make the first strategy win and the second loose, and the other way around. 

\section{Comparison of the Synthesis Rules (Section~\ref{section:comparison})}
In this section, we detail the comparison of the synthesis notions we consider.
We adopt the following notation: an implication $\textsf{A} \Rightarrow \textsf{B}$ between two
notions \textsf{A} and \textsf{B} means that \textsf{B} has a solution whenever \textsf{A} has one.
Note that this does not always imply inclusion between the witnessing strategy profiles.
The proof of Theorem~\ref{thm:implications}, will be decomposed in several lemmas.

\subsection{Proof of Theorem~\ref{thm:implications}}
\begin{remark}
  We have $\RS^\exists(\SPE) \Rightarrow \RS^\exists(\NE)$ and $\RS^\forall(\NE) \Rightarrow \RS^\forall(\SPE)$
  because any subgame perfect equilibrium is also a Nash equilibrium.
  Moreover, in the definition of the rules $\RS$, the conditions for $\RS^\forall$ are stronger than for $\RS^\exists$, so $\RS^\forall(\SPE) \Rightarrow \RS^\exists(\SPE)$, $\RS^\forall(\NE) \Rightarrow \RS^\exists(\NE)$ and $\RS^\forall(\dom) \Rightarrow \RS^\exists(\dom)$.
\end{remark}

\begin{lemma}\label{lem:win->aa}
  $\Win \Rightarrow \AA \Rightarrow \Coop \Rightarrow \RS^\exists(\SPE)$ and $\Win \Rightarrow \AG^{\lor} \Rightarrow \AG^\land \Rightarrow \Coop$,
\end{lemma}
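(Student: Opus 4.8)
The plan is to prove the chain of implications in Lemma~\ref{lem:win->aa} one link at a time, since each is a relatively direct consequence of the definitions and of Theorem~\ref{thm:aa}. I would begin with $\Win \Rightarrow \AA$. Suppose $\sigma_\Agt$ is a witness for $\Win$, so each $\sigma_i$ is winning for $\phi_i$ against \emph{all} strategies of the other players. A winning strategy is in particular undominated: if $\sigma_i'$ dominated $\sigma_i$, there would be some $\sigma_{-i}$ with $\Game,\sigma_i,\sigma_{-i}\not\models\phi_i$, contradicting that $\sigma_i$ is winning. Hence $\sigma_i \in \adm_i(\Game)$, giving Condition~1. For Condition~2, since $\sigma_i$ is winning against all $\sigma_{-i}\in\Sigma_{-i}$, it is a fortiori winning against all $\sigma_{-i}'\in\adm_{-i}(\Game)\subseteq\Sigma_{-i}$. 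So $\sigma_\Agt$ witnesses $\AA$.

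Next, $\AA \Rightarrow \Coop$ is exactly Theorem~\ref{thm:aa}: any $\AA$-winning profile $\sigma_\Agt$ satisfies $\Game,\sigma_\Agt\models\bigwedge_{i\in\Agt}\phi_i$, which is precisely the witness condition for $\Coop$. For $\Coop \Rightarrow \RS^\exists(\SPE)$, take a $\Coop$-witness $\sigma_\Agt$ with outcome $\rho$ satisfying every $\phi_i$. I would designate player~1 as the system and build a profile where any deviation from $\rho$ by some player~$j\ge 2$ triggers a ``punishment'' against $j$ — but more simply, since $\rho$ already satisfies $\phi_j$ for every $j\ge 2$, no environment player has any incentive to deviate at all: playing $\sigma_{-1}$ against $\sigma_1$ yields an outcome in which all of player~2's,\dots,player~$n$'s objectives hold, so $\sigma_{-1}$ is trivially a Nash equilibrium in the game where player~1 plays $\sigma_1$. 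For the subgame-perfect requirement one must check every subgame; here I would either invoke a standard argument (a profile all of whose outcomes from every reachable history satisfy every $\phi_j$, $j\ge2$, is an \SPE), or, if necessary, augment $\sigma_\Agt$ off the main outcome with threat strategies so that deviations are never profitable. Then $\Game,\sigma_1,\sigma_{-1}\models\phi_1$ holds since $\rho\models\phi_1$, so $\RS^\exists(\SPE)$ has a solution.

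For the second chain, $\Win \Rightarrow \AG^\lor$: given a $\Win$-witness $\sigma_\Agt$, each $\sigma_i$ is winning for $\phi_i$, hence $\Game,\sigma_\Agt\models\bigwedge_i\phi_i$ (Condition~(1) of $\AG^\lor$), and $\Game,\sigma_i\models(\bigvee_{j\ne i}\phi_j)\Rightarrow\phi_i$ holds because $\Game,\sigma_i\models\phi_i$ already (Condition~(2)). Then $\AG^\lor\Rightarrow\AG^\land$ is immediate since $\bigwedge_{j\ne i}\phi_j\Rightarrow\bigvee_{j\ne i}\phi_j$, so the implication in Condition~(2) only gets weaker; and $\AG^\land\Rightarrow\Coop$ is immediate from Condition~(1) of $\AG^\land$, which is literally the $\Coop$ condition. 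The main obstacle I anticipate is the \SPE clause in $\Coop\Rightarrow\RS^\exists(\SPE)$: one has to be careful that the constructed environment profile really is subgame-perfect and not merely a Nash equilibrium, which may require specifying behaviour after deviations (e.g.\ letting players revert to arbitrary fixed strategies, or using the fact that satisfaction of all environment objectives along every branch makes every continuation an equilibrium). The other links are essentially bookkeeping with the definitions.
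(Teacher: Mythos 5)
Most of your links are correct and follow essentially the same route as the paper: your direct argument that a winning strategy cannot be dominated (hence $\Win\Rightarrow\AA$), the appeal to Theorem~\ref{thm:aa} for $\AA\Rightarrow\Coop$, and the three links of the second chain ($\Win\Rightarrow\AG^\lor\Rightarrow\AG^\land\Rightarrow\Coop$) all match the paper's proof. The genuine gap is in $\Coop\Rightarrow\RS^\exists(\SPE)$, exactly at the point you yourself flag as the main obstacle, and neither of the two escape routes you propose closes it. Your first suggestion --- that a profile all of whose outcomes from every \emph{reachable} history satisfy every $\phi_j$ is an \SPE{} --- fails because the paper's definition of \SPE{} quantifies over \emph{all} histories $h$, including those not compatible with the profile; from such an off-path $h$ the continuation $(\sigma_i\circ h)_{i\in\Agt}$ may well fail to be a Nash equilibrium, so checking only on-path histories proves nothing. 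Your second suggestion, ``threat strategies'' or ``letting players revert to arbitrary fixed strategies'' off the main outcome, runs into the standard credibility problem: for subgame perfection the punishing profile must itself be a Nash equilibrium in every subgame it induces, and an arbitrary punishment (or a grim trigger chosen only to make deviations unprofitable) need not be.

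The missing ingredient is an existence theorem: for Borel (in particular Muller) objectives, every game admits a subgame-perfect equilibrium from every state. The paper fixes, for each state $s$, an \SPE{} $\sigma^s_\Agt$ of the game rooted at $s$, and defines the profile to follow the cooperative outcome $\rho$ as long as the current history is a prefix of $\rho$, and, upon the first deviation (say the histories agree up to position $j$), to play $\sigma^{h_{j+1}}_\Agt$ from then on. On-path histories then give Nash equilibria because every player's objective is already satisfied along $\rho$, so no deviation can improve anyone's payoff; off-path histories give Nash equilibria because the continuation is, by construction, part of an \SPE{} of the corresponding subgame. You need to invoke this existence result explicitly --- without it the construction cannot be completed, since nothing else guarantees that the prescribed off-path behaviour is itself equilibrium play. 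With that one addition your argument becomes the paper's proof.
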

\begin{proof}
  \fbox{$\Win \Rightarrow \AA$} 
  This holds because winning strategies are always admissible~\cite{berwanger07}, therefore a profile witness of \textsf{Win} satisfies condition 1 and 2 of the definition of assume-admissible.

  \fbox{$\AA \Rightarrow \Coop$} 
  This holds by Theorem~\ref{thm:aa}.


  \fbox{$\Coop \Rightarrow \RS^\exists(\SPE)$}
  Note that in order for $\RS$ to make sense we must have ${\sf sys} \in \Agt$.
  Assume $\Coop$ has a solution and let $\sigma_\Agt$ be a profile of strategy such that for all \player{i}, $\sigma_\Agt \models \phi_i$.

  We define a strategy profile $\sigma'_i$, that follows the path $\rho=\Out_\Game(\sigma_i)$ when possible (that is: if $h$ is a prefix of $\rho$ then play $\act_{|h|}(\rho)$) and if not follows a subgame perfect equilibrium:
  that is, we select for each state~$s$ a subgame perfect equilibrium $\sigma^s_\Agt$, there always exist one for Borel games (so in particular for Muller games)~\cite[Theorem~3.15]{ummels2010stochastic}; 
  then if $h$ is not a prefix of $\rho$, let $j$ be the last index such that $h_{\le j} = \rho_{\le j}$ and we define $\sigma'_\Agt(h) = \sigma^{h_{j+1}}_\Agt(h_{\ge j+1})$.
  
  Let $h$ be a history.
  If $h$ is a prefix of $\rho$ then the objective of each player is satisfied by following $\sigma'_i \circ h$ so none of them can gain by changing its strategy, therefore it is a Nash equilibrium from~$\last(h)$.
  If $h$ is not a prefix of $\rho$ then by definition of $\sigma'_i$, players follow a subgame-perfect equilibrium since $h$ deviated from $\rho$, so in particular $\sigma'_i \circ h$ is a Nash equilibrium from $\last(h)$.
  Moreover the objective of the system is satisfied.
  Therefore $\sigma_\Agt$ is a solution to $\RS(\SPE)$.

  \fbox{$\Win \Rightarrow \AG^\lor$} 
  Let $\sigma_\Agt$ such that for each \player{i}, $\sigma_i$ is winning for $\phi_i$.
  The first condition in the definition of $\AG^\lor$ is satisfied because for all \player{i}, $\Out_\Game(\sigma_\Agt) $ satisfies $\phi_i$.
  The second condition is satisfied because for all strategy $\sigma'_{-i}$, we have that $\Out_\Game(\sigma_i,\sigma'_{-i}) $ satisfies $\phi_i$, so in particular it satisfies $(\bigvee_{j\in \Agt\setminus\{i\}} \phi_j \Rightarrow \phi_i)$.
  Hence $\sigma_\Agt$ is a solution for $\AG^\lor$.

  \fbox{$\AG^{\lor} \Rightarrow \AG^\land$}
  This holds because the second condition in the definition of these rules is stronger for $\AG^\lor$.

  \fbox{$\AG^\land \Rightarrow \Coop$}
  This implication holds simply because of the condition 1 in the definition of assume-guarantee, which corresponds to the definition of Cooperative synthesis.

\end{proof}

\begin{lemma}
  For all $\gamma\in \{\NE, \SPE, \dom\}$, $\Win \Rightarrow \RS^\forall(\gamma)$, $\RS^\forall(\dom) \not\Rightarrow \RS^\forall(\NE)$ and
  $\RS^\exists(\dom) \Rightarrow \RS^\exists(\SPE)$.
\end{lemma}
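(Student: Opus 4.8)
The plan is to prove each of the three claims separately, all following quite directly from the definitions together with the earlier lemmas.

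\textbf{Claim $\Win \Rightarrow \RS^\forall(\gamma)$ for all $\gamma \in \{\NE,\SPE,\dom\}$.} Suppose $\sigma_\Agt$ is a \Win-witness, so each $\sigma_i$ is winning for $\phi_i$. Take $\sigma_1$ as the system strategy. First I would argue $\Sigma^\gamma_{\Game,\sigma_1} \neq \emptyset$: when player~$1$ is fixed to $\sigma_1$, the remaining players play a game, and for $\gamma \in \{\NE,\SPE\}$ such equilibria always exist for Muller (indeed Borel) games by~\cite[Theorem~3.15]{ummels2010stochastic} as already invoked in Lemma~\ref{lem:win->aa}; for $\gamma = \dom$, the profile $(\sigma_2,\dots,\sigma_n)$ itself works, since each $\sigma_j$ is winning for $\phi_j$ and hence trivially dominant in the game where player~$1$ plays $\sigma_1$. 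Then for the universal quantification: take any $\sigma_{-1} \in \Sigma^\gamma_{\Game,\sigma_1}$. Since $\sigma_1$ is winning for $\phi_1$ against \emph{all} strategies of the other players, in particular $\Game,\sigma_1,\sigma_{-1} \models \phi_1$. Hence the \Win-witness $\sigma_1$ is also an $\RS^\forall(\gamma)$-witness.

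\textbf{Claim $\RS^\exists(\dom) \Rightarrow \RS^\exists(\SPE)$.} Suppose $(\sigma_1,\sigma_{-1})$ is an $\RS^\exists(\dom)$-witness: $\Game,\sigma_1,\sigma_{-1} \models \phi_1$ and each $\sigma_j$ ($j\geq 2$) is dominant in the game where player~$1$ plays $\sigma_1$. I would show $\sigma_{-1}$ is then an \SPE in that game (hence an $\RS^\exists(\SPE)$-witness with the same $\sigma_1$). The key observation is that a profile of dominant strategies is a Nash equilibrium --- if $\sigma_j$ is dominant then no unilateral deviation of player~$j$ can turn a losing outcome into a winning one --- and moreover dominance is preserved in every subgame: the definition of dominant quantifies over \emph{all} strategies $\sigma_{-j}$ and, via the shift $\sigma \mapsto \sigma \circ h$, the restriction of a dominant strategy to any history $h$ is dominant in the subgame from $\last(h)$. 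Thus $(\sigma_j\circ h)_{j\geq 2}$ is a profile of dominant strategies, hence a Nash equilibrium, in the subgame from $\last(h)$ (with player~$1$ fixed to $\sigma_1\circ h$), for every $h$; this is exactly the \SPE condition. Since $\Game,\sigma_1,\sigma_{-1}\models\phi_1$ still holds, we get an $\RS^\exists(\SPE)$-witness.

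\textbf{Claim $\RS^\forall(\dom) \not\Rightarrow \RS^\forall(\NE)$.} This requires a separating example, and I expect this to be the main obstacle --- the other two parts are essentially definitional. The idea is to build a game where, with a suitable system strategy $\sigma_1$, \emph{no} environment profile is dominant (so $\Sigma^\dom_{\Game,\sigma_1} = \emptyset$ and the universal rule $\RS^\forall(\dom)$ is \emph{vacuously} satisfied --- note the rule only requires $\Game,\sigma_1,\sigma_{-1}\models\phi_1$ for all $\sigma_{-1}\in\Sigma^\dom_{\Game,\sigma_1}$, which is trivially true when the set is empty, but one must also check the rule's side condition; re-reading the definition, $\RS^\forall(\gamma)$ requires $\Sigma^\gamma_{\Game,\sigma_1}\neq\emptyset$, so actually I need a $\sigma_1$ with $\Sigma^\dom_{\Game,\sigma_1}\neq\emptyset$ but such that some Nash profile violates $\phi_1$). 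So the construction I would aim for: a game with an environment player who has a dominant strategy under the chosen $\sigma_1$ (so $\RS^\forall(\dom)$ holds, since that dominant strategy also forces $\phi_1$), but which also admits a Nash equilibrium $\sigma_{-1}$ --- using the usual ``punishment'' gadget where environment players threaten each other --- whose outcome fails $\phi_1$, so that $\RS^\forall(\NE)$ fails for $\sigma_1$; and then argue no \emph{other} $\sigma_1'$ can witness $\RS^\forall(\NE)$ either (e.g.\ by making player~$1$'s only ``useful'' move unique, or by showing the bad Nash equilibrium persists for every system strategy). Concretely I would reuse a two-environment-player reachability gadget: environment player~$2$ can help player~$1$ reach its target or can self-sabotage, and player~$3$'s choices make a ``bad'' profile a Nash equilibrium (no profitable deviation for player~$2$ because player~$3$ retaliates) even though it is not a profile of dominant strategies. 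Verifying that the dominant-strategy profile exists, that the bad profile is genuinely a Nash equilibrium, and that it cannot be avoided by any system strategy, are the three things that need care.
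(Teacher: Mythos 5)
Your first claim ($\Win \Rightarrow \RS^\forall(\gamma)$) is correct and matches the paper's argument. But your proof of $\RS^\exists(\dom) \Rightarrow \RS^\exists(\SPE)$ has a genuine gap: it is \emph{not} true that a profile of dominant strategies is a subgame-perfect equilibrium, because dominance does not constrain a strategy's behaviour after the player's \emph{own} deviations. The \SPE{} condition quantifies over \emph{all} histories $h$, including those that are incompatible with $\sigma_j$ itself; for such $h$, no outcome of $(\sigma_j,\sigma_{-j})$ passes through $h$, so the full-game dominance comparison says nothing about $\sigma_j\circ h$. Concretely: let player~$j$ choose at the root between action $a$, leading to a state where $\phi_j$ holds, and action $b$, leading to a second $j$-state with a winning ($c$) and a losing ($d$) successor. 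The strategy ``play $a$ at the root, play $d$ after $b$'' is dominant (it always wins), but at the history ending in the post-$b$ state it is not a best response, so the profile is not an \SPE. Your ``shift'' argument implicitly assumes the subgame situation can be realized inside the full game, which fails exactly at own-deviation histories. The paper avoids this by \emph{modifying} the dominant profile off-path: $\sigma'_j$ follows $\sigma_j$ on histories compatible with it and switches to a (fixed, always-existing) subgame-perfect equilibrium after the first deviation; on-path stability then follows from dominance, and off-path stability from the \SPE{} construction. Your proof needs this repair --- the witness profile for $\RS^\exists(\SPE)$ is generally not the one you started with.

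For $\RS^\forall(\dom) \not\Rightarrow \RS^\forall(\NE)$ you correctly identify what the counterexample must achieve (a nonempty set of dominant environment profiles all of which satisfy $\phi_1$, together with a Nash profile violating $\phi_1$ that no choice of $\sigma_1$ can escape), but you only sketch a construction and defer the verification. Note that no ``punishment'' gadget is needed: the paper's example simply makes every environment player winning in the bad Nash profile (so nobody has an incentive to deviate), gives one environment player a dominant action that happens to also satisfy $\phi_1$, and lets player~$1$ control no state so that the bad equilibrium cannot be avoided. As written, this part of your argument is incomplete rather than wrong.
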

\begin{proof}
  \fbox{$\Win \Rightarrow \RS^\forall(\gamma)$}
  Let $\sigma_\Agt$ be a strategy profile such that for each \player{i}, $\sigma_i$ is winning for $\phi_i$.

  We first show that $\Sigma^\gamma_{\Game,\sigma_1}$ is not empty.
  For $\gamma \in \{\NE,\SPE\}$ this is because there always exist a subgame perfect equilibrium for Borel games (so in particular for Muller games)~\cite[Theorem~3.15]{ummels2010stochastic} and a subgame perfect equilibrium is a Nash equilibrium.
  For $\gamma=\dom$, note that by definition of dominant strategies, winning strategies are dominant, so $\Sigma^\dom_{\Game,\sigma_1}$ contains at least $\sigma_{-1}$.

  Let $\sigma'_{-1}$ be a strategy profile for $\Agt \setminus \{1\}$.
  Since $\sigma_1$ is a winning we have that $\Game,\sigma_{1},\sigma'_{-1} \models \phi_1$.
  Therefore $\sigma_1$ is a solution for $\RS^\forall(\gamma)$.

  \medskip

  \fbox{$\RS^\exists(\dom) \Rightarrow \RS^\exists(\SPE)$} 
  Let $\sigma_\Agt$ be a witness for $\RS^\exists(\dom)$.
  We define a strategy profile $\sigma'_\Agt$ such that $\sigma'_i$ follows $\sigma_i$ on all histories compatible with $\sigma_i$ (that is if $h$ prefix of $\rho\in \Out_\Game(\sigma_i)$ then $\sigma'_i(h) = \sigma_i(h)$) and outside of these histories follows a subgame perfect equilibria: there always exist one for Borel games (so in particular for Muller games)~\cite[Theorem~3.15]{ummels2010stochastic}.
  
  By definition of $\sigma'_\Agt$, the outcome $\Out_\Game(\sigma'_\Agt)$ is the same than $\Out_\Game(\sigma_\Agt)$.
  Because $\sigma_\Agt$ is a witness for $\RS^\forall(\dom)$, this outcome is winning for \player{1}.

  It remains to show that $\sigma'_{-1}$ is a subgame perfect equilibria.
  Let $h$ be a history, $i$ be a player different from \player{1}, and $\sigma''_i$ be a strategy for \player{i}.
  We show that from $h$ player $i$ does not improve by switching from $\sigma'_i$ to another strategy $\sigma''_i$, which will show that $\sigma'_\Agt \circ h$ is a Nash equilibrium from $h$.

  If $h$ is compatible with $\sigma_i$ then $\sigma'_i$ coincide with $\sigma_i$ from this history, so $\Out_\Game(\sigma'_i,\sigma_{-i}) = \Out_\Game(\sigma_\Agt)$.
  Since $\sigma_i$ is a dominant strategy, if $\Game,\sigma''_i, \sigma_{-i} \models \phi_i$ then $\Game,\sigma_i,\sigma_{-i} \models \phi_i$ and therefore this implies that $\Out_\Game(\sigma'_i,\sigma_{-i})$ satisfy $\phi_i$.
  This means that $i$ does not improve by switching from $\sigma'_i$ to $\sigma''_i$.
 
  If $h$ is not compatible with $\sigma_i$, then $\sigma'_i$ plays according to a subgame-perfect equilibria since the first deviation. 
  In particular, this strategy is a Nash equilibrium from~$h$.

  This shows that $\sigma'_{-1}$ is a subgame perfect equilibrium and has $\sigma'_\Agt \models \phi_1$, this is a witness for $\RS^\exists(\SPE)$.

  \medskip

  \fbox{$\RS^\forall(\dom) \not\Rightarrow \RS^\forall(\NE)$} 
  Consider the example given in \figurename~\ref{fig:rs-dom-not-rs-spe}.
  The strategy $r$ for \player{2} is dominant and any strategy of \player{3} is dominant.
  The outcome of these strategies always go to the bottom state where $\phi_{\sf sys}$ is satisfied.
  Therefore there is a solution to $\RS^\forall(\dom)$.

  However, we show that there is no solution to $\RS^\forall(\NE)$.
  Consider the strategy profile $(\cdot,l,b)$, this is a Nash equilibrium (even
	a subgame Nash equilibrium) since no player can improve his/her strategy.
  Note that \player{1} is losing for that profile, hence no strategy of \player{1} can ensure that it will win for all Nash equilibria.
  \begin{figure}[ht]
  \begin{center}
  \begin{tikzpicture}[xscale=2]
    \draw (0,0) node[draw,circle,minimum size=8mm] (A) {$s_1$};
    \draw (1,0) node[draw,minimum size=8mm] (B) {$s_2$};
    \draw (0,-1.5) node[draw,minimum size=8mm] (C) {$\phi_1,\phi_2,\phi_3$};
    \draw (2,0) node[draw,minimum size=8mm] (D) {$\phi_2,\phi_3$};
    \draw (2,-1) node[draw,minimum size=8mm] (E) {$\phi_3$};
    \draw[-latex'] (-0.5,0) -- (A);
    \draw[-latex'] (A) -- node[above] {$l$} (B);
    \draw[-latex'] (A) -- node[right] {$r$} (C);
    \draw[-latex'] (B) -- node[above] {$b$} (D);
    \draw[-latex'] (B) -- node[below] {$a$} (E);
  \end{tikzpicture}
  \end{center}
  \caption{Example showing that $\RS^\forall(\dom) \not\Rightarrow \RS^\forall(\NE)$.
    Player~2 controls circle states, \player{3} square states and \player{1} does not control any state.}
  \label{fig:rs-dom-not-rs-spe}
  \end{figure}
\end{proof}


In the example of Section~\ref{section:example}, we saw that more strategy profiles satisfied the assume-guarantee condition compared to assume-admissibility,
including undesirable strategy profiles. 
We show that the rule $\AG^\land$ is indeed more often satisfied than $\AA$;
while the rules $\AG^\lor$, and $\AA$ are incomparable.

\begin{lemma}\label{lem:aa->ag}
  We have $\AG^\land \not\Rightarrow \AA$; $\AG^\lor \not \Rightarrow \AA$; $\AA \not\Rightarrow \AG^\land$ and $\AA \not \Rightarrow \AG^\lor$.
\end{lemma}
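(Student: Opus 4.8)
The plan is to exhibit four (or fewer, reusing games) small turn-based arenas, one for each non-implication, and in each case to explicitly compute the admissible strategies and check the two rules. The key observation driving all four constructions is the structural mismatch between the two rules: $\AG^\land$ and $\AG^\lor$ both require a single distinguished strategy \emph{profile} that is jointly winning and satisfies a retaliation condition, whereas $\AA$ requires, \emph{for each player separately}, a strategy that beats \emph{all} admissible strategies of the opponents. So to separate $\AG$ from $\AA$ I want games where a good cooperative profile exists (giving $\AG$) but some admissible strategy of one player prevents any fixed strategy of another from being winning against all admissible opponent strategies (killing $\AA$); and conversely, games where every player can win against admissible opponents (giving $\AA$) but no profile is jointly winning (killing $\AG^\land$, $\AG^\lor$, since their first condition demands $\Game,\sigma_\Agt\models\bigwedge_i\phi_i$).

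Concretely, for $\AG^\land \not\Rightarrow \AA$ and $\AG^\lor\not\Rightarrow\AA$ (note these can share one two-player example, since $\AG^\land$ and $\AG^\lor$ coincide for two players), I would take a game in which the players can cooperate to a run satisfying $\phi_1\wedge\phi_2$ and in which, off that cooperative run, each player can punish the other — this makes the retaliation clause of $\AG$ vacuously satisfied exactly as in the scheduler example of Section~\ref{section:example}. But I will arrange that player~2 has an admissible strategy deviating from the cooperative run (admissible because it still leaves player~2's own value unchanged and visits a help state, by Lemma~\ref{lemma:adm-outcomes}) against which no single admissible strategy of player~1 can guarantee $\phi_1$. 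For $\AA\not\Rightarrow\AG^\land$ and $\AA\not\Rightarrow\AG^\lor$ I want the opposite: a game with no jointly winning profile at all (so both first conditions of $\AG^{\land,\lor}$ fail outright), yet where the set of admissible strategies of each player is so constrained that each player, playing an admissible strategy, does win against all admissible strategies of the others — a situation the running example of Fig.~\ref{fig:running-example} essentially already realizes, so I would reuse or lightly adapt that game.

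The main steps, in order, are: (1) describe each arena with a picture and objectives; (2) compute $\val_i(s)$ at every state and hence the value-preserving edges $E_i$ and help states $H_i$; (3) use Lemma~\ref{lemma:adm-outcomes} to pin down $\Adm_i$ precisely (this is where care is needed — one must argue both that the claimed admissible strategies really are admissible and that no others are); (4) verify the $\AG$ conditions by exhibiting the cooperative profile and checking retaliation, or verify their failure by noting no profile satisfies $\bigwedge_i\phi_i$; (5) verify the $\AA$ side symmetrically. The main obstacle I expect is step~(3): getting the set of admissible strategies \emph{exactly} right, in particular ruling out unexpected admissible strategies that would restore the implication, and making sure the help-state condition $\G\F(H_i)$ in $\Phi_i$ is handled correctly when objectives are Büchi/Muller rather than mere reachability. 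Keeping the examples as small as possible — ideally reachability or single Büchi objectives on three or four states — will make these verifications routine rather than delicate.
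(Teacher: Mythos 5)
Your plan for the first two non-implications ($\AG^\land\not\Rightarrow\AA$ and $\AG^\lor\not\Rightarrow\AA$) is sound and matches the paper's: a single two-player game suffices (the two $\AG$ rules coincide there), one builds a cooperative profile with mutual retaliation to get $\AG$, and one arranges an admissible deviation of one player that no strategy of the other can beat. The paper's example even makes \emph{every} strategy of player~1 admissible, which kills $\AA$ immediately.

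The second half of your plan contains a genuine gap. You propose to refute $\AA\Rightarrow\AG^{\land,\lor}$ with ``a game with no jointly winning profile at all (so both first conditions of $\AG^{\land,\lor}$ fail outright)'' while $\AA$ still holds. This is impossible: by Theorem~\ref{thm:aa}, any $\AA$-winning profile $\sigma_\Agt$ satisfies $\Game,\sigma_\Agt\models\bigwedge_i\phi_i$, so whenever $\AA$ has a solution the first condition of both $\AG$ rules is automatically satisfiable (this is exactly why $\AA\Rightarrow\Coop$ in Fig.~\ref{fig:implications}). Your characterization of the running example of Fig.~\ref{fig:running-example} is also off: the alternating run $s_1s_2s_1s_2\cdots$ satisfies both $\G\F s_2$ and $\G\F s_1$, so that game does have a jointly winning profile. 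Moreover, the paper proves that for \emph{two-player} games $\AA\Rightarrow\AG^\land$ (and hence $\AG^\lor$), so no two-player game --- in particular no light adaptation of the running example --- can witness either of these two non-implications; you need at least three players. The actual separation must be extracted from the \emph{second} condition of $\AG$: it quantifies over \emph{all} strategies of the other players, whereas condition~2 of $\AA$ quantifies only over their \emph{admissible} strategies. The paper's counterexamples exploit exactly this: a third player (controlling no state, hence unable to retaliate) wins against all admissible behaviours of players~1 and~2, but players~1 and~2 have a \emph{dominated} joint deviation that satisfies $\phi_1\land\phi_2$ (resp.\ $\phi_1\lor\phi_2$) while violating $\phi_3$, so the implication $\bigwedge_{j\neq 3}\phi_j\Rightarrow\phi_3$ (resp.\ the disjunctive variant) cannot be enforced. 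Without this idea --- and without realizing the three-player requirement --- your construction for the last two items would not go through.
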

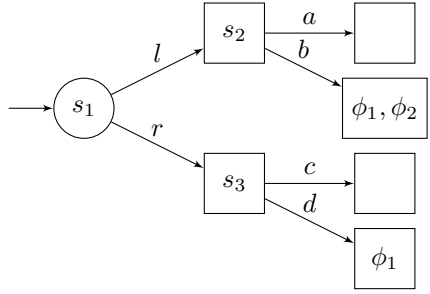
\begin{figure}[ht]
  \begin{center}
  \begin{tikzpicture}[xscale=2]
    \draw (0,0) node[draw,circle,minimum size=8mm] (A) {$s_1$};
    \draw (1,1) node[draw,minimum size=8mm] (B) {$s_2$};
    \draw (1,-1) node[draw,minimum size=8mm] (C) {$s_3$};
    \draw (2,-2) node[draw,minimum size=8mm] (D) {$\phi_1$};
    \draw (2,-1) node[draw,minimum size=8mm] (E) {};
    \draw (2,0) node[draw,minimum size=8mm] (F) {$\phi_1,\phi_2$};
    \draw (2,1) node[draw,minimum size=8mm] (G) {};
    \draw[-latex'] (-0.5,0) -- (A);
    \draw[-latex'] (A) -- node[above] {$l$} (B);
    \draw[-latex'] (A) -- node[above] {$r$} (C);
    \draw[-latex'] (B) -- node[above] {$b$} (F);
    \draw[-latex'] (B) -- node[above] {$a$} (G);
    \draw[-latex'] (C) -- node[above] {$d$} (D);
    \draw[-latex'] (C) -- node[above] {$c$} (E);
  \end{tikzpicture}
  \end{center}
  \caption{Example showing that $\AG\not\Rightarrow \AA$. 
    Player~1 controls circle states and \player{2} square states.}
  \label{fig:ag-not-aa}
  \end{figure}
  
\begin{proof}
  \fbox{$\AG^\land \not\Rightarrow \AA$ and $\AG^\lor \not \Rightarrow \AA$}  
  Consider the game represented in \figurename~\ref{fig:ag-not-aa}.
  In this example, we have $\adm_1 = \Sigma_1$. 
  Therefore \player{2} has no winning strategy against all admissible strategies of $\adm_2$ (in particular the strategy of \player{1} that plays $r$, makes \player{2} lose).
  So \AA fails. However, we do have $\AG^\land$ by the profile $\sigma_1 \colon s_1 \mapsto l, \sigma_2 \colon s_2 \mapsto b, s_3 \mapsto c$.
  This profile also satisfies $\AG^\lor$ which is equivalent to $\AG^\land$ for two player games.
  \medskip

  \fbox{$\AA \not\Rightarrow \AG^\land$}
  Consider the example of \figurename~\ref{fig:aa-not-ag-and}.
  The profile where \player{1} and \player{2} plays to the right is assume-admissible.
  However there is no solution to assume-guarantee synthesis: if \player{1} and \player{2} change their strategies to go to the state labeled $\phi_1, \phi_2$, then the condition $\calG,\sigma_3 \models (\phi_1 \land \phi_2) \Rightarrow \phi_3$ is not satisfied.
  \begin{figure}[ht]
    \begin{center}
      \begin{tikzpicture}[xscale=2]
        \draw (0,0) node[draw,rectangle,minimum size=5mm] (A) {$s_1$};
        \draw (1,0) node[draw,rectangle,minimum size=5mm] (B) {$\phi_1,\phi_2,\phi_3$};
        \draw (0,-1) node[draw,circle,minimum size=5mm] (E) {$s_2$};
        \draw (1,-2) node[draw,rectangle, minimum size=5mm] (F) {$\phi_1,\phi_2$};
        \draw (0,-2) node[draw,circle, minimum size=5mm] (G) {};
        \path[-latex'] (A) edge node[midway,above]{a} (B)
        (A) edge node[left] {b} (E)
        (E) edge node[above right] {a} (F)
        (E) edge node[left]{b}(G);
        \path[-latex'] (F) edge[loop right] (F)
        (G) edge[loop right] (G);
      \end{tikzpicture}
    \end{center}
    \caption{Example showing that $\AA\not\Rightarrow \AG^\land$. 
      Player~1 controls circle states and \player{2} square states;
      player~$3$ does not control any state.
    }
    \label{fig:aa-not-ag-and}
  \end{figure}
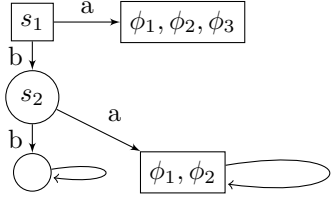

  \fbox{$\AA \not \Rightarrow \AG^\lor$}
  We will provide a counter-example to show our claim. Note that we need strictly more than two players
  since otherwise $\AG^\lor$ is equivalent to $\AG^\land$, and we have just shown that $\AA$ implies
  $\AG^\land$.
  
  Consider the game with three players in Fig.~\ref{fig:aa-not-dd}.
  Define the following objectives: $\phi_1 = \GF(s_4 \lor s_7)$,
  $\phi_2 = \GF(s_4 \lor s_6)$, $\phi_3 = \texttt{true}$,
  where~$\phi_i$ is player~$i$'s objective.
  These are actually reachability objectives since the game ends in absorbing states.

  Now, action~$b$ is dominated at states~$s_2$ and~$s_3$ for player~$2$. Thus player~$1$
  has a \AA-winning strategy which consists in taking~$a$ at~$s_1$.
  Player~$2$ has a winning strategy in the game (taking~$a$ at both states).
  Player~$3$ has a \AA-winning strategy too since actions~$b$ are eliminated for player~$2$.
  Therefore, there is an \AA-winning strategy profile which ends in~$s_4$.

  On the other hand, there is no $\AG^\lor$ profile. In fact, player~$1$ has no winning
  strategy to ensure $\phi_2 \lor \phi_3 \Rightarrow \phi_1$, which is equivalent
  to $\phi_1$ since~$\phi_3 = \texttt{true}$.
  \begin{figure}[ht]
    \begin{center}
      \begin{tikzpicture}[xscale=2]
        \draw (0,0) node[draw,rectangle,minimum size=5mm] (A) {$s_1$};
        \draw (1,0) node[draw,circle,minimum size=5mm] (B) {$s_3$};
        \draw (2,0) node[draw,circle, minimum size=5mm] (C) {$s_4$};
        \draw (2,-1) node[draw,circle, minimum size=5mm] (D) {$s_5$};
        \draw (0,-1) node[draw,circle,minimum size=5mm] (E) {$s_2$};
        \draw (0,-2) node[draw,circle, minimum size=5mm] (F) {$s_6$};
        \draw (1,-2) node[draw,circle, minimum size=5mm] (G) {$s_7$};
        \draw (2.8,0) node (lab1) {111};
        \draw (2.8,-1) node (lab2) {001};
        \draw (0,-2.6) node (lab3) {011};
        \draw (1,-2.6) node (lab3) {101};
        \path[-latex'] (A) edge node[midway,above]{a} (B)
        (B) edge node[above]{a} (C)
        (B) edge node[above]{b} (D)
        (A) edge node[left] {b} (E)
        (E) edge node[left] {a} (F)
        (E) edge node[above right]{b}(G);
        \path[-latex'] (F) edge[loop right] (F)
        (G) edge[loop right] (G)
        (C) edge[loop right] (C)
        (D) edge[loop right] (D);
      \end{tikzpicture}
    \end{center}
    \caption{Example showing that $\AA\not\Rightarrow \AG^\lor$. 
      Player~1 controls circle states and \player{2} square states;
      player~$3$ does not control any state.
      At each absorbing state, the given Boolean vector represents
      the set of players for which the state is winning.
    }
    \label{fig:aa-not-dd}
  \end{figure}
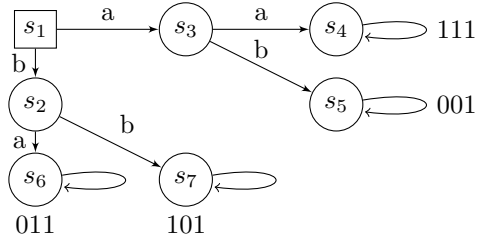

\end{proof}

\begin{lemma}
  For two player games, $\AA \Rightarrow \AG^\land$.
\end{lemma}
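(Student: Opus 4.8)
The plan is to show that for a two-player game $\Game = \langle \A, \phi_1, \phi_2\rangle$, any $\AA$-winning strategy profile $(\sigma_1, \sigma_2)$ is also a witness for $\AG^\land$. Recall that for two players $\AG^\land$ requires a profile $(\sigma_1,\sigma_2)$ such that (i) $\Game, \sigma_1, \sigma_2 \models \phi_1 \land \phi_2$, and (ii) $\Game, \sigma_1 \models \phi_2 \Rightarrow \phi_1$ and $\Game, \sigma_2 \models \phi_1 \Rightarrow \phi_2$. So first I would take an arbitrary $\AA$-winning profile $(\sigma_1,\sigma_2)$, which exists by hypothesis, and verify these two conditions.

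Condition (i) is immediate from Theorem~\ref{thm:aa}: every $\AA$-winning profile satisfies $\bigwedge_{i} \phi_i$, hence in particular $\phi_1 \land \phi_2$. The work is in condition (ii); by symmetry it suffices to prove $\Game, \sigma_1 \models \phi_2 \Rightarrow \phi_1$, i.e. that for every strategy $\sigma_2' \in \Sigma_2$, if $\out_\Game(\sigma_1, \sigma_2') \models \phi_2$ then $\out_\Game(\sigma_1,\sigma_2') \models \phi_1$. Fix such a $\sigma_2'$ and suppose $\out_\Game(\sigma_1,\sigma_2') \models \phi_2$. The key step is to argue that from this one can extract an \emph{admissible} strategy $\sigma_2''$ for player~2 that produces the same outcome against $\sigma_1$ — or at least an admissible strategy whose outcome against $\sigma_1$ is also a model of $\phi_1$. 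The natural route: since $\sigma_2'$ is winning (it satisfies its own objective along the relevant play), and more generally a strategy that satisfies $\phi_2$ along some play is not dominated \emph{in a way that matters here}; the cleanest formal handle is the classical fact (used already in the paper, e.g. in Lemma~\ref{lem:win->aa}) that every strategy is dominated by some admissible strategy, and that this dominating strategy can be chosen to agree with $\sigma_2'$ on the history $\out_\Game(\sigma_1,\sigma_2')$ up to the point of deviation. Concretely I would invoke: for any $\sigma_2'$ there exists $\sigma_2'' \in \adm_2(\Game)$ with $\sigma_2'' \geq \sigma_2'$ in the dominance order. Then since $\Game, \sigma_1, \sigma_2' \models \phi_2$, by definition of dominance $\Game, \sigma_1, \sigma_2'' \models \phi_2$. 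Now apply Condition~2 of $\AA$ for player~2: $\sigma_2$ is winning against all admissible strategies of player~1 — but that is the wrong direction. Instead apply Condition~2 for player~1: $\sigma_1$ is winning against all $\sigma_1' \in \adm_{-1}(\Game) = \adm_2(\Game)$, so in particular $\Game, \sigma_1, \sigma_2'' \models \phi_1$. Finally, since $\sigma_2''$ dominates $\sigma_2'$ and $\Game,\sigma_1,\sigma_2'' \models \phi_2$ while we need to transfer $\phi_1$ back to $\sigma_2'$ — this last transfer is the subtle point, since dominance of $\sigma_2''$ over $\sigma_2'$ is stated relative to $\phi_2$, not $\phi_1$.

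I expect the main obstacle to be exactly this last step: concluding $\out_\Game(\sigma_1,\sigma_2') \models \phi_1$ from $\out_\Game(\sigma_1,\sigma_2'') \models \phi_1$. The resolution I would pursue is to choose the admissible strategy $\sigma_2''$ more carefully so that it \emph{coincides} with $\sigma_2'$ on the play $\out_\Game(\sigma_1,\sigma_2')$, not merely dominates it — i.e. use a "repair" construction: follow $\sigma_2'$ along its outcome with $\sigma_1$, and switch to an admissible strategy only upon deviation. Since $\out_\Game(\sigma_1,\sigma_2') \models \phi_2$, following $\sigma_2'$ along this play does not decrease player~2's value (it reaches a run satisfying $\phi_2$, so all states on it have value $\geq 0$, and Lemma~\ref{lemma:nodecrease}-style reasoning plus value-preservation shows such a repaired strategy is admissible). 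Then $\out_\Game(\sigma_1, \sigma_2'') = \out_\Game(\sigma_1,\sigma_2')$ exactly, and $\Game, \sigma_1, \sigma_2'' \models \phi_1$ by $\AA$ Condition~2 for player~1 gives $\out_\Game(\sigma_1,\sigma_2') \models \phi_1$ directly. This handles the implication; by the symmetric argument with the roles of players~1 and~2 exchanged, $\Game, \sigma_2 \models \phi_1 \Rightarrow \phi_2$ as well, so $(\sigma_1,\sigma_2)$ witnesses $\AG^\land$.
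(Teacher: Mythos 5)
Your plan founders on the very step you flag as ``the subtle point,'' and the conclusion you are trying to reach is in fact false: an \AA-winning profile $(\sigma_1,\sigma_2)$ need \emph{not} itself witness $\AG^\land$. The repair construction requires Lemma~\ref{lem:admissible-path}, whose hypothesis is that player~$2$ \emph{does not decrease its own value} along the play; you justify this by saying that since the run satisfies $\phi_2$, all states on it have value $\geq 0$. But ``value stays $\geq 0$'' is not ``value does not decrease'': a move of player~$2$ from a state of value~$1$ to a state of value~$0$ is a decrease, and by Lemma~\ref{lemma:nodecrease} no admissible strategy of player~$2$ makes such a move --- so the run $\out_\Game(\sigma_1,\sigma_2')$ may fail to be compatible with \emph{any} admissible strategy of player~$2$, even though it satisfies $\phi_2$. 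Concretely: let $s_0$ be a player-$2$ state with edges to $s_1$ (absorbing, satisfies $\phi_1\land\phi_2$) and to $s_2$, a player-$1$ state with edges to $s_3$ (absorbing, satisfies $\phi_2$ only) and $s_4$ (absorbing, satisfies neither). Player~$2$'s only admissible strategy goes to $s_1$ (going to $s_2$ drops its value from $1$ to $0$), so the profile where player~$2$ goes to $s_1$ and player~$1$ chooses $s_3$ at $s_2$ is \AA-winning. Yet against the (non-admissible) player-$2$ strategy that goes to $s_2$, this $\sigma_1$ produces the outcome $s_0 s_2 s_3 \models \phi_2 \land \lnot\phi_1$, so $\Game,\sigma_1 \not\models \phi_2 \Rightarrow \phi_1$.

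The missing idea is the one the paper uses: you must \emph{modify} the \AA-winning strategies off the equilibrium play, making each player switch to a strategy winning for the negation of the opponent's objective as soon as the opponent is observed to decrease its own value (such a punishing strategy exists by determinacy, since after the decrease the opponent's value is $\leq 0$). With this modification the problematic case is handled vacuously --- the implication $\phi_2 \Rightarrow \phi_1$ holds because $\phi_2$ is falsified --- and Lemma~\ref{lem:admissible-path} is only invoked in the remaining case where no player has decreased its own value, which is exactly when its hypothesis is met. The rest of your argument (Theorem~\ref{thm:aa} for the conjunction, extracting an admissible strategy compatible with the deviating play and applying Condition~2 of \AA) matches the paper's reasoning and is fine once restricted to that case.
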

\begin{proof}
  Assume $\calG$ is a two player game and consider strategy profile~$(\sigma_1, \sigma_2)$ witness of \AA.
  Note that if \player{j} decreases its own value at position $k$ then its value for $h_{\le k+1}$ will be smaller or equal to $0$ which means \player{j} has no winning strategy from this history.
  By determinacy of turn-based zero-sum games, \player{}$3-j$ has a winning strategy for $\lnot \phi_j$.
  Therefore we can adjust the strategies $(\sigma_1,\sigma_2)$ such that if there is a player~$j$ that decreases its own value, the other player will make it lose.
  We write $(\sigma'_1,\sigma'_2)$ the strategies thus defined and we will show that they form a solution of Assume-Guarantee.

  Let $\rho$ be the outcome of the strategy profile $(\sigma'_1,\sigma'_2)$.
  We can show that $\rho$ is also the outcome of $(\sigma_1,\sigma_2)$.
  First we recall that an admissible strategy does not decrease its own value
  (Lemma~\ref{lemma:nodecrease}).
  Therefore each $\sigma'_i$ is identical to $\sigma_i$ on the run $\rho$.
  By Theorem~\ref{thm:aa}, $\rho$ satisfies $\phi_1\land \phi_2$. 

  Let~$\sigma_{1}''$ be an arbitrary strategy profile for~$1$, and consider $\rho' = \outcome_\Game(\sigma_1'',\sigma'_{2})$.
  We show that $\rho' \models \phi_1 \Rightarrow \phi_2$.
  Note that \player{2} cannot be the first to decrease its value during $\rho'$ since it behave according to $\sigma_2$ has long has there are no devition, and $\sigma_2$ is admissible and admissible strategies do not decrease their own values.
  \begin{itemize}
  \item
    If \player{1} decreases its value during $\rho'$, \player{2} will play to make him lose and $\rho' \not\models \phi_1$.
    As a consequence $\rho' \models \phi_1 \Rightarrow \phi_2$.
  \item
    Otherwise no player decreases its own value during $\rho'$.
    We assume that $\rho' \models \phi_1$ 
    and show that $\rho' \models \phi_2$.
    Since~$\rho' \models \phi_1$, by Lemma~\ref{lem:admissible-path}, there is a strategy $\tau''_1$ which is admissible and compatible with $\rho'$.
    Since $\rho'$ is an outcome of $\sigma_2'$, and of 
    $\tau''_1$,
    we have $\outcome_\Game(\tau''_1,\sigma_2) = \rho'$.
    Now, since $\tau''_{1}$ is 
    admissible and by the fact that $\sigma_2$ satisfies the condition 2 of \AA, we obtain $\rho' \models \phi_2$, which proves the property.
  \end{itemize}
  We can show the same property replacing the roles of \player{1} and \player{2}, thus showing that the profile is solution of $\AG^\land$.
\end{proof}


To prove that there is no implication for the edges that are not in the diagram of \figurename~\ref{fig:implications}, it remains to be shown that:
\begin{lemma}
  $\AA \not\Rightarrow \Win$, $\AG^\lor \not\Rightarrow \Win$, $\Coop \not\Rightarrow \AA$, $\Coop \not\Rightarrow \AG^\land$, $\Coop \not\Rightarrow \RS^\exists(\dom)$, and for all $\gamma \in \{\NE,\SPE,\dom\}$, $\RS^{\exists,\forall}(\gamma) \not\Rightarrow \Coop$.
\end{lemma}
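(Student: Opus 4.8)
The proof proceeds by exhibiting, for each non-implication, a game in which the left-hand rule has a solution while the right-hand rule has none; wherever possible I reuse games already introduced in this appendix.

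\textbf{Reusing earlier counterexamples.} For $\Coop \not\Rightarrow \AA$ I would take the game of Figure~\ref{fig:ag-not-aa}: it was shown in the proof of Lemma~\ref{lem:aa->ag} that $\AG^\land$ has a solution there while $\AA$ has none, and since $\AG^\land \Rightarrow \Coop$ (Lemma~\ref{lem:win->aa}) this already separates $\Coop$ from $\AA$. Dually, for $\Coop \not\Rightarrow \AG^\land$ I would take the game of Figure~\ref{fig:aa-not-ag-and}, where $\AA$ holds --- hence $\Coop$ holds by Theorem~\ref{thm:aa} --- but $\AG^\land$ fails. For $\AA \not\Rightarrow \Win$ and $\AG^\lor \not\Rightarrow \Win$ I would reuse the scheduler example of Section~\ref{section:example}: there neither player has a winning strategy for its own objective, so $\Win$ has no solution, whereas that section exhibits an $\AA$-winning profile, and the cooperative profile $(\sigma_1,\sigma_2)$ discussed there is also a solution of $\AG^\lor$ (which, for two players, coincides with $\AG^\land$). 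The game of Figure~\ref{fig:running-example} would serve equally well for these two items.

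\textbf{$\RS^{\exists,\forall}(\gamma) \not\Rightarrow \Coop$.} Since $\RS^\forall(\gamma) \Rightarrow \RS^\exists(\gamma)$, it suffices to build one game in which $\RS^\forall(\gamma)$ holds for all $\gamma \in \{\NE,\SPE,\dom\}$ but $\Coop$ does not. I would take a two-player arena in which player~$1$ (the system) can enforce its objective $\phi_1$ --- e.g. $\phi_1 = \G\F s$ with $s$ the initial state, a self-looping player-$1$ state --- and set $\phi_2 = \emptyset$. Then $\Coop$ fails because $\phi_1 \land \phi_2 = \emptyset$. Fixing a strategy $\sigma_1$ of player~$1$ enforcing $\phi_1$, in the residual game for player~$2$ the objective $\phi_2$ is never satisfied, so every player-$2$ strategy is a Nash equilibrium, a subgame-perfect equilibrium, and (vacuously) dominant; hence $\Sigma^\gamma_{\Game,\sigma_1}$ is nonempty and $\Game,\sigma_1,\sigma_2 \models \phi_1$ for every $\sigma_2 \in \Sigma^\gamma_{\Game,\sigma_1}$, i.e. $\RS^\forall(\gamma)$ --- and a fortiori $\RS^\exists(\gamma)$ --- has a solution.

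\textbf{$\Coop \not\Rightarrow \RS^\exists(\dom)$, and the main obstacle.} This is the only item where I do not expect a routine argument. The point to get right is that two players do not suffice: if $(\sigma_1,\sigma_2)$ witnesses $\Coop$ in a two-player game, then in the residual one-player game obtained by fixing $\sigma_1$ the strategy $\sigma_2$ already achieves $\phi_2$ and is therefore dominant there, so $(\sigma_1,\sigma_2)$ also witnesses $\RS^\exists(\dom)$; in other words $\Coop \Rightarrow \RS^\exists(\dom)$ when $|\Agt| = 2$. I would therefore use a three-player coordination gadget: player~$1$ has objective $\texttt{true}$ and makes a single irrelevant move; then player~$2$ picks a bit $x \in \{0,1\}$ and, after seeing it, player~$3$ picks a bit $y \in \{0,1\}$, the play ending in an absorbing state encoding $(x,y)$; set $\phi_2 = \phi_3 = \{x = y\}$. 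Then $\Coop$ holds via $x = y = 0$. But for every $\sigma_1$, neither of player~$2$'s strategies is dominant in the residual game --- choosing $x = 0$ loses against $y = 1$, and $x = 1$ loses against $y = 0$ --- so $\Sigma^\dom_{\Game,\sigma_1} = \emptyset$ and $\RS^\exists(\dom)$ has no solution. The remaining work is mechanical checking of the rule definitions on these small games; the only genuinely delicate steps are this three-player gadget (where the two-player intuition is misleading and the emptiness of $\Sigma^\dom_{\Game,\sigma_1}$ must be argued uniformly over all $\sigma_1$) and, in the previous item, verifying that $\Sigma^\gamma_{\Game,\sigma_1}$ is nonempty, which is exactly the clause of $\RS^\forall$ that could otherwise fail vacuously.
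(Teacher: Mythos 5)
Your proposal is correct, and every non-implication is witnessed by a sound argument, but a few items take a different route from the paper's. For $\AA \not\Rightarrow \Win$ and $\AG^\lor \not\Rightarrow \Win$ the paper argues by contradiction through the implication diagram (if $\AA \Rightarrow \Win$ then, composing with $\Win \Rightarrow \AG^\land$, one gets $\AA \Rightarrow \AG^\land$, contradicting Lemma~\ref{lem:aa->ag}, and symmetrically for $\AG^\lor$ via $\Win \Rightarrow \AA$); you instead give direct counterexamples (the scheduler example or Fig.~\ref{fig:running-example}), which is more concrete but requires verifying afresh that $\AG^\lor$ and $\AA$ hold there, whereas the paper's indirection reuses facts already proved. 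For $\Coop \not\Rightarrow \AG^\land$ the paper introduces the dedicated two-player game of Fig.~\ref{fig:coop-not-ag}, while you reuse Fig.~\ref{fig:aa-not-ag-and} together with $\AA \Rightarrow \Coop$ (Theorem~\ref{thm:aa}); both are valid, and your reuse is economical. Your remaining constructions are essentially isomorphic to the paper's: your ``all player-$2$ strategies are vacuously stable'' game for $\RS^{\exists,\forall}(\gamma) \not\Rightarrow \Coop$ matches Fig.~\ref{fig:rs-not-coop}, and your three-player bit-matching gadget for $\Coop \not\Rightarrow \RS^\exists(\dom)$ is exactly the structure of Fig.~\ref{fig:coop-not-rs} ($l$ loses against $bd$, $r$ loses against $ac$, so no dominant strategy exists for any $\sigma_1$). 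Your observation that two players cannot separate $\Coop$ from $\RS^\exists(\dom)$ --- because a $\Coop$ witness makes the environment's strategy trivially dominant in the residual one-player game --- is a correct and useful justification, not made explicit in the paper, for why the counterexample needs a third player.
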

\begin{proof}
  \fbox{$\AA \not\Rightarrow \Win$} 
  Towards a contradiction assume $\AA \Rightarrow \Win$, then since we have $\Win \Rightarrow \AG^\land$ (Lemma~\ref{lem:win->aa}), we would have $\AA \Rightarrow \AG^\land$ but this contradicts Lemma~\ref{lem:aa->ag}.
  
  \fbox{$\AG^\lor \not\Rightarrow \Win$}
  Towards a contradiction assume $\AG^\lor \Rightarrow \Win$, then since we have $\Win \Rightarrow \AA$ (Lemma~\ref{lem:win->aa}), we would have $\AG^\lor \Rightarrow \AA$ but this contradicts Lemma~\ref{lem:aa->ag}.
  
  \fbox{$\Coop \not\Rightarrow \AA$}
  In \figurename~\ref{fig:ag-not-aa}, we have an example of a game where there is no solution for $\AA$ (see the proof of Lemma~\ref{lem:aa->ag} for details), however there is a solution for $\Coop$: $(l,b)$.

  \fbox{$\Coop \not\Rightarrow \AG^\land$}
  Consider the example of \figurename~\ref{fig:coop-not-ag}.
  There is a solution for $\Coop$: \player{1} plays $a$.
  However there is no solution for $\AG^\land$: \player{2} has no strategy to ensure that $\phi_1 \implies \phi_2$.

  \fbox{$\Coop \not\Rightarrow \RS^\exists(\dom)$}
  Consider the example of \figurename~\ref{fig:coop-not-rs}.
  This example has a solution for $\Coop$, for instance $(l,ac)$ or $(r,bd)$.
  However \player{2} has no dominant strategy: $l$ looses against $bd$ so it is dominated by $r$, and $r$ looses against $ac$ so it is dominated by $l$.
  Therefore $\RS^\exists(\dom)$ has no solution.
  \begin{figure}[ht]
  \begin{minipage}{0.3\textwidth}
    \begin{center}
      \begin{tikzpicture}
        \draw (0,0) node[draw,rectangle,minimum size=8mm] (A) {1};
        \draw (2,0) node[draw,circle,minimum size=8mm] (B) {$\phi_1,\phi_2$};
        \draw (0.6,-1.6) node[draw,circle,minimum size=8mm] (C) {$\phi_1$};
        \path[-latex'] (A) edge node[midway,above]{a} (B)
        (A) edge node[right]{b} (C);
        \draw[-latex'] (-1,0) -- (A);
      \end{tikzpicture}
    \end{center}
    \caption{Example showing that $\Coop\not\Rightarrow \AG^\land$. 
      Player~1 controls the circle state.
    }
    \label{fig:coop-not-ag}
  \end{minipage}\hfill
    \begin{minipage}{0.42\textwidth}
  \begin{center}
  \begin{tikzpicture}[xscale=2]
    \draw (0,0) node[draw,circle,minimum size=8mm] (A) {$2$};
    \draw (1,1) node[draw,minimum size=8mm] (B) {$3$};
    \draw (1,-1) node[draw,minimum size=8mm] (BB) {$3$};
    \draw (2,1.5) node[draw,minimum size=8mm] (C) {$\phi_1,\phi_2,\phi_3$};
    \draw (2,0) node[draw,minimum size=8mm] (D) {$\varnothing$};
    \draw (2,-1.5) node[draw,minimum size=8mm] (E) {$\phi_1,\phi_2,\phi_3$};
    \draw[-latex'] (-0.5,0) -- (A);
    \draw[-latex'] (A) -- node[above] {$l$} (B);
    \draw[-latex'] (A) -- node[right] {$r$} (BB);
    \draw[-latex'] (B) -- node[above] {$a$} (C);
    \draw[-latex'] (B) -- node[below] {$b$} (D);
    \draw[-latex'] (BB) -- node[above] {$c$} (D);
    \draw[-latex'] (BB) -- node[above] {$d$} (E);

  \end{tikzpicture}
  \end{center}
  \caption{Example showing that $\Coop \not\Rightarrow \RS^\exists(\dom)$.
    Player~2 controls circle states, \player{3} square states and \player{1} does not control any state.}
  \label{fig:coop-not-rs}
    \end{minipage}
    \hfill
  \begin{minipage}{0.24\textwidth}
    \begin{center}
      \begin{tikzpicture}
        \draw (0,0) node[draw,circle,minimum size=8mm] (A) {2};
        \draw (0,-2) node[draw,circle,minimum size=8mm] (B) {$\phi_1$};
        \path[-latex'] (A) edge node[midway,right]{a} (B);
        \draw[-latex'] (0,1) -- (A);
      \end{tikzpicture}
    \end{center}
    \caption{Two-player game showing that $\RS^{\exists,\forall}(\gamma) \not\Rightarrow \Coop$.
      Player~2 controls the circle state but has no choice.
    }
    \label{fig:rs-not-coop}
  \end{minipage}
  \end{figure}

  \fbox{$\RS^{\exists,\forall}(\gamma) \not\Rightarrow \Coop$}
  Consider the example of \figurename~\ref{fig:rs-not-coop}.
  There is no solution for $\Coop$: \player{2} can never win.
  However there is a solution for any concept in $\RS^{\exists,\forall}(\gamma)$: \player{1} wins against any of the strategy satisfying these concepts since the only possible outcome is winning for him.
  
\end{proof}

This concludes the proof of Theorem~\ref{thm:implications}.

\subsection{Proof of Lemma~\ref{lem:aadv->aa}}
We now give the proof of Lemma~\ref{lem:aadv->aa}
\lmwuhaa*
\begin{proof}
  Assume that $\sigma_2^w$ is a winning strategy for $\phi_2$ and let $\sigma_1,\sigma_2$ be a solution of {\sf Win-under-Hyp}.
  We have that $\forall \sigma'_2.\ \sigma_1,\sigma'_2 \models \phi_2 \Rightarrow \phi_1$
  and $\forall \sigma'_1.\ \sigma'_1,\sigma_2 \models \phi_1 \Rightarrow \phi_2$.
  Since $\sigma_2^w$ is a winning strategy, all admissible strategies of player~2 are winning.
  Then, for all $\sigma'_2 \in \Adm_2$, we have $\Game,\sigma_1,\sigma'_2 \models \phi_2$ and because $\forall \sigma'_2, \sigma_1,\sigma'_2 \models \phi_2 \Rightarrow \phi_1$, we also have that $G, \sigma_1,\sigma'_2 \models \phi_1$.
  If $\sigma_1$ is dominated, there exists a non-dominated strategy $\sigma^a_1$ that dominates it~\cite[Thm.~11]{berwanger07}, otherwise we take $\sigma^a_1 = \sigma_1$.  
  In both cases $\sigma^a_1$ is admissible.
  As $\sigma_1$ is dominated by $\sigma^a_1$, $\Game , \sigma_1, \sigma'_2 \models \phi_1$ implies $\Game , \sigma^a_1, \sigma'_2 \models \phi_1$.
  This shows that the condition $\forall \sigma'_2 \in \Adm_2(\Game).\ \Game,\sigma_1^a,\sigma'_2\models \phi_1$ is satisfied.
  Since $\sigma^w_2$ is winning, it is admissible and we also have $\forall \sigma'_1 \in \Adm_1(\Game).\ \Game,\sigma'_1,\sigma^w_2\models \phi_2$.
  Therefore all conditions of Assume-Admissible are satisfied by $(\sigma^a_1,\sigma^w_2)$.
  
\end{proof}

We now prove Theorem~\ref{thm:rectangular}.
\thmrectangular*
\begin{proof}
  \fbox{\AA is rectangular}
  If there is no solution to \AA, then the set of witness is empty, and therefore is rectangular.
  If there is only one solution, then it is the Cartesian product of singletons and therefore also a rectangular set.

  Otherwise let $\sigma_\Agt$ and $\sigma'_\Agt$ be two solutions of \AA.
  Let $i$ be a player of $\Agt$, we show that $\sigma_i,\sigma'_{-i}$ is also a solution of \AA.
  We have that $\sigma_i \in \Adm(\Game)$ and for all $j\ne i$, $\sigma_j\in \Adm(\Game)$, because condition 1 holds for $\sigma_\Agt$ and $\sigma'_\Agt$.
  Therefore condition~1 holds for $\sigma_i,\sigma'_{-i}$.
  Similarly, $\forall \sigma'_{-i} \in \Adm_{-i}(\Game).\ \Game, \sigma'_i,\sigma_i \models \phi_i$ and for all $j\ne i$, $\forall \sigma'_{-j} \in \Adm_{-j}(\Game).\ \Game, \sigma'_j,\sigma_j \models \phi_j$, because condition 2 holds for $\sigma_\Agt$ and $\sigma'_\Agt$.
  Therefore condition~2 holds for $\sigma_i,\sigma'_{-i}$ and it is a witness of \AA.

  Let $\Sigma^{aa}_i$ be the set of strategy $\sigma_i$ such that there exists $\sigma_{-i}$ such that $\sigma_{i},\sigma_{-i}$ is a witness of \AA.
  We can show that the set of witness of \AA~is the Cartesian product of the $\Sigma^{aa}_i$.
  We obviously have that the set of solutions is included in $\prod_{i\in\Agt} \Sigma^{aa}_i$.
  Let $\sigma_\Agt$ be a profile in $\prod_{i\in\Agt} \Sigma^{aa}_i$, and $\sigma'_\Agt$ a witness of \AA.
  We can replace for one $i$ at a time, the strategy $\sigma'_i$ by $\sigma_i$ in $\sigma'_\Agt$ and by the small property we previously proved, the strategy profile stays a solution of \AA.
  Therefore $\sigma_\Agt$ is a solution of \AA.
  This shows that the set of solutions is the rectangular set $\prod_{i\in\Agt} \Sigma^{aa}_i$.

  \fbox{$\sigma'_C \in \adm_C(\Game) \Rightarrow \Game,\sigma_{-C},\sigma'_C \models \bigwedge_{i \in -C} \phi_i$}
  This claim follows from the definition of \AA-winning strategy profiles, since each strategy is winning against admissible strategies.

  \medskip

  Now
  Consider any game~$\Game$ and fix a profile~$\sigma_\Agt$ such that $\Game,\sigma_\Agt \models \bigwedge_{1\leq i \leq n} \phi_i$.

  \fbox{$\Win$}
  Assume $\sigma_\Agt$ is solution to $\Win$, then each~$\sigma_i$ is a winning strategy.
  Let $\sigma'_i$ be a strategy part of another profile solution to $\Win$.
  Then the strategy $\sigma'_i$ ensures~$\phi_i$ against any strategy profile for~$-i$.
  If we replace $\sigma_i$ by $\sigma'_i$ in the profile $\sigma_\Agt$ then the condition for $\Win$ are still satisfied.
  Thus the rule $\Win$ is rectangular.

  \fbox{$\RS^\forall(\dom)$}
  Let $\sigma_1$ be a solution of $\RS^\forall(\dom)$.
  Let $\sigma_{2},\dots,\sigma_n$ and $\sigma'_2,\dots,\sigma'_n$ be profiles of $\Sigma^\dom_{\Game,\sigma_1}$ such that  $\sigma_1,\sigma_{2},\dots,\sigma_n \models \phi_1$ and $\sigma_1,\sigma'_2,\dots,\sigma'_n \models \phi_1$.
  If we define a profile $\tau_2, \dots,\tau_n$ where each $\tau_i$ is either $\sigma_i$ or $\sigma'_i$, then as each $\tau_i$ is dominant, we have $\sigma_1,\tau_2,\dots,\tau_n \models \phi_1$ because $\sigma_1$ is a solution of $\RS^\forall(\dom)$.
  Therefore the profile belongs to $\Sigma^\dom_{\Game,\sigma_1}$ and makes $\sigma_1$ win.
  This shows that the rule is rectangular.

  \fbox{$\RS^\exists(\dom)$}
  Consider the example of \figurename~\ref{fig:rs-dom-not-rect}.
  Since \player{2} and \player{3} are always winning, all their strategies are dominant. 
  There is only one strategy~$\sigma_1$ for \player{1} since it controls no state.
  The profiles $(a,c)$ and $(b,d)$ are strategies of $\Sigma^\dom_{\Game,\sigma_1}$ such that $\sigma_1$ wins for $\phi_1$, but the profile $(\sigma_1,a,d)$ does not make~$\phi_1$ hold.
  The rule is therefore not rectangular.

\begin{figure}[ht]
  \begin{minipage}{0.45\textwidth}
    \centering
    \begin{tikzpicture}
      \draw (0,0) node[draw,minimum size=6mm] (A) {2};
      \draw (-1,-1) node[draw,circle,minimum size=6mm] (B) {3};
      \draw (1,-1) node[draw,minimum size=6mm] (C) {$\phi_{1},\phi_2,\phi_{3}$};
      \draw (-1.7,-2) node[draw,minimum size=6mm] (D) {$\phi_1,\phi_2,\phi_3$};
      \draw (-0.3,-2) node[draw,minimum size=6mm] (E) {$\phi_2,\phi_3$};
      
      \draw[-latex'] (A) -- node[above left] {a} (B);
      \draw[-latex'] (A) -- node[above right] {b} (C);
      \draw[-latex'] (B) -- node[above left] {c} (D);
      \draw[-latex'] (B) -- node[above right] {d} (E);
    \end{tikzpicture}
    \caption{Game with three players showing that rule $\textsf{RS}^\exists(\dom)$ is not rectangular.
    Here, player~$1$ controls no state;
    player~$2$ controls the square state, and player~$3$ controls the round state.
  }\label{fig:rs-dom-not-rect}
    \end{minipage}
    \hfill
    \begin{minipage}{0.45\textwidth}
    \centering{
    \begin{tikzpicture}
      \draw (0,0) node[draw,minimum size=6mm] (A) {2};
      \draw (-1,-1) node[draw,circle,minimum size=6mm] (B) {3};
      \draw (1,-1) node[draw,minimum size=6mm] (C) {$\phi_1,\phi_2,\phi_3$};
      \draw (-1.7,-2) node[draw,minimum size=6mm] (D) {$\phi_1,\phi_2,\phi_3$};
      \draw (-0.3,-2) node[draw,minimum size=6mm] (E) {$\phi_1,\phi_3$};
      
      \draw[-latex'] (A) -- node[above left] {a} (B);
      \draw[-latex'] (A) -- node[above right] {b} (C);
      \draw[-latex'] (B) -- node[above left] {c} (D);
      \draw[-latex'] (B) -- node[above right] {d} (E);
    \end{tikzpicture}}
    \caption{Game with three players showing that rule $\RS^\forall(\NE)$ and $\RS^\forall(\SPE)$ are not rectangular.
      Player~1 controls no state, \player{2} controls the square state and \player{3} the round state.}
    \label{fig:ex-rs-forall}
    \end{minipage}
\end{figure}

  \begin{figure}[htb]
    \begin{minipage}{0.45\textwidth}
    \centering{
    \begin{tikzpicture}
      \draw (0,0) node[draw,minimum size=6mm] (A) {2};
      \draw (-1,-1) node[draw,circle,minimum size=6mm] (B) {3};
      \draw (1,-1) node[draw,minimum size=6mm] (C) {$\phi_1,\phi_2,\phi_3$};
      \draw (-1.5,-2) node[draw,minimum size=6mm] (D) {$\phi_1,\phi_2,\phi_3$};
      \draw (-0.2,-2) node[draw,minimum size=6mm] (E) {$\phi_3$};
      
      \draw[-latex'] (A) -- node[above left] {a} (B);
      \draw[-latex'] (A) -- node[above right] {b} (C);
      \draw[-latex'] (B) -- node[above left] {c} (D);
      \draw[-latex'] (B) -- node[above right] {d} (E);
    \end{tikzpicture}}
    \caption{
      Game with three players showing that rule $\RS^\exists(\NE)$ and $\RS^\exists(\SPE)$ are not rectangular.
      Player~1 controls no state, \player{2} controls the square state and \player{3} the round state.}
    \label{fig:ex-spe}
    \end{minipage}
    \hfill
    \begin{minipage}{0.45\textwidth}
    \centering{
    \begin{tikzpicture}
      \draw (0,0) node[draw,circle,minimum size=6mm] (A) {1};
      \draw (-1,-1) node[draw,minimum size=6mm] (B) {2};
      \draw (1,-1) node[draw,minimum size=6mm] (C) {$\phi_1,\phi_2$};
      \draw (-1.5,-2) node[draw,minimum size=6mm] (D) {$\phi_1,\phi_2$};
      \draw (-0.5,-2) node[draw,minimum size=6mm] (E) {};
      
      \draw[-latex'] (A) -- node[above left] {a} (B);
      \draw[-latex'] (A) -- node[above right] {b} (C);
      \draw[-latex'] (B) -- node[above left] {c} (D);
      \draw[-latex'] (B) -- node[above right] {d} (E);
    \end{tikzpicture}}
    \caption{Game with two players showing that rule $\AG^\lor$ and $\AG^\land$ are not rectangular.
      Player~1 controls the round state and \player{2} the square state.}
    \label{fig:ex-ag}
    \end{minipage}
  \end{figure}

  \fbox{$\RS^\forall(\NE)$\ and $\RS^\forall(\SPE)$} 
  Consider the game represented in \figurename~\ref{fig:ex-rs-forall}.
  Player~1 has only one strategy~$\sigma_1$ and the other players have two possible strategies: $a$ and $b$ for \player{2} and $c$ and $d$ for \player{3}.
  Since \player{1} is always winning, $\sigma_1$ is a solution for $\RS^\forall(\NE,\SPE)$.
  The profiles $(a,c)$ and $(b,d)$ are two (subgame perfect) Nash equilibria which make $\phi_1$ hold. 
  However the profile $(a,d)$ obtained by picking one strategy in each profile, is no longer a Nash equilibrium (and so not a subgame perfect equilibrium).
  Therefore $\RS^\forall(\NE)$\ and $\RS^\forall(\SPE)$ are not rectangular.

  \fbox{$\RS^\exists(\NE)$\ and $\RS^\exists(\SPE)$} 
  Consider the game represented in \figurename~\ref{fig:ex-spe}.
  Player~1 has only one strategy and the other players have two possible strategies: $a$ and $b$ for \player{2} and $c$ and $d$ for \player{3}.
  The profiles $(a,c)$ and $(b,d)$ are two (subgame perfect) Nash equilibria which make $\phi_1$ hold. 
  However the profile $(a,d)$ obtained by taking one strategy in each profile, is no longer winning for \player{1}.
  Therefore $\RS^\exists(\NE)$\ and $\RS^\exists(\SPE)$ are not rectangular.

  \fbox{\Coop} 
  Once again, consider the game represented in \figurename~\ref{fig:ex-spe}.
  The profiles $(a,c)$ and $(b,d)$ make all players win, but the profile $(a,d)$, is no longer winning for the \player{1}, so it is not a solution of \Coop.
  Therefore \Coop is not rectangular.

  \fbox{$\AG^\lor$ and $\AG^\land$} 
  Consider the game represented in \figurename~\ref{fig:ex-ag}.
  The profiles $(a,c)$ and $(b,d)$ make the two players win.
  Since all possible outcome of the game satisfy the implications $\phi_1\Rightarrow \phi_2$ and $\phi_2\Rightarrow \phi_1$, both profiles  are solution to $\AG^\lor$ and $\AG^\land$ (note that the two concepts coincide here because there are only two players).
  However the profile $(a,d)$ obtained by taking one strategy in each profile, is no longer winning for the \player{1}.
  Therefore $\AG^\lor$ and $\AG^\land$ are not rectangular.
  
\end{proof}

\section{Complements on Algorithm for Assume-Admissible Synthesis  (Section~\ref{section:algo-aa})}
In this section, we recall some detailed results on values and
admissible strategies, and give the details of the algorithm for assume-admissible synthesis.

\subsection{Values of Histories and Admissibility}
For the proofs, we actually need a more refined notion of value, which coincides
with the notion defined in the core of our paper for prefix-independent objectives.
Let us give the formal definition from~\cite{berwanger07}.

We fix a game~$\Game$.
Given a history $h$, and a set of strategies $\Sigma_i'$ for player~$i$, 
we write $\Sigma'_i(h)$ for the set of strategies of $\Sigma'_i$ compatible with $h$, that is,
the set of strategies $\sigma_i$ such that $h\in \outcome_\Game(\sigma_i)$.
We also write $\Sigma'(h)$ for $\prod_{i\in\Agt} \Sigma'_i(h)$.

\begin{definition}[Value~\cite{berwanger07}]
	\label{def:value}
  Let $\Sigma'$ be a rectangular set of strategy profiles.
  The \newdef{value of history~$h$ for \player{i} with respect to $\Sigma'$}, written $\val_i(\Sigma',h)$, is given by:
  \begin{itemize}
  \item if every $\sigma_{\Agt} \in \Sigma'(h)$ 
    is losing for \player{i} then $\val_i(\Sigma',h)=-1$;
  \item if there is a strategy of $\sigma_i \in \Sigma'_i(h)$ such that for all strategy profiles $\sigma_{-i}$ in $\Sigma'_{-i}(h)$, 
    the profile $(\sigma_i,\sigma_{-i})$ is winning for \player{i} then $\val_i(\Sigma',h)=1$;
  \item otherwise $\val_i(\Sigma',h)=0$;
  \end{itemize}
\end{definition}

We use the shorthand notation $\val_i(h) = \val_i(\Sigma,h)$. 
Notice that the value only depends on the last state for prefix-independent objectives.
We may thus write $\val_i(s) = \val_i(h)$ for $s = \last(h)$; observe that this is the notation
we use in the core of the paper.

We noted in Lemma~\ref{lemma:nodecrease} that admissible strategies do not make the player's value decrease.
We prove here that conversely, any winning run on which player~$i$ does not decrease its own value is compatible with an admissible strategy of player~$i$.
\begin{lemma}\label{lem:admissible-path}
  Let $i$ be a player, and~$\rho$ a history.
  If $\rho \models \phi_i$ and \player{i} does not decrease its own value in any prefix of~$\rho$,
  then there exists a strategy profile $(\sigma_i,\sigma_{-i}) \in \Adm_i \times \Sigma_{-i}$ such that $\rho$ is the outcome of $(\sigma_i,\sigma_{-i})$.
\end{lemma}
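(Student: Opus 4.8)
The plan is to build an admissible strategy $\sigma_i$ that follows $\rho$. Define $\sigma_i$ so that on every prefix of $\rho$ ending in $\Stat_i$ it plays the action taken by $\rho$ there (hence $\sigma_i$ is compatible with $\rho$), and on every history $h$ that is not a prefix of $\rho$ it plays $\tau^s_i(g)$, where $s$ is the first state of $h$ lying off $\rho$, $g$ is the suffix of $h$ from that occurrence of $s$, and $\tau^s_i$ is a fixed admissible strategy of the subgame rooted at $s$ (which exists by~\cite{berwanger07}); we work with prefix-independent objectives (as for Muller conditions) and the refined value of Definition~\ref{def:value}, so that values of histories depend only on what the players can still force. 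Two preliminary facts are easy: first, $\rho$ never visits a state of value $-1$, since the profile ``everyone follows $\rho$'' is compatible with every prefix of $\rho$ and produces $\rho\models\phi_i$; second, $\sigma_i$ is value-preserving for player~$i$ — along $\rho$ this is the hypothesis, and after a switch to some $\tau^s_i$ it follows from Lemma~\ref{lemma:nodecrease} applied in the subgame rooted at $s$ (whose values coincide with those of $\Game$).

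The core of the argument is that $\sigma_i$ is admissible. Suppose not; by~\cite[Thm.~11]{berwanger07} there is an admissible $\sigma_i'$ dominating $\sigma_i$, so some profile $\sigma_{-i}$ gives $\outcome_\Game(\sigma_i',\sigma_{-i})\models\phi_i$ while $\pi:=\outcome_\Game(\sigma_i,\sigma_{-i})\not\models\phi_i$. Since $\rho\models\phi_i$, $\pi$ must leave $\rho$: fix $l$ with $\pi_{\le l}=\rho_{\le l}$ and $\pi_{l+1}\ne\rho_{l+1}$, and set $s=\pi_{l+1}$; as $\sigma_i$ would keep $\pi$ on $\rho$ at a player-$i$ state, we have $\rho_l\notin\Stat_i$. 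Because $\pi$ and $\outcome_\Game(\sigma_i',\sigma_{-i})$ are distinct outcomes against the same $\sigma_{-i}$, they first differ at a player-$i$ state: there is $m$ with identical prefixes up to $m$, $\pi_m\in\Stat_i$ and $\sigma_i(\pi_{\le m})\ne\sigma_i'(\pi_{\le m})$, and $m\ne l$ by the previous sentence. I would then derive a contradiction by producing, in each of the cases $m<l$ and $m>l$, a ``revenge'' profile $\sigma_{-i}''$ with $\outcome_\Game(\sigma_i,\sigma_{-i}'')\models\phi_i$ and $\outcome_\Game(\sigma_i',\sigma_{-i}'')\not\models\phi_i$, contradicting that $\sigma_i'$ dominates $\sigma_i$.

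If $m<l$ then $\pi_{\le m}=\rho_{\le m}$ with $\rho_m\in\Stat_i$. The value of $\rho_m$ cannot be $1$: from $\rho_{\le m}$ on, $\sigma_i$ follows $\rho$ and, whenever the adversary leaves $\rho$, it lands in the value-$1$ trap where $\sigma_i$'s switched-to admissible strategy is winning, so $\sigma_i$ would be winning and $\pi\models\phi_i$. Hence $\val_i(\rho_m)=0$, so by value-preservation of the admissible $\sigma_i'$ the state $\delta(\rho_m,\sigma_i'(\rho_{\le m}))$ has value $0$, meaning $\sigma_i'$ is not winning from it and some adversary continuation $\sigma_{-i}^{\mathrm{lose}}$ makes it lose. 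Let $\sigma_{-i}''$ agree with $\sigma_{-i}$ up to time $m$ and then, reading off which action was taken at $\rho_m$, steer onto $\rho$ on the $\sigma_i$-branch and play $\sigma_{-i}^{\mathrm{lose}}$ on the $\sigma_i'$-branch: this gives $\outcome_\Game(\sigma_i,\sigma_{-i}'')=\rho\models\phi_i$ and $\outcome_\Game(\sigma_i',\sigma_{-i}'')\not\models\phi_i$. If $m>l$ then $\pi_{\le m}$ has already left $\rho$ at $s$, so $\sigma_i(\pi_{\le m})$ is dictated by $\tau^s_i$; letting $\tau'_i$ be the strategy of the subgame rooted at $s$ induced by $\sigma_i'$ from that occurrence of $s$, the restrictions of $\pi$ and $\outcome_\Game(\sigma_i',\sigma_{-i})$ show that $\tau'_i$ beats $\tau^s_i$ against the restricted adversary, and since $\tau^s_i$ is admissible there it is not dominated by $\tau'_i$; the resulting adversary strategy of the subgame, lifted by first steering the play to $s$ as in $\pi$, yields the desired $\sigma_{-i}''$.

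The main obstacle is precisely this admissibility proof: the bookkeeping that locates the divergence point of $\sigma_i,\sigma_i'$ relative to where the play leaves $\rho$, and the transfer of (non-)domination between $\Game$ and the subgame rooted at $s$; the value-preservation part is routine given Lemma~\ref{lemma:nodecrease}. As a sanity check, the statement also drops out of Lemma~\ref{lemma:adm-outcomes}: under the hypotheses $\rho$ satisfies $\G(E_i)$ and, since $\rho\models\phi_i$, trivially both implications defining $\Phi_i$, hence $\rho\in\Out_\Game\cap\Phi_i=\Out_\Game(\Adm_i,\Sigma_{-i})$ and so is an outcome of some admissible strategy of player~$i$ — though this shortcut is only legitimate if Lemma~\ref{lemma:adm-outcomes} is established without appealing to the present lemma.
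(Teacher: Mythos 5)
Your construction of the witness profile is exactly the paper's (follow $\rho$; upon a deviation landing at $s$, switch to a fixed non-dominated strategy of the game re-rooted at $s$), but you certify admissibility by a genuinely different argument. The paper does not reason about an explicit dominating strategy at all: it invokes the characterization of admissibility from \cite[Lem.~9]{berwanger07} (a strategy is admissible iff, at every history it is compatible with, its ``strategic value'' w.r.t.\ $\{\sigma_i\}\times\Sigma_{-i}$ is at least the value w.r.t.\ $\Sigma$), and then checks this value inequality in two cases (before/after a deviation), using that $\rho$ itself witnesses value $\geq 0$ along its prefixes and that the switched-to strategy inherits the required value by shifting. You instead assume a dominating admissible $\sigma_i'$ exists (via \cite[Thm.~11]{berwanger07}), locate the first divergence of the two outcomes relative to the point where the play leaves $\rho$, and in each case build a ``revenge'' adversary, using Lemma~\ref{lemma:nodecrease} on $\sigma_i'$ in the $m<l$ case and a dominance-transfer to the subgame rooted at $s$ in the $m>l$ case. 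Your route is more self-contained (it only needs existence of dominating admissible strategies and of admissible strategies in subgames) but longer, and it silently uses two facts you should at least record: that admissible strategies from value-$1$ states are winning (needed to rule out $\val_i(\rho_m)=1$), and that admissibility in the subgame rooted at $s$ coincides with non-dominance relative to the shifted strategy set $\Sigma(h_{\le l+1})$ — the same shifting property the paper cites. Finally, your ``sanity check'' via Lemma~\ref{lemma:adm-outcomes} is in fact a legitimate two-line proof here, since the paper establishes that lemma directly from \cite{BRS14} without appealing to the present statement (note that non-decrease of one's own value is automatically equality, since a player can never increase its own value, so $\rho\models\G(E_i)$ does follow from the hypothesis).
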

\begin{proof}
  We define the strategies $\sigma_i$ and $\sigma_{-i}$ to follow $\rho$ when possible (if the current history is a prefix of $\rho$, then proceed to the following state of $\rho$), and if a deviation has occurred in $\rho$, that is there is $k$ such that $h_k = \rho_k$, $h_{k+1} \ne \rho_{k+1}$, then starting from $h_{\le k+1}$, $\sigma_i$ follows a non-dominated strategies with respect to 
$\Sigma(h_{k+1})$ (Thanks to \cite{berwanger07} and the fact that this set allows shifting~\cite{berwanger07} such a non-dominated strategies exists).
  The run $\rho$ is obviously an outcome of this profile. 
  We now have to show that the strategy $\sigma_i$ that we define is admissible.
  According to \cite[Lem.~9]{berwanger07}, it is enough to show that for every history $h$ outcome of $\sigma_i$, the value for \player{i} with respect to $\{\sigma_i\}\times \Sigma_{-i}$ is greater or equal to that of $\Sigma$.

  Let $h$ be a finite outcome of $\sigma_i$.
  We distinguish the case where $h$ has deviated from $\rho$ and the case where it has not.

  If a deviation has occurred, then $\sigma_i$ follows a strategy non dominated with respect to $\Sigma(h_{\le k+1})$ where $k$ is the last index where $h_k=\rho_k$.
  By \cite[Lem.~9]{berwanger07}, the value of $\{\sigma_i\}\times \Sigma_{-i}(h)$ in $h$ is greater or equal to that of $\Sigma(h_{\le k+1})$.
  Since $\Sigma_{-i}(h) \subseteq \Sigma_{-i}(h_{\le k+1})$, the value of $h$ with respect to $\{\sigma_i\}\times \Sigma_{-i}(h)$ is greater or equal to that with respect $\Sigma(h)$.
  Note that by the definition of the value, the value of $h$ with respect to a rectangular set $\Sigma'$ is equal to that of $h$ with respect to $\Sigma'(h)$.
  Therefore the value of $h$ with respect with $\{\sigma_i\} \times \Sigma$ is greater or equal to that with respect to $\Sigma$.

  If a deviation has not occurred then $h$ is a prefix of $\rho$.
  The value of $h$ with respect to $\Sigma$ is greater or equal to 0 since $\rho$ is winning for $\phi_i$.
  Then:
  \begin{itemize}
  \item if the value is 0, then as there is an outcome of $\sigma_i$ after this history which is winning (the run~$\rho$), the value of $\sigma_i$ is at least 0;
  \item if the value is 1, then we can show that from this state $\sigma_i$ plays a winning strategy: if we stay along $\rho$, the run is winning, if we deviate, the run goes to a state of value 1 (\player{i} has not decreased its own value and from a state of value 1 for \player{i}, any action of the adversaries lead to a state of value 1), and therefore $\sigma_i$ revert to a winning strategy.
  \end{itemize}
  Therefore the property is satisfied by $\sigma_i$ and it is admissible.  
  
\end{proof}

We now prove Lemma~\ref{lemma:adm-outcomes} characterizing the outcomes of admissible strategies.
This result follows from~\cite{BRS14}, but we adapt it to make the formulas~$\Phi_i$ appear explicitly.

\lmphii* 
\begin{proof}
  In~\cite[Lemma~6]{BRS14}, an automaton~$\calA_i^1$ is defined such that
  $\mathcal{A}_i^1 \cap \Out_\Game(\Sigma) = \Out_\Game(\Adm_i,\Sigma_{-i})$.
  (Note that a more general construction $\calA_i^n$ was given in~\cite{BRS14}; we only need the case $n=1$ here.)

  We now analyze further the language of $\mathcal{A}_i^1$. 
  The edges are those of $\G$ except for edges outside of $E_i$ (these edges are noted $T$ in \cite{BRS14}), so the set of runs in $\mathcal{A}_i^1$ corresponds to $\Out_\Game \cap \G(E_i)$.
  Now a run of $\mathcal{A}_i^1$ is accepted if, and only if one the following condition is satisfied, writing $\textsf{VR}(\rho)$ for the sequence $(\val(\rho_i))_{i\in \mathbb{N}}$:
  \begin{itemize}
  \item $\textsf{VR}(\rho) \in 0^* (-1)^\omega$;
  \item $\textsf{VR}(\rho) \in 0^* 1^\omega$ and $\rho \models \phi_i$;
  \item $\textsf{VR}(\rho) \in 0^\omega$ and $\rho \models \phi_i$ or $\rho \models \G \F(H_i)$.
  \end{itemize}
  
  Any run of $\Out_\Game \cap \G(E_i)$ reaching some state of value $-1$ is necessarily losing; thus all successors also have value $-1$.
  Similarly, because we removed edges where \player{i} decreases its own value, once the runs reaches a state of value $1$, it never gets out of these states.
  Therefore runs of $\Out_\Game \cap \G(E_i)$ have one of the three forms: $0^*(-1)^\omega$, $0^* 1^\omega$ or $0^\omega$.

  Let $\rho$ be a run that is accepted by $\mathcal{A}_i^1$, it satisfies $\G(E_i)$ and:
  \begin{itemize}
  \item if $\rho$ ends in the states of value $-1$ then it does not visit $V_{i,1}$ or $V_{i,0}$ infinitely often and thus belongs to $\Phi_i$;
  \item if $\rho$ ends in the states of value $1$, then by the acceptance condition it satisfies $\phi_i$
    and thus belongs to $\Phi_i$;
  \item otherwise it stays in the states of value $0$, then by the acceptance condition, either it satisfies $\phi_i$ or $\G\F(H_i)$ and thus belongs to $\Phi_i$.
  \end{itemize}

  Now let $\rho$ be a run that satisfies $\phi_i$, it satisfies $\G(E_i)$ and therefore corresponds to a valid run of $\mathcal{A}_i^1$.
  \begin{itemize}
  \item If $\rho$ ends in the states of value $-1$ then condition $\textsf{VR}(\rho) \in 0^* (-1)^\omega$ is satisfied, thus $\rho$ is accepted by $\mathcal{A}_i^1$.
  \item If $\rho$ ends in the states of value $1$, then by definition of $\Phi_i$ it satisfies $\phi_i$ and condition $\textsf{VR}(\rho) \in 0^* 1^\omega$ and $\rho \models \phi_i$ is satisfied, thus $\rho$ is accepted by $\mathcal{A}_i^1$.
  \item Otherwise it stays in the states of value $0$, then by definition of $\Phi_i$, either $\phi_i$ or $\G\F(H_i)$ holds for $\rho$, hence $\textsf{VR}(\rho) \in 0^\omega$ and $\rho \models \phi_i$ or $\rho \models \G \F(H_i)$ is satisfied, thus $\rho$ is accepted by $\mathcal{A}_i^1$.
  \end{itemize}


  This shows that $\Phi_1\cap \Out_\Game = \mathcal{A}^1_i \cap \Out_\Game$ and by \cite[Lemma~6]{BRS14}, this equals $\Out(\Adm_i,\Sigma_{-i})$.
  
\end{proof}

\subsection{Algorithm}
We start by a characterization of games with \AA-winning strategy profiles.
An assume-admissible winning strategy ensures its objective against all admissible strategies of other players;
we formalize this in the following lemma using values (see
		Definition~\ref{def:value}).

\begin{lemma}\label{lem:aa-val}
  For all game $\Game$,
  there is an \AA-winning strategy profile 
  if, and only if for all players~$i$, $\val_i((\Adm_j(\Game))_{j\in\Agt}, \sinit)= 1$.
\end{lemma}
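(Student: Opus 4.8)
The plan is to characterize $\AA$-winning strategies via the value notion of Definition~\ref{def:value}, applied to the rectangular set $\Sigma' = \adm_\Agt(\Game) = \prod_{j\in\Agt}\adm_j(\Game)$. First I would observe that this set is indeed rectangular (it is a product by definition) and that it "allows shifting" in the sense needed by~\cite{berwanger07}, so that $\val_i(\adm_\Agt(\Game), \sinit)$ is well-defined; for prefix-independent objectives it depends only on $\sinit$, and in general we take the refined history-based definition. The key translation is: a player-$i$ strategy $\sigma_i$ is $\AA$-winning exactly when (a) $\sigma_i \in \adm_i(\Game)$, i.e. $\sigma_i \in \Sigma'_i(\sinit)$, and (b) for every $\sigma_{-i}' \in \adm_{-i}(\Game) = \Sigma'_{-i}(\sinit)$ the profile $(\sigma_i,\sigma_{-i}')$ satisfies $\phi_i$. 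By the middle clause of Definition~\ref{def:value}, the existence of such a $\sigma_i$ is precisely the statement $\val_i(\adm_\Agt(\Game),\sinit) = 1$.

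With that dictionary in hand, the proof splits into the two directions. For the forward direction, suppose $\sigma_\Agt$ is an $\AA$-winning profile. Fix a player $i$. By condition~1 each $\sigma_j \in \adm_j(\Game)$, so $\sigma_i \in \Sigma'_i(\sinit)$; by condition~2, for all $\sigma_{-i}' \in \adm_{-i}(\Game) = \Sigma'_{-i}(\sinit)$ we have $\Game,\sigma_{-i}',\sigma_i \models \phi_i$, i.e. $(\sigma_i,\sigma_{-i}')$ is winning for player~$i$. This is exactly the witness required for $\val_i(\adm_\Agt(\Game),\sinit)=1$. Since $i$ was arbitrary, the right-hand side holds.

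For the converse, assume $\val_i(\adm_\Agt(\Game),\sinit) = 1$ for every player~$i$. By definition this gives, for each $i$, a strategy $\sigma_i \in \adm_i(\Game)$ (i.e.\ admissible, hence satisfying condition~1) such that for all $\sigma_{-i}' \in \adm_{-i}(\Game)$ the profile $(\sigma_i,\sigma_{-i}')$ is winning for $\phi_i$ (condition~2). Collecting these $\sigma_i$ into a profile $\sigma_\Agt$, both conditions of the $\AA$ rule are met, so $\sigma_\Agt$ is $\AA$-winning. Here it is essential that the choices of the $\sigma_i$ across different players are independent — this is exactly the rectangularity of $\adm_\Agt(\Game)$, which lets us assemble the per-player witnesses into a single profile; this is the same phenomenon established more generally in Theorem~\ref{thm:rectangular}.

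The only genuinely delicate point — and the one I would expect to be the main obstacle — is justifying that Definition~\ref{def:value} may legitimately be instantiated with $\Sigma' = \adm_\Agt(\Game)$: one must check that this set is rectangular and "allows shifting" (closure under the operations on strategies used in~\cite{berwanger07}), so that the value is well-behaved and the characterizations of~\cite{berwanger07,BRS14} apply. Once that is in place, everything else is an unwinding of definitions. I would handle this by citing the relevant closure properties of admissible strategies from~\cite{berwanger07} (the same ones invoked in the proof of Lemma~\ref{lem:admissible-path}), rather than re-deriving them.
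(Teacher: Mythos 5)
Your proposal is correct and follows essentially the same route as the paper: both unfold Definition~\ref{def:value} at $\sinit$ for the rectangular set $\prod_j \adm_j(\Game)$ into the statement ``there exists $\sigma_i \in \adm_i(\Game)$ winning against all of $\adm_{-i}(\Game)$,'' and then both directions are definition-unwinding. The ``delicate point'' you flag is a non-issue here — $\adm_\Agt(\Game)$ is rectangular by construction and the value definition needs nothing more — and the paper accordingly does not dwell on it.
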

\begin{proof}
  By definition of the value we have for all player~$i$, that:
  \begin{equation}\label{eq:val}
    \val_i((\Adm_i(\Game))_{i\in\Agt},\sinit) = 1 ~~ \Leftrightarrow  ~~ \exists \sigma_i \in \Adm_i(\Game).\ \forall \sigma_{-i} \in \Adm_{-i}(\Game).\ \Game,\sigma_i,\sigma_{-i} \models \win{i}
  \end{equation}
 
  \fbox{$\Rightarrow$}
  If $\sigma_\Agt$ is a witness for Assume-Admissible then for all players~$i$, $\sigma_i \in \Adm_i(\Game)$ is such that for all $\sigma_{-i} \in \Adm_{-i}(\Game)$, $\Game,\sigma_i,\sigma_{-i} \models \win{i}$.
  Hence by \eqref{eq:val}, for all players~$i$, $\val_i((\Adm_i(\Game))_{i\in\Agt},\sinit) = 1$.

  \fbox{$\Leftarrow$}
  Assume that for all players~$i$, $\val_i((\Adm_i(\Game))_{i\in\Agt},\sinit) = 1$.
  By \eqref{eq:val} we have that for all players~$i$, there exist $\sigma_i\in \Adm_i(\Game)$ such that $\forall \sigma_{-i} \in \Adm_{-i}(\Game).\ \Game,\sigma_i,\sigma_{-i} \models \win{i}$.
  Let $\sigma_\Agt$ be a strategy profile made of such strategies $\sigma_i$.
  We have that for all player $i$, $\sigma_i \in \Adm_i(\Game)$ (condition~1 in the definition of Assume-Admissible) and moreover for all $\sigma'_{-i} \in \Adm_{-i}(\Game)$, $\Game,\sigma_i,\sigma'_{-i} \models \win{i}$ (condition~2 in the definition of Assume-Admissible).
  Hence the strategy profile $\sigma_\Agt$ constitutes a solution to Assume-Admissible.
  
\end{proof}

The algorithm to synthesize \AA-winning strategies uses procedures from \cite{BRS14}, originally
developed to compute the \emph{outcomes} that survive the \emph{iterative elimination} of dominated strategies.
More precisely, the elimination procedure of~\cite{BRS14} first computes the outcomes of admissible strategies;
from this it deduces the strategies that are not dominated when all players are restricted to admissible strategies, and their possible outcomes;
and this is repeated until the set of outcomes stabilizes.
In the end, one obtains the set of the runs that are the outcomes of strategy profiles that have survived this iterative elimination.

Here, we roughly consider the first iteration of the above procedure, and explicitly give algorithms to actually synthesize
strategies that are winning against admissible strategies.

We explain the proof of Lemma~\ref{lemma:aa-omega} which follows from~\cite{BRS14}.
\lmomegai*
\begin{proof}
  It is shown in \cite[Prop.~5]{BRS14} that a strategy of \player{i} is a strategy of $\stratset^{n}_i$ which is winning from state $s$ against all strategies of $\stratset^{n}_{-i}$ if, and only if, it is winning for objective~$\Omega_i^n(s)$ (where $\stratset^{n}$ is the set of strategy that remain after $n$ step of elimination).
  The results immediately follows of the case $n=1$.
  
\end{proof}

We will now establish the correctness of the game~$\Game_i'$.
Let us first formalize the correspondence between $\Game$ and~$\Game_i'$. 
We define relation~$\mathord{\sim} \subseteq \Stat \times \Stat'$: for all $s \in \Stat \times \{\bot,0,\top\}$, $s \sim (s,x)$.
We extend this to runs by $\rho \sim \rho'$ iff for all $i \in \mathbb{N}$, $\rho_i \sim \rho'_i$.
In fact the relation defines a bijection:
\begin{lemma}
  For any $\rho \in \Out_\Game$ there is a unique $\rho' \in \out_{\Game_i'}$ such that $\rho \sim \rho'$.
\end{lemma}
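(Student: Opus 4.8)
The plan is to observe that $\Game_i'$ is, like $\Game$, a \emph{deterministic} arena, so that a run is pinned down by its initial state together with its action sequence, and then to build $\rho'$ by a straightforward induction. First I would check that the transition relation of $\Game_i'$ is a genuine function $\Stat' \times \Act \to \Stat'$: from $(s,\bot)$ (resp.\ $(s,\top)$) the successor under $a$ is $(\delta(s,a),\bot)$ (resp.\ $(\delta(s,a),\top)$), since the second component is absorbing there; and from $(s,0)$, because $s$ belongs to exactly one $\Stat_k$, at most one of the two ``violation'' clauses can fire — the clause ``$a \notin E_i$'' requires $s \in \Stat_i$, while ``$a \notin E_j$ for some $j \ne i$'' requires $s \in \Stat_j$ with $j \ne i$ — so $\delta'((s,0),a)$ is unambiguously one of $(\delta(s,a),0)$, $(\delta(s,a),\bot)$, $(\delta(s,a),\top)$.

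For existence, I would write $\rho = s_1 a_1 s_2 a_2 \cdots$ with $s_1 = \sinit$ and set $\rho' = (s_1,x_1)\,a_1\,(s_2,x_2)\,a_2\cdots$, where $x_1 = 0$ and $(s_{k+1},x_{k+1}) = \delta'((s_k,x_k),a_k)$ for every $k$ (equivalently, $x_{k+1}$ is $x_k$ updated according to whichever of the clauses above applies at $s_k,a_k$). By construction $\rho'$ respects $\delta'$ and starts at $(\sinit,0)$, the initial state of $\Game_i'$, so $\rho' \in \out_{\Game_i'}$; and since its $k$-th state projects to $s_k = \rho_k$, we have $\rho \sim \rho'$. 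For uniqueness, take any $\rho'' \in \out_{\Game_i'}$ with $\rho \sim \rho''$; then $\rho''$ has the same action sequence as $\rho$, and its $k$-th state is $(s_k,y_k)$ for some flags $y_k \in \{\bot,0,\top\}$. Since $\rho''$ starts at $(\sinit,0)$ we have $y_1 = 0 = x_1$; and if $y_k = x_k$ then $(s_{k+1},y_{k+1}) = \delta'((s_k,y_k),a_k) = \delta'((s_k,x_k),a_k) = (s_{k+1},x_{k+1})$ by determinism of $\delta'$, so $y_{k+1} = x_{k+1}$. Induction gives $\rho'' = \rho'$.

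The only delicate point — and the one I would state explicitly rather than leave implicit — is the mutual exclusivity of the two violation clauses at a state $(s,0)$, which is exactly the remark that $E_i$ can be violated only from states of $\Stat_i$ and $E_j$ only from states of $\Stat_j$; once this is in place, $\delta'$ is a function and the remainder is a bookkeeping induction. I would also record that, as extended to runs, $\sim$ preserves the action sequence — without that, uniqueness could fail whenever two distinct actions out of some $s_k$ happen to share a successor in $\Game$.
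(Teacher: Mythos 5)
Your proof is correct, and it is more complete than the paper's in one respect: the paper's own argument only establishes uniqueness (by contradiction), leaving existence implicit, whereas you construct the lift explicitly. Where you diverge is in the uniqueness step. You derive it from the determinism of $\delta'$ --- which, as you rightly note, requires first checking that the two violation clauses at a state $(s,0)$ are mutually exclusive (they are, since $(s,a)\notin E_j$ forces $s\in\Stat_j$), and requires that $\sim$ preserve the action sequence. The paper instead argues by contradiction at the first position where two lifts' state sequences disagree: the disagreement can only be in the flag, the flag update depends only on whether the action taken lies in the relevant $E_j$, and membership in $E_j$ is determined by $\val_j$ of the source and target states --- so two actions from $\rho_k$ with the same successor $\rho_{k+1}$ are either both in $E_j$ or both outside it. This buys the paper robustness against exactly the scenario you flag at the end: under the literal definition of $\sim$ (which constrains only the state sequence), two lifts could in principle use different actions sharing a successor, and the paper's value-based argument shows the resulting flags still agree. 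Your caveat that $\sim$ must preserve actions is therefore the right thing to make explicit if one wants uniqueness of the run as a sequence of states \emph{and} actions; the paper's argument secures agreement of the annotated state sequences even without that assumption.
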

\begin{proof}
  Assume towards a contradiction that we have $\rho'$ and $\rho''$ such that $\rho = \pi(\rho') = \pi(\rho'')$.
  Let $i$ be the last state such that they coincide: $\rho'_i = \rho''_i$ and $\rho'_{i+1} \ne \rho''_{i+1}$.
  Since $\pi(\rho') = \pi(\rho'')$ we have that they differ only by the second component.
  We can assume without loss of generality that there is are actions $a$ and $b$ such that $(\rho_i,a) \in E_j$ (where \player{j} controls $\rho_i$), $(\rho_i,b) \not\in E_j$ and $\delta(\rho_i,a)= \rho_{i+1} = \delta(\rho_i,b) $.
  This means that there is are actions $a$ and $b$ such that $(s,a) \in E_j$ (where \player{j} controls $\rho_i$), $(s,b) \not\in E_j$ and $\delta(\rho_i,a)= \rho_{i+1} = \delta(\rho_i,b) $.
  We have $\delta(\rho_i, b) = \delta(\rho_i,a)$, then by definition of $E_j$ and because $(\rho_i,a) \in E_j$, $\val_j(\delta(s,a)) = \val_j(s)$ therefore $\val_j(\delta(s,b)) = \val_j(s)$ and by definition of $E_j$, $(\rho_i,b)$ belongs to $E_j$ which contradicts our assumptions and ends the proof.
  
\end{proof}

We write $\pi$ the bijection which, to $\rho' \in \Out_{\Game_i'}$ associates $\rho \in \Out_{\Game}$ with $\rho \sim \rho'$.
We extend $\pi$ as a mapping from strategies of $\Game'_i$ to strategies of $\Game$ by $\pi(\sigma'_i) (h) = \sigma'_i(\pi^{-1}(h))$.
We can notice that for all strategies $\sigma'_i$, $\pi(\Out_{\Game_i'}(\sigma'_i)) = \Out_\Game(\pi(\sigma'_i))$.

\begin{lemma}\label{lem:game-prime-i}
  Let $\Game$ be a game, and $i$ a player. Player~$i$ has a winning strategy for $\Omega_i$ in~$\Game$ if, and only if, he has a winning strategy
  for $\Omega'_i$ in $\Game'_i$.
  Moreover if $\sigma'_i$ is winning for $\Omega'_i$ in $\Game'_i$ then $\pi(\sigma'_i)$ is winning for $\Omega_i$ in~$\Game$.
\end{lemma}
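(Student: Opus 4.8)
The plan is to show that the bijection $\pi$ between $\Out_{\Game_i'}$ and $\Out_\Game$ sends the objective $\Omega_i'$ exactly onto $\Omega_i$; once that is established, both implications and the ``moreover'' clause follow at once from the facts recalled just before the statement, namely that $\pi$ is a bijection and that $\pi(\Out_{\Game_i'}(\sigma_i')) = \Out_\Game(\pi(\sigma_i'))$ for every strategy $\sigma_i'$. As a preliminary I would rewrite $\Omega_i$: every run from $\sinit$ is an outcome of some profile, so Lemma~\ref{lemma:adm-outcomes} gives $\Out_\Game(\Adm_i) = \Phi_i$; and since compatibility of a run with a profile $(\sigma_j)_{j\neq i}$ means compatibility with each $\sigma_j$ separately, $\Out_\Game(\Adm_{-i}) = \bigcap_{j\neq i}\Out_\Game(\Adm_j) = \bigcap_{j\neq i}\Phi_j$. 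Hence, on the runs of $\Game$,
\[
  \Omega_i \;=\; \Phi_i \cap \bigl(\textstyle\bigcap_{j\neq i}\Phi_j \Rightarrow \phi_i\bigr), \qquad \text{where } \Phi_k = \G(E_k)\wedge M_k,
\]
with each $M_k$ prefix-independent; in particular $\Omega_i \subseteq \Phi_i \subseteq \G(E_i)$, so any strategy winning for $\Omega_i$ respects $E_i$ on all its outcomes.

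The heart of the argument is the pointwise equivalence $\rho' \in \Omega_i' \Leftrightarrow \pi(\rho') \in \Omega_i$ for every $\rho' \in \Out_{\Game_i'}$, which I would prove by a case split on the eventual value of the second component of $\rho'$ (well defined, since that component starts at $0$ and $\bot,\top$ are absorbing); write $\rho = \pi(\rho')$. \emph{Case $0$:} the component stays at $0$ iff every move along $\rho$ is value-preserving for the player who makes it, i.e.\ iff $\rho$ respects every $E_k$; then $\Phi_k$ and $M_k$ agree on $\rho$ for all $k$, the $\top$-disjunct of $\Omega_i'$ cannot hold, and the $0$-disjunct reads $\rho \models M_i \wedge (\bigwedge_{j\neq i}M_j \Rightarrow \phi_i)$, which is exactly $\rho \models \Omega_i$. \emph{Case $\bot$:} the component reaches $\bot$ iff player $i$ plays outside $E_i$ somewhere along $\rho$, so $\rho \notin \G(E_i)$ and hence $\rho \notin \Omega_i$, while no disjunct of $\Omega_i'$ holds, so $\rho' \notin \Omega_i'$. \emph{Case $\top$:} the component reaches $\top$ iff some $j \neq i$ plays outside $E_j$ while player $i$ never does; then $\rho \models \G(E_i)$, but $\rho \not\models \Phi_j$ for that $j$, so the implication in $\Omega_i$ is vacuous and $\rho \models \Omega_i \Leftrightarrow \rho \models \Phi_i \Leftrightarrow \rho \models M_i$, which is precisely the content of the only satisfiable disjunct of $\Omega_i'$. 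The routine part here is checking that $M_k'$ and $\phi_i'$ are the $\pi$-preimages of $M_k$ and $\phi_i$, which is immediate from their definitions.

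Given this equivalence, the two directions are formal: if $\sigma_i$ wins for $\Omega_i$ in $\Game$, its $\pi$-preimage $\sigma_i'$ (which exists because $\pi$ is a bijection on histories, hence on strategies) has all its outcomes projecting into $\Omega_i$, hence lying in $\Omega_i'$; and if $\sigma_i'$ wins for $\Omega_i'$ in $\Game_i'$, then every outcome of $\pi(\sigma_i')$ is the projection of an $\Omega_i'$-run, hence lies in $\Omega_i$ --- which is exactly the ``moreover'' clause. I expect the main obstacle to be Case $\top$ of the pointwise equivalence: there $\Omega_i'$ only records the prefix-independent part $M_i$, and one must use the specific way the second component is updated --- player $i$'s own value-violations are flagged $\bot$, not $\top$ --- to recover the safety part $\G(E_i)$ of $\Phi_i$ for runs settling at $\top$. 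Pinning down the exact correspondence between the three regimes of the second component and the patterns of violations of the $E_k$'s is the place where the definition of $\Game_i'$ must be handled with care.
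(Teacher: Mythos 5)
Your proof is correct and essentially identical to the paper's: the same rewriting of $\Omega_i$ via Lemma~\ref{lemma:adm-outcomes} into $\G(E_i)\land M_i\land\bigl((\bigwedge_{j\ne i}M_j\Rightarrow\phi_i)\lor\bigvee_{j\ne i}\F(\lnot E_j)\bigr)$, and the same three-way case analysis on the eventual value of the second component, merely repackaged as a single pointwise equivalence $\rho'\in\Omega_i'\Leftrightarrow\pi(\rho')\in\Omega_i$ rather than redone inside each direction. The one caveat --- your Case-$\top$ biconditional tacitly assumes that a run whose flag settles at $\top$ never sees player~$i$ leave $E_i$, whereas the flag records only the \emph{first} violation and is then absorbing --- is glossed over in exactly the same way by the paper, which asserts that avoiding $\bot$ forces $\G(E_i)$; so on this point you are no less rigorous than the source.
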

\begin{proof}
  We will first rewrite $\Omega_i$ in a form that is closer to that of $\Omega'_i$.
  The objective $\Omega_i$ is defined by $\Out_\Game(\Adm_i) \cap (\Out_\Game(\Adm_{-i}) \Rightarrow \phi_i)$.
  Observe that $\Out_\Game(\Adm_{-i}) = \cap_{j \neq i} \Out_\Game(\Adm_j)$ by definition.
  \begin{align*}
    \Omega_i & = \Out_\Game(\Adm_i) \cap \left(\bigcap_{j\ne i} \Out_\Game(\Adm_{j}) \Rightarrow \phi_i\right) \\
    \Omega_i & = \Phi_i \cap \Out_\Game \cap \left(\left(\Out_\Game \cap \bigcap_{j\ne i} \Phi_{j}\right) \Rightarrow \phi_i\right) ~ \text{using Lem.~\ref{lemma:adm-outcomes}} \\
    \Omega_i & = \Phi_i \cap \Out_\Game \cap \left(\bigcap_{j\ne i} \Phi_{j}  \Rightarrow \phi_i\right) \\
    \Omega_i & = \Out_\Game \land \G(E_i) \land M_i \land \left( \left(\bigcap_{j\ne i} M_{j}  \Rightarrow \phi_i\right) \lor \bigvee_{j\ne i} \lnot \G(E_j)\right)\\
    \Omega_i & = \Out_\Game \land \G(E_i) \land M_i \land \left( \left(\bigcap_{j\ne i} M_{j}  \Rightarrow \phi_i\right) \lor \bigvee_{j\ne i} \F(\lnot E_j)\right)
  \end{align*}

  \fbox{$\Rightarrow$}
  Let $\sigma_i$ be a winning strategy for $\Omega_i$ in $\Game$.
  We consider the strategy $\sigma'_i$ defined by $\sigma'_i(h') = \sigma_i(\pi(h'))$ and will show that it is winning for $\Omega'_i$.
  Let $\rho'$ be an outcome of $\sigma'_i$. 
  We have that $\pi(\rho')$ is an outcome of $\sigma_i$.
  Since $\sigma_i$ is winning for $\Omega_i$, $\pi(\rho')$ belongs to $\Omega_i$.

  \begin{itemize}
  \item If $\pi(\rho') \models M_i \land \G(E_i) \land \bigvee_{j\ne i} \F(\lnot E_j)$, then by construction of $\delta'$ the play $\rho'$ reaches a state of $\Stat \times \{\top\}$ and, from there, only states of $\Stat \times \{\top\}$ are visited.
    The condition $\GF(\Stat \times \{ \top\}) \land M_i$ is met by $\rho'$ and therefore $\rho'$ is winning for~$\Omega'_i$.
  \item Otherwise $\pi(\rho') \models M_i \land \G(E_i) \land (\bigwedge_{j\ne i} M_j) \Rightarrow \phi_i$.
    By construction of $\delta'$ the play $\rho'$ stays in $\Stat\times \{0\}$.
    The condition $\GF(\Stat \times \{ 0\}) \land M_i\land (\bigwedge_{j\ne i} M_j) \Rightarrow \phi_i$ is met by $\rho'$ and therefore $\rho' \in \Omega'_i$.
  \end{itemize}
  This shows that the strategy $\sigma'_i$ is winning for $\Omega'_i$ in $\Game'_i$.

  \fbox{$\Leftarrow$}
  Let $\sigma'_i$ be a winning strategy for $\Omega'_i$ in $\Game'_i$, we show that $\pi(\sigma'_i)$ is winning for $\Omega_i$ in $\Game$.
  Let $\rho$ be an outcome of $\pi(\sigma'_i)$.
  We have that $\pi^{-1}(\rho)$ is an outcome of $\sigma'_i$.
  Since $\sigma'_i$ is winning for $\Omega'_i$, $\pi^{-1}(\rho)$ belongs to $\Omega'_i$.
  We have that $\pi^{-1}(\rho) \models \GF(\Stat\times \{ 0,\top\})$ and by construction of $\delta'$ this ensures that all edges that are taken belong to $E_i$ and thus $\pi^{-1}(\rho)$ satisfies the condition $\G(E_i)$.
  \begin{itemize}
  \item If $\pi^{-1}(\rho) \models \GF(S \times \{\top\}) \land M_i$ then by construction of $\delta'$, an edge outside of $E_j$ for some $j\ne i$ is taken.
    This ensures condition $\F(\lnot E_j)$ and therefore $\rho$ belongs to $\Omega_i$.
  \item otherwise $\pi^{-1}(\rho) \models (\bigwedge_{j\ne i} M_j \Rightarrow \phi_i)$ and therefore $\rho$ satisfies the condition $\G(E_i) \land M_i \land \left(\bigcap_{j\ne i} M_{j}  \Rightarrow \phi_i\right)$ and hence belongs to $\Omega_i$.
  \end{itemize}
  This shows that the strategy $\pi(\sigma'_i)$ is winning for $\Omega_i$ in $\Game$.
  
\end{proof}

This characterization immediately yields the decidability of the problem in \PSPACE{}.
Moreover, \PSPACE-hardness follows from that of Muller games.
We also provide a polynomial-time algorithms for the particular case of B\"uchi
conditions.
The following theorem proves the first statement of Theorem~\ref{thm:aaalgo};
the second statement is proved afterwards (Theorem~\ref{thm:aa-compute}).
\begin{theorem}
  \AA-synthesis in multiplayer Muller games is \PSPACE-complete and \P-complete for B\"uchi objectives.
\end{theorem}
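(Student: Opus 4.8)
The plan is to turn the characterisations of Lemma~\ref{lem:aa-val} and Lemma~\ref{lem:game-prime-i} into a decision procedure, and to match it with a hardness reduction.

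\emph{Upper bound for Muller.} By Lemma~\ref{lem:aa-val} an \AA-winning strategy profile exists iff for every player~$i$, player~$i$ wins for $\Omega_i$ from $\sinit$, which by Lemma~\ref{lem:game-prime-i} holds iff player~$i$ wins for $\Omega_i'$ in $\Game_i'$. So for each $i \in \Agt$ I would do three things. First, compute the values: $V_{i,1}=\Win_{\{i\}}(\A,\phi_i)$ and $V_{i,-1}=\Stat\setminus\Win_\Agt(\A,\phi_i)$ are winning regions of games with at most two coalitions and a polynomial-size circuit objective, hence computable in \PSPACE by~\cite{hunter07}; $V_{i,0}$ is the remainder, and $E_i$, $H_i$ then follow in polynomial time. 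Second, build $\Game_i'$ --- the product of $\Game$ with the absorbing component $\{\bot,0,\top\}$, so of polynomial size --- together with a polynomial-size circuit for $\Omega_i'$, which is a Boolean combination of $\G\F(\cdot)$ formulas over $\Stat\times\{\bot,0,\top\}$ and of $\phi_i'$, $(M_j')_j$, each itself a circuit condition. Third, solve the Muller game $\langle\Game_i',\Omega_i'\rangle$, which is in \PSPACE by~\cite{hunter07}. Since \PSPACE is closed under the polynomially many such subroutine calls, this yields membership.

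\emph{Lower bound.} I would reduce from the \PSPACE-complete problem of deciding whether player~$1$ wins a two-player zero-sum Muller game $\langle\A,\phi\rangle$ with $\phi$ given by a circuit~\cite{hunter07}. Set $\Game=\langle\A,\phi_1,\phi_2\rangle$ with $\phi_1=\phi$ and $\phi_2=\texttt{true}$, the Muller condition whose family is all non-empty subsets of $\Stat$. Every strategy of player~$2$ satisfies $\phi_2$, so none is dominated and $\Adm_2(\Game)=\Sigma_2$; likewise condition~2 of \AA is vacuous for player~$2$. Hence $\Game$ has an \AA-winning profile iff player~$1$ has a strategy that is admissible and wins $\phi_1$ against all strategies of player~$2$, which --- since winning strategies are admissible~\cite{berwanger07} --- is exactly the existence of a winning strategy for $\phi_1$, i.e. player~$1$ wins the original game. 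With the upper bound this gives \PSPACE-completeness.

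\emph{B\"uchi objectives and the main obstacle.} When each $\phi_i=\G\F(B_i)$, computing $V_{i,1}$ (a B\"uchi game), $V_{i,-1}=\Stat\setminus\Win_\Agt(\A,\phi_i)$ (existence of a reachable $B_i$-cycle), and hence $E_i$ and $H_i$, is polynomial. The \P-hardness reuses the reduction above, from the \P-complete problem of deciding the winner of a two-player B\"uchi game, with $\phi_2=\texttt{true}=\G\F(\Stat)$. The hard part will be \P-membership, namely showing that over the polynomial-size arena $\Game_i'$ the objective $\Omega_i'$, despite its nested $\G\F(\cdot)\Rightarrow(\cdot)$ implications, is equivalent to a parity condition with only four priorities (the content of Theorem~\ref{thm:aa-compute}): once a play is trapped in an absorbing mode $0$ or $\top$ exactly one disjunct of $\Omega_i'$ survives, and there it is a single ``good states infinitely often'' requirement guarded by finitely many ``eventually avoid'' requirements coming from the $M_j'$; organising this bookkeeping into a four-colour priority assignment is the one delicate step. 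Given that, parity games with boundedly many priorities are solvable in polynomial time, so \AA-synthesis for B\"uchi objectives lies in \P.
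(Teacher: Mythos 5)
Your overall route is the same as the paper's: reduce via Lemma~\ref{lem:aa-val} and Lemma~\ref{lem:game-prime-i} to solving each $\Game_i'$ for $\Omega_i'$, get \PSPACE{} membership from solving circuit-Muller games of polynomial size (the paper instead invokes \cite[Prop.~8]{BRS14} as a black box, but that proposition is proved by essentially the construction you describe), and get hardness by adding a second player whose objective is $\texttt{true}$, so that no strategy of that player is dominated and \AA{} for player~$1$ collapses to ordinary winning. All of that is fine.

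The one genuine gap is exactly the step you flag and then skip: showing that $\Omega_i'$ is a parity condition with four priorities in the B\"uchi case. This is the technical heart of \P-membership and cannot be dismissed as bookkeeping, because a generic Boolean combination of $\G\F$ conditions is \emph{not} a bounded-priority parity condition. What makes it work here is the specific shape of the subformula $\bigwedge_{j\neq i}\G\F(B_j^M)\Rightarrow\G\F(B_i)$, i.e.\ $\G\F(B_i)\lor\bigvee_{j\neq i}\F\G(\lnot B_j^M)$: a single B\"uchi disjunct together with several co-B\"uchi disjuncts. The paper handles this by a deterministic automaton whose second component is a round-robin counter over the players $j\neq i$ that advances on visits to $B_j^M$; priority~$3$ is emitted when the counter completes a full round (all $B_j^M$ seen again), priority~$4$ when $B_i$ (and $B_i^M$) is seen, priority~$2$ on $B_i^M$ alone, and priority~$1$ otherwise, so that the maximal priority seen infinitely often is even iff the condition holds. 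Without exhibiting this automaton (or an equivalent four-priority assignment) and verifying the language equality, the claim that the product game has boundedly many priorities — and hence that the whole problem is in \P{} — is unsupported. A small additional slip: you attribute the four-priority claim to Theorem~\ref{thm:aa-compute}, which is about exponential-time strategy extraction for Muller objectives, not about the B\"uchi case.
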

\begin{proof}
  We use the result of \cite[Prop.~8]{BRS14} in the special case where $n=1$,
  in which case the proposition says that for multi-player games with Muller objectives, checking whether $\val_i((\Adm_i(\Game))_{i\in\Agt},\sinit) = 1$ is in \PSPACE.
  By Lem.~\ref{lem:aa-val}, with one call to this algorithm for each player, we are able to decide if \AA has a solution and this proves \PSPACE~membership.
  For hardness we encode 2-players zero-sum Muller games in our setting: the first player keep the same objective and the second one is always winning.
  In this setting no strategy of the second player is dominated so finding an \AA winning strategy for the first player is the same as finding a winning strategy in the original game.

  \paragraph{B\"uchi Case}
  In the case of B\"uchi objectives, let us write $\phi_i = \G \F (B_i)$ and consider the objective $M_i = \GF(V_{i,1}) \Rightarrow \G\F(B_i) \land \G\F(V_{i,0}) \Rightarrow (\G\F(B_i) \lor \G\F(H_i))$.
  In $\Game$, a run that verifies $\G(E_i)$ will either visit only $V_{i,1}$ after some point, or only $V_{i,-1}$ after some point, or only $V_{i,0}$ (see the proof of Lemma~\ref{lemma:adm-outcomes} for details).
  This means that $M_i$ coincide with $\G\F((V_{i,1} \land B_i) \lor (V_{i,0} \land B_i) \lor (V_{i,0} \land H_i) \lor V_{i,-1})$ on the language $\Out_\Game(\Sigma) \cap \G(E_i)$.
  This is a B\"uchi condition.
  Note that in $\Game'_i$, runs that take an edge outside of $E_i$ end in the $\Stat\times\{\bot\}$ component and $\Omega'_i$ disallows such runs.
  Therefore, in the expression of $\Omega'_i$, we can also replace $M'_i$ by a B\"uchi condition that we write $\G\F(B^M_i)$.
  By abuse of notation we will also write $B_i$ for the states $B_i \times \{\bot,0,\top\}$ of the game $\Game'_i$.
  Consider now the objective $\Omega'_i$, it is given by $(\G\F(\Stat \times \{0\}) \land \G\F(B^M_i) \land (\bigwedge_{j\ne i} \G\F(B^M_j) \Rightarrow \G\F(B_i \times \{\bot,0,\top\}))) \lor (\G\F(\Stat \times \{\top\}) \land \G\F(B^M_i))$.
  Since in this game, states of $\Stat\times \top$ and $\Stat\times\bot$ are absorbing (no play can get out of those components) we write an equivalent objective which is:
  $(\G\F(B^M_i\times \{0\}) \land (\bigwedge_{j\ne i} \G\F(B^M_j) \Rightarrow \G\F(B_i))) \lor (\G\F(B^M_i \times \{\top\})$.
  We define a (small) deterministic parity automaton $\mathcal{A}$ that recognizes this language.
  Its state space is $(\{s,t,u,v\} \times (\{ j \mid j \in \Agt \setminus\{ i \}\} \cup \{ \top \}))$, the transition relation is a product of transitions for the two components:
  $s\xrightarrow{B_i^M \times \{0,\top\}} u$, $u \xrightarrow{\lnot B_i^M} t$, $t,u \xrightarrow{B_i^M \setminus B_i}$, $t,u \xrightarrow{B_i} v$, $v \xrightarrow{\tt} s$,
  and $j\xrightarrow{\lnot B_j \times \{0\}} j$, $j \xrightarrow{B_j \times \{0\}} j'$ where $j'$ is $j+1$ if $j+1 \in \Agt\setminus \{i\}$, $j+2$ if $j+1=i$ and $j+2\in \Agt$, $\top$ otherwise, $\top \xrightarrow{\tt} j_0$ where $j_0$ is the smallest element of $\Agt \setminus \{i\}$.
  The coloring is defined by a function $\chi$ where $\chi(v,\ast) = 4$ (where $\ast$ is any possible second component), $\chi(\{s,t,u\},\top)=3)$, $\chi(u,\Agt\setminus\{i\}) = 2$, and for all other states~$s$, $\chi(s)=1$.
  A word is accepted by $\mathcal{A}$ when the maximal color appearing infinitely often is even.

  We show that a play of $\Game'_i$ satisfies $\Omega'_i$ if, and only if, it is a word accepted by~$\mathcal{A}$.

  Let $\rho$ be a play of $\Game'_i$ which satisfies $\Omega'_i$, either it ends in the $\Stat \times \top$ component or the $\Stat \times 0$ component:
  \begin{itemize}
  \item If $\rho$ ends in the $\top$ component then the state of color~$3$ will not be visited infinitely often (we need to be in $\top$ states to progress on this component of the automaton).
    As $\rho$ visits infinitely often $B_i^M$, the corresponding run in $\mathcal{A}$ will visit infinitely often $u$, and therefore the maximal color that appears infinitely often is at least $2$; and since it is not $3$, it has to be even.
  \item Otherwise $\rho$ ends in the $0$ component.
    Since $\rho$ satisfies $\Omega'_i$, it visits $B_i^M$ infinitely often and either there is a $B_j^M$ for $j\ne i$ that is not visited infinitely often, or $\rho$ visits infinitely often $B_i$.
    \begin{itemize}
    \item If there is a $B_j^M$ for $j\ne i$ that is not visited infinitely often, then 
      the second component of $\mathcal{A}$ will get stuck at some point and its state $\top$ will not be visited infinitely often the state of color $3$.
      As $\rho$ visits infinitely often $B_i^M$, the corresponding run in $\mathcal{A}$ will visit infinitely often $u$, and therefore the maximal color that appears infinitely often is at least $2$; and since it is not $3$, it has to be even.
    \item Otherwise $\rho$ visits infinitely often $B_i$.
      Since we also visit $B_i^M$ infinitely often, the run of $\mathcal{A}$ corresponding to $\rho$ will reach infinitely often a state $(v,\ast)$ and therefore the maximal color occurring infinitely often is $4$.
    \end{itemize}
  \end{itemize}
  This proves that the word is accepted by $\mathcal{A}$.

  \medskip

  Now let $\rho$ be a play of $\Game'_i$ such that the corresponding word is accepted by~$\mathcal{A}$.
  If it is accepted then either the color $4$ is seen infinitely often or the color $2$ is and the color $3$ is not:
  \begin{itemize}
  \item If the color $4$ is visited infinitely often then this means $t$ is reached infinitely often, and because of the structure of $\mathcal{A}$, $u$ also is, which means both $B_i^M\times \{0\}$ and $B_i$ occur infinitely often.
    This implies that the run $\rho$ belongs to $\Omega'_i$.
  \item Otherwise the color $2$ is visited infinitely often and $3$ is not.
    The states $(\ast,\top)$ are therefore not visited infinitely often (otherwise the maximal color would be $3$ or $4$.
    We deduce from that and the structure of $\mathcal{A}$ that some $B_j^M$ for $j\ne i$ is not visited infinitely often.
    This means $\bigwedge_{j\ne i} \G\F(B_j^M)$ is not true for $\rho$.
    Since the color $2$ is seen infinitely often, this means $u,\ast$ is seen infinitely often and therefore $B_i \times \{ 0, \top\}$.
    This ensures $\rho$ belongs to $\Omega'_i$.
  \end{itemize}
  This proves that a play of $\Game'_i$ satisfy $\Omega'_i$ if, and only if, it is a word accepted by~$\mathcal{A}$.

  Then solving the game $\Game'_i$ with objective $\Omega'_i$ is the same as solving it with objective given by $\mathcal{A}$.
  This can be done by solving the parity game obtained by the product of $\Game'_i$ with the automaton $\mathcal{A}$.
  The obtained game is of polynomial size and the number of priority is $4$, such games can be solved in polynomial time~(see for instance \cite{LBCJM94,Seidl96}) and therefore we can decide our problem in polynomial time.
  
\end{proof}

We are now interested in \emph{computing} an \AA-winning strategy profile.
Thanks to Lemma~\ref{lem:game-prime-i}, we obtain an algorithm to compute \AA-winning strategies by looking for winning strategies in $\Game'_i$ and projecting them:

\begin{theorem}
  \label{thm:aa-compute}
  Given a game~$\Game$ with Muller objectives, if \AA has a solution, then an \AA-winning strategy profile can be computed in exponential time.
\end{theorem}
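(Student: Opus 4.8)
The plan is to turn the decision procedure behind Theorem~\ref{thm:aaalgo} into a synthesis procedure by computing, for each player, a winning strategy in the auxiliary game $\Game_i'$ and projecting it back to~$\Game$. First I would record the following equivalence: a profile $\sigma_\Agt$ is \AA-winning if and only if each $\sigma_i$ is \AA-winning (conditions~1 and~2 of the rule are both stated player by player, and for condition~2 the hypothesis ``$\sigma_{-i}' \in \Adm_{-i}(\Game)$'' is guaranteed by condition~1 applied to the remaining components). Hence, using Lemma~\ref{lemma:aa-omega}, the existence of an \AA-winning profile is equivalent to: every player~$i$ has a strategy winning for~$\Omega_i$ in~$\Game$; and by Lemma~\ref{lem:game-prime-i} this is in turn equivalent to: every player~$i$ has a strategy winning for~$\Omega_i'$ in~$\Game_i'$, with the additional property that from any such $\sigma_i'$ the projection $\pi(\sigma_i')$ is winning for~$\Omega_i$ in~$\Game$, hence \AA-winning. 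Consequently, under the hypothesis that \AA has a solution, it suffices to compute one winning strategy $\sigma_i'$ in $(\Game_i',\Omega_i')$ for each~$i$ and to output the profile $(\pi(\sigma_i'))_{i\in\Agt}$.

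Then, for each player~$i$ in turn, I would carry out the following steps. (i) Compute the value partition $V_{i,-1},V_{i,0},V_{i,1}$, the value-preserving edges~$E_i$, and the help states~$H_i$; each of these reduces to computing winning regions such as $\Win_{\{i\}}(\A,\phi_i)$ and $\Win_{\Agt\setminus\{i\}}(\A,\lnot\phi_i)$ in Muller games on~$\Game$ (or on minor variants), which can be done in exponential time. (ii) Build the arena $\Game_i' = \Game \times \{\top,0,\bot\}$, whose size is linear in~$|\Game|$, and write the objective $\Omega_i'$; since the safety parts $\G(E_i)$ and $\G(\bigcup_{j\ne i}E_j)$ have been absorbed into the $\{\top,0,\bot\}$-component and the states with component $\top$ or $\bot$ are absorbing, $\Omega_i'$ is a Boolean combination of prefix-independent Muller conditions, hence itself a Muller condition given by a circuit of polynomial size (as already noted in Section~\ref{section:algo-aa}). (iii) Solve the Muller game $(\Game_i',\Omega_i')$ and extract a finite-memory winning strategy~$\sigma_i'$ from the winning region --- for instance via the latest-appearance-record reduction, which yields a parity game of size exponential in $|\Stat|$ that can be solved together with strategy extraction in exponential time. (iv) Define $\pi(\sigma_i')$ as the strategy in~$\Game$ that, on a history~$h$, deterministically recomputes the $\{\top,0,\bot\}$-component along~$h$ to obtain the corresponding history $\pi^{-1}(h)$ of~$\Game_i'$ and plays $\sigma_i'(\pi^{-1}(h))$; this is again finite-memory, of size polynomially related to that of~$\sigma_i'$, and computable within the same exponential bound. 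Finally, assemble and output $(\pi(\sigma_i'))_{i\in\Agt}$, which is \AA-winning by the first paragraph. Since there are $|\Agt|$ players and each iteration costs exponential time, the whole algorithm runs in exponential time.

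I expect the only genuinely delicate points to be bookkeeping ones, since most of the work is already packaged in the earlier lemmas. Specifically, one must check that an independently computed collection $(\pi(\sigma_i'))_{i\in\Agt}$ really forms an \AA-winning profile --- which is precisely the ``local'' characterisation of \AA-winning profiles above (equivalently, a consequence of the rectangularity established in Theorem~\ref{thm:rectangular}) --- and that the projection step~(iv) preserves finite-memory representability with only polynomial overhead, so that no blow-up beyond the exponential cost of solving the Muller games~$(\Game_i',\Omega_i')$ is introduced. Apart from these verifications, the theorem is essentially the effective content of Theorem~\ref{thm:aaalgo}: the membership part already exhibits the games $\Game_i'$, and here one merely additionally invokes the standard fact that winning strategies in Muller games are finite-memory and computable, then projects them.
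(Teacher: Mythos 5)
Your proposal is correct and follows essentially the same route as the paper's proof: reduce, via Lemma~\ref{lemma:aa-omega} and Lemma~\ref{lem:game-prime-i}, to computing a winning strategy for $\Omega_i'$ in the polynomial-size Muller game $\Game_i'$ for each player, solve these games in exponential time (the paper cites a reduction to a safety game of size $|\Stat|!^3$ where you invoke the latest-appearance-record construction --- both standard), and project back with $\pi$. The additional bookkeeping you flag (per-player decomposition of the \AA{} conditions and finite-memory preservation under $\pi$) is immediate from the definition of \AA-winning profiles and from Lemma~\ref{lem:game-prime-i}, exactly as the paper implicitly uses it.
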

\begin{proof}
  If \AA has a solution, then by Lemma~\ref{lem:game-prime-i}, there is a winning strategy for $\Omega'_i$ in $\Game'_i$.
  This Muller game has polynomial size, hence we can compute a winning strategy $\sigma'_i$ in exponential time~(for instance in \cite{NRZ14} the authors show that we can compute such a winning strategy via a safety game of size $|\Stat|!^3$).
  By Lemma~\ref{lem:game-prime-i}, the projection $\pi(\sigma'_i)$ is an \AA-winning strategy.
  Doing this for each player we obtain a strategy profile solution of \AA.
  
\end{proof}

\section{Complements on Abstraction (Section~\ref{section:abstraction})}
\subsection{Abstract Games}
The pair of abstraction and concretization functions $(\alpha,\gamma)$
actually defines a \emph{Galois connection}:
\begin{lemma}
  The pair $(\alpha,\gamma)$ is a \emph{Galois connection}, that is, for all $S \subseteq \Stat$ and $T \subseteq \Stat^\abs$, we have that $\alpha(S) \subseteq T$ if, and only if, $S\subseteq \gamma(T)$.
\end{lemma}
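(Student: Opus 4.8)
The statement is a routine unfolding of the definitions of $\alpha$ and $\gamma$ and their extensions to sets, so I would prove the two implications directly. The only structural fact I will need is that the sets $\gamma(s^\abs)$, for $s^\abs \in \Stat^\abs$, \emph{partition} $\Stat$ (this is part of the definition of a concretization function), which is exactly what makes $\alpha$ well defined: for every $s \in \Stat$ there is a \emph{unique} abstract state $s^\abs$ with $s \in \gamma(s^\abs)$, namely $\alpha(s)$. Recall also that on sets the functions are extended pointwise: $\alpha(S) = \{\alpha(s) \mid s \in S\}$ and $\gamma(T) = \bigcup_{t^\abs \in T}\gamma(t^\abs)$.

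First I would prove $\alpha(S) \subseteq T \implies S \subseteq \gamma(T)$. Take any $s \in S$. Then $\alpha(s) \in \alpha(S) \subseteq T$, and by definition of $\alpha$ we have $s \in \gamma(\alpha(s))$; since $\alpha(s) \in T$, this gives $s \in \bigcup_{t^\abs \in T}\gamma(t^\abs) = \gamma(T)$. As $s$ was arbitrary, $S \subseteq \gamma(T)$.

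Conversely I would prove $S \subseteq \gamma(T) \implies \alpha(S) \subseteq T$. Take any $s^\abs \in \alpha(S)$; pick $s \in S$ with $\alpha(s) = s^\abs$. Since $s \in S \subseteq \gamma(T) = \bigcup_{t^\abs \in T}\gamma(t^\abs)$, there is some $t^\abs \in T$ with $s \in \gamma(t^\abs)$. But $\alpha(s)$ is the \emph{unique} abstract state whose concretization contains $s$, because the sets $\gamma(\cdot)$ are pairwise disjoint; hence $s^\abs = \alpha(s) = t^\abs \in T$. Therefore $\alpha(S) \subseteq T$.

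There is no real obstacle here: the argument is entirely elementary, and the only point that deserves a word is the use of disjointness of the $\gamma(s^\abs)$ in the second implication, without which $\alpha(s) = t^\abs$ could fail. I would simply make sure to invoke the partition condition explicitly at that step and otherwise keep the proof to the few lines above.
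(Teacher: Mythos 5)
Your proof is correct and follows essentially the same route as the paper's: both directions are proved by unfolding the pointwise extensions of $\alpha$ and $\gamma$ and using the partition property of the concretization function to identify $\alpha(s)$ as the unique abstract state whose concretization contains $s$. Your explicit remark about where disjointness is needed in the backward direction matches the step the paper phrases as ``by definition of $\alpha$.''
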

\begin{proof}
  \fbox{$\Rightarrow$}
  Let $s\in S$.
  Since $\gamma$ defines a partition of $\Stat$, there exists $t\in \Stat^\abs$ such that $s \in \gamma(t)$.
  By definition of $\alpha$, $\alpha(s) = t$.
  Assuming $\alpha(S) \subseteq T$, we have that $t \in T$.
  As $s\in \gamma(t)$, we have $s \in \gamma(T)$.

  \fbox{$\Leftarrow$}
  If $s^\abs \in \alpha(S)$, then there is $s\in S$ such that $s^\abs = \alpha(s)$.
  Assuming $S \subseteq \gamma(T)$, there is $t\in T$ such that $s = \gamma(t)$.
  By definition of $\alpha$, we have that $\alpha(s) = t$.
  Therefore $s^\abs \in T$.
  
\end{proof}

We prove the soundness of the abstract arenas $\AbsA^C$ we defined, by showing that
if coalition~$C$ achieves an objective in~$\Game$, then it also achieves the objective
in~$\AbsA^C$.

\lmwinabs*
\begin{proof}
  Assume $\sigma_C$ is a winning profile of coalition $C$, for objective $\phi_k$ in $\Game$.
  We define by induction a winning strategy $\sigma^\abs_C$ in $\Game^{\abs,k,C}$.
  We assume that $\sigma^\abs_C$ has been defined in a manner such that for each finite outcome $h^\abs$ of $\sigma^\abs_C$ shorter than some bound $m$, there is some $h\in \gamma(h^\abs)$ such that $h$ is a finite outcome of $\sigma_C$.
  The idea is then to define $\sigma^\abs_C$ to resolve the determinism in a way which simulates the behavior from $h$.
  \begin{itemize}  
  \item If $s_i^\abs \in \bigcup_{i\in\Agt} \Stat^\abs_i \times \Act_i$, then $\sigma^\abs_C(h^\abs \cdot (\last(h^\abs),a)) = \gamma(t)$ where $t = \delta(\last(h),a)$.
  \item If $s^\abs \in \bigcup_{i\in C} \Stat^\abs_i$, $\sigma^\abs_C(h^\abs \cdot (\last(h^\abs),a) \cdot  s^\abs) = \sigma_C(h \cdot \delta(\last(h),a))$.
  \end{itemize}

  With this definition, our induction hypothesis will be respected for histories containing one more step, and therefore this holds for all histories.
  Let now $\rho^\abs$ be an outcome $\sigma^\abs_C$.
  By the way we defined this strategy there is a run $\rho$ outcome of $\sigma_C$ such that $\rho\in \gamma(\rho^\abs)$.
  As $\sigma_C$ is winning, $\rho$ satisfies the Muller condition $\phi_k$ and since $\gamma$ is compatible with players' objectives, $\rho^\abs$ satisfies $\phi_k^\abs$.
  Which show that $C$ has a winning strategy in $\Game^{\abs,k,C}$ for $\phi_k^\abs$.
  
\end{proof}

\subsection{Abstract Assume-Admissible Synthesis}
\paragraph{Value-Preserving Strategies}
We provide the proofs of the lemmas stated in the core of the paper.
\lmapprv*
\begin{proof}
  \fbox{$\overline{V}_{k,1}$} This is a direct consequence of Lemma~\ref{lem:winning-abstract}.

  \fbox{$\overline{V}_{k,-1}$} If $s\in V_{k,-1}$ then the coalition $\Agt$ has no winning strategy in $\Game$.
  By determinacy, the empty coalition has a strategy to ensure $\lnot \phi_k$.
  Therefore by Lemma~\ref{lem:winning-abstract}, the coalition $\varnothing$ has a strategy in
  $\AbsA^{\Agt}$ from $\alpha(s)$ that ensures  $\lnot \phi_k$.
  Therefore $s \in \gamma( \overline{V}_{k,-1})$.

  \fbox{$\overline{V}_{k,0}$}
  Recall that $V_{k,0} = \Win_{\Agt \setminus \{k\}}(\A,\lnot\phi_k) \cap \Win_{\Agt}(\A,\phi_k)$.
  Let $s$ be a state in $V_{k,0}$. By Lemma~\ref{lem:winning-abstract}, $\alpha(s)$ belongs
  to both sides of the intersection, thus $\alpha(s) \in \overline{V}_{k,0}$.
  Thus $V_{k,0} \subseteq \gamma(\overline{V}_{k,0})$.
  
  \fbox{$\underline{V}_{k,1}$}
  If $s^\abs\in \underline{V}_{k,1}$ then the coalition $\Agt\setminus\{k\}$ has no strategy in $\AbsA^{\Agt\setminus\{k\}}$ for $\lnot\phi_k^\abs$.
  Therefore by Lemma~\ref{lem:winning-abstract}, it has no strategy in $\A$ from any state of $\gamma(s^\abs)$ to do so.
  Therefore $k$ has a winning strategy in $\A$ from~$\gamma(s^\abs)$, and $\gamma(s^\abs) \in V_{k,1}$.
 
  \fbox{$\underline{V}_{k,-1}$} If $s^\abs \in \underline{V}_{k,-1}$, then the coalition $\Agt$ has no winning strategy in $\AbsA^{\Agt}$ for objective
  $\phi^{\abs}_k$.
  Therefore by Lemma~\ref{lem:winning-abstract}, it has no winning strategy in $\A$ from $\gamma(s^\abs)$ neither
  for the objective $\phi_k$.
  This means that $\gamma(s^\abs) \in V_{k,-1}$.

  \fbox{$\underline{V}_{k,0}$}
  Note that by definition of the $\nu X.$ operator, $\underline{V}_{k,0} \subseteq F$. Thus, let us just show that $\gamma(F) \subseteq V_{k,0}$.
  Recall that $V_{k,0} = \Win_{\Agt\setminus \{k\}}(\A, \lnot \phi_k) \cap \Win_{\Agt}(\A,\phi_k)$.  
  Let $s \in \gamma(\underline{V}_{k,0})$. Then player~$k$ has no strategy in $\AbsA^{\{k\}}$ for~$\phi_k^\abs$,
  hence, by Lemma~\ref{lem:winning-abstract}, it cannot win $\A$ neither for $\phi_k$ from~$\gamma(s)$.
  This shows that $\gamma(s) \subseteq \Win_{\Agt\setminus \{k\}}(\A, \lnot \phi^{\abs}_k) $.
  Furthermore, the coalition $\emptyset$ has no strategy in $\AbsA^\Agt$ for $\lnot \phi^{\abs}_k$, thus
  it does not have one neither in $\A$ for $\lnot \phi_k$ from $\gamma(s)$. In other terms,
  $\gamma(s) \subseteq \Win_{\Agt}(\A,\phi_k)$.
    
\end{proof}

We show that when playing according to~$\underline{E}_k$, player~$k$ ensures staying in~$\underline{V}$.
This is proven in the following.
Let us write $\gamma(\calE) = \{ (s,a) \mid (\alpha(s),a) \in \calE\}$ for $\calE \in \{ \underline{E}_k, \overline{E}_k\}$.
\begin{restatable}{lemma}{lemek}
\label{lem:Ek}
  For all games~$\Game$, and players~$k$,
  \begin{inparaenum}
  \item $\gamma(\underline{E}_k \cap (\underline{V} \times \Act)) \subseteq E_k \subseteq \gamma(\overline{E}_k)$.
  \item For all $s^\abs \in \S^\abs_k$, there exist $a,a'\in \Act_k$ such that $(s^\abs,a) \in \underline{E}_k$
    and $(s^\abs,a') \in \overline{E}_k$.
  \item 
    For all $(s^\abs,a) \in \underline{E}_k$ with $s^\abs \in \underline{V}$,
    we have $\post_\Delta(s^\abs,a) \subseteq \underline{V}$.
  \end{inparaenum}
\end{restatable}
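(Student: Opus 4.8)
The plan is to prove the three assertions in the order (1), (3), (2): (1) is an edge‑by‑edge transfer between the abstract and concrete value constraints via the set approximations of Lemma~\ref{lemma:approximate-v}; (3) is a one‑step unfolding of the winning regions and the greatest fixpoint defining the $\underline{V}_{k,x}$; and (2) reuses these unfoldings, the only delicate point being its $\val_k=0$ case.

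For the inclusion $E_k \subseteq \gamma(\overline{E}_k)$ in (1): fix $(s,a) \in E_k$. If $s \notin \Stat_k$ then $\alpha(s) \notin \Stat^\abs_k$ (since $\gamma$ respects players' states and both state spaces are partitioned), so $(\alpha(s),a) \in \overline{E}_k$ vacuously; if $s \in \Stat_k$, put $x = \val_k(s) = \val_k(\delta(s,a))$, and by Lemma~\ref{lemma:approximate-v} we get $\alpha(s) \in \overline{V}_{k,x}$ and $\alpha(\delta(s,a)) \in \overline{V}_{k,x}$, while $\alpha(\delta(s,a)) \in \post_\Delta(\alpha(s),a)$ by definition of $\Delta^\abs$, which witnesses $(\alpha(s),a) \in \overline{E}_k$. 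For $\gamma(\underline{E}_k \cap (\underline{V}\times\Act)) \subseteq E_k$: fix $(s,a)$ with $\alpha(s) \in \underline{V}$ and $(\alpha(s),a) \in \underline{E}_k$, and suppose $s \in \Stat_k$; since $\alpha(s) \in \underline{V}$, the first clause of $\underline{E}_k$ applies, yielding $x$ with $\alpha(s) \in \underline{V}_{k,x}$ and $\post_\Delta(\alpha(s),a) \subseteq \cup_{l\geq x}\underline{V}_{k,l}$; by Lemma~\ref{lemma:approximate-v} this forces $\val_k(s) = x$ and $\val_k(\delta(s,a)) \geq x$, while from a player-$k$ state the value cannot increase (a one-step successor winning, resp.\ surely losing, for $k$ makes $s$ itself winning, resp.\ surely losing), so $\val_k(\delta(s,a)) = \val_k(s)$, i.e.\ $(s,a) \in E_k$.

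For (3): if $s^\abs \in \Stat^\abs_k$, then $(s^\abs,a) \in \underline{E}_k$ together with $s^\abs \in \underline{V}$ triggers the first clause, which directly gives $\post_\Delta(s^\abs,a) \subseteq \cup_{l\geq x}\underline{V}_{k,l} \subseteq \underline{V}$. If $s^\abs$ is controlled by some $j \neq k$ and $s^\abs \in \underline{V}_{k,x}$, I unfold $\underline{V}_{k,x}$ one step: for $x = 1$, $\underline{V}_{k,1} = \Win_{\{k\}}(\AbsA^{\Agt\setminus\{k\}},\phi^\abs_k)$ is a winning region and $s^\abs$ is adversary-controlled there, so every successor is again winning for $k$, hence in $\underline{V}_{k,1}$; for $x = -1$, $\underline{V}_{k,-1} = \Win_\emptyset(\AbsA^\Agt,\lnot\phi^\abs_k)$ and every successor remains surely losing, hence in $\underline{V}_{k,-1}$; for $x = 0$, the greatest-fixpoint equation gives $\underline{V}_{k,0} \subseteq \cpre_{\AbsA^{\Agt\setminus\{k\}},k}(\underline{V})$, and since $s^\abs$ is not player-$k$'s state this is the universal clause, i.e.\ $\post_\Delta(s^\abs,a) \subseteq \underline{V}$ for all $a$. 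In every case $\post_\Delta(s^\abs,a) \subseteq \underline{V}$.

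For (2), the $\overline{E}_k$ part is immediate: pick any $s \in \gamma(s^\abs) \subseteq \Stat_k$ and a value-preserving action $a' \in \Act_k$ at $s$ (one exists since from a player-$k$ state the value equals the largest value of a one-step successor); then $(s,a') \in E_k$, so $(s^\abs,a') \in \overline{E}_k$ by (1). For the $\underline{E}_k$ part: if $s^\abs \notin \underline{V}$, the second clause makes any $a \in \Act_k$ work; if $s^\abs \in \underline{V}_{k,1}$, player $k$ controlling $s^\abs$ and winning from it in $\AbsA^{\Agt\setminus\{k\}}$ gives $a \in \Act_k$ with $\post_\Delta(s^\abs,a) \subseteq \underline{V}_{k,1}$, so $(s^\abs,a)\in\underline{E}_k$; if $s^\abs \in \underline{V}_{k,-1}$, every successor stays in $\underline{V}_{k,-1}$ as in (3), so any $a \in \Act_k$ works. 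The remaining case, $s^\abs \in \underline{V}_{k,0} \cap \Stat^\abs_k$, is the crux: I must produce $a \in \Act_k$ with $\post_\Delta(s^\abs,a) \subseteq \underline{V}_{k,0}\cup\underline{V}_{k,1}$, whereas the defining fixpoint only directly provides an action keeping the play in $\underline{V}_{k,0}\cup\underline{V}_{k,1}\cup\underline{V}_{k,-1}$; the point is that the escape into $\underline{V}_{k,-1}$ is never forced at a player-$k$ state of $\underline{V}_{k,0}$. The key fact I would use is that $\underline{V}_{k,0} \subseteq F \subseteq \Win_\Agt(\AbsA^\emptyset,\phi^\abs_k)$, and $\Win_\Agt(\AbsA^\emptyset,\phi^\abs_k)$ is disjoint from $\underline{V}_{k,-1} = \Win_\emptyset(\AbsA^\Agt,\lnot\phi^\abs_k)$ (a position from which $\Agt$ forces $\phi^\abs_k$ even against adversarial non-determinism cannot be one from which $\lnot\phi^\abs_k$ is unavoidable); from this I would argue that the defining fixpoint of $\underline{V}_{k,0}$ is unchanged when the escape set $\underline{V}_{k,1}\cup\underline{V}_{k,-1}$ is replaced by $\underline{V}_{k,1}$ on player-$k$-controlled states, by propagating this disjointness through the fixpoint iterates. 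Reconciling the two abstract games $\AbsA^{\Agt\setminus\{k\}}$ and $\AbsA^\emptyset$ underlying $\cpre$ and $F$ is the main obstacle; everything else is routine unfolding of the definitions.
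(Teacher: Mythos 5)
Your treatment of part (1), part (3), and three of the four cases of part (2) is correct and follows the paper's own proof in substance; on one point you are in fact more explicit than the paper, which for $\gamma(\underline{E}_k\cap(\underline{V}\times\Act))\subseteq E_k$ only derives $\val_k(\delta(s,a))\ge\val_k(s)$ and then invokes the definition of $E_k$, silently using the fact you spell out, namely that the value cannot increase across a player-$k$ move. (Your parenthetical justification of that fact is slightly garbled: a single surely-losing successor does not make $s$ surely losing; what you need is that a successor of value $1$, resp.\ of value $\ge 0$, forces $s$ to have value $1$, resp.\ $\ge 0$. The fact itself is correct.)

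The genuine gap is exactly where you flag it: the case $s^\abs\in\underline{V}_{k,0}\cap\Stat^\abs_k$ of part (2). You rightly observe that the fixpoint only supplies an action $a$ with $\post_\Delta(s^\abs,a)\subseteq\underline{V}_{k,0}\cup\underline{V}_{k,1}\cup\underline{V}_{k,-1}$, whereas membership in $\underline{E}_k$ at such a state requires excluding $\underline{V}_{k,-1}$. But your proposed repair is only announced, not carried out, and as sketched it does not close the gap: the disjointness of $F\subseteq\Win_{\Agt}(\AbsA^{\emptyset},\phi^\abs_k)$ from $\underline{V}_{k,-1}$ gives you a \emph{different} action $b$ (the first move of the coalition's strategy witnessing $s^\abs\in\Win_{\Agt}(\AbsA^{\emptyset},\phi^\abs_k)$) whose successors avoid $\underline{V}_{k,-1}$, but nothing forces those successors to lie in $\underline{V}$ at all: they land in $\Win_{\Agt}(\AbsA^{\emptyset},\phi^\abs_k)$, which is a superset of $F$ and need not be contained in $\underline{V}_{k,0}\cup\underline{V}_{k,1}$. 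So the action supplied by $\cpre$ may hit $\underline{V}_{k,-1}$ while the action supplied by $F$ may leave $\underline{V}$, and "propagating the disjointness through the fixpoint iterates" is precisely the missing argument that a single action does both jobs; you would have to show, iterate by iterate, that the $\cpre$-witness and the $F$-witness can be taken to coincide at player-$k$ states. For what it is worth, the paper's own proof of this case simply asserts that the greatest fixpoint yields an action all of whose successors lie in $\underline{V}_{k,0}$ — which is not what the displayed fixpoint literally provides either — so you have put your finger on a real subtlety; but as written your proof of part (2) is incomplete at this case, and with it the non-emptiness claim of Lemma~\ref{lemma:under-admissible-strat} that depends on it.
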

\begin{proof}
  The inclusion $E_k \subseteq \gamma(\overline{E}_k)$ follows from the definition of $\overline{E}_k$, and by Lemma~\ref{lemma:approximate-v}.
  It also follows that for all $s \in \S^\abs_k$, there is $(s,a') \in \overline{E}_k$, since this is always the case for~$E_k$.

  Let $(s^\abs,a)$ be an edge in $\underline{E}_k \cap (\underline{V} \times \Act)$.
  Let $s$ be a state in $\gamma(s^\abs)$. 
  We have that $s \in \gamma(\underline{V}_{k,x})$ for some $x \in \{ -1,0,1\}$ and by Lemma~\ref{lemma:approximate-v} $s \in V_{k,x}$.
  By definition of $\underline{E}_k$, for all $t^\abs$ such that $\Delta(s^\abs,a,t^\abs)$, $t^\abs \in \underline{V}_{k,l}$ with $l \ge x$ and $s^\abs\in \underline{V}_{k,x}$.
  By Lemma~\ref{lemma:approximate-v}, we have that the value of all states in $\gamma(t^\abs)$ are at least as great as any state in $\gamma(s^\abs)$. 
  By definition of $\Delta$, $\alpha(\delta(s,a)) \subseteq \{ t^\abs \mid \Delta(s^\abs,a,t^\abs) \}$.
  Therefore $\alpha(\delta(s,a)) \in \cup_{l \geq x} \underline{V}_{k,l}$, which means $\delta(s,a) \in \cup_{l \geq x} \gamma(\cup_{l \geq x} \underline{V}_{k,l}) \subseteq \cup_{l\geq x} V_{k,l}$ using Lemma~\ref{lemma:approximate-v}.
  By definition of $E_k$ this implies that $(s,a) \in E_k$.

  It remains to prove that for all $s^\abs \in \S_k^\abs$, there is $(s^\abs,a) \in
	\underline{E}_k$, and that if $s^\abs \in \underline{V}$, then for all $(s^\abs,a) \in \underline{E}_k$,
  $\Delta(s^\abs,a,{t}^\abs)$ implies ${t}^\abs \in \underline{V}$.

  If $s^\abs \in \Stat_k^\abs \setminus \underline{V}$, then $(s^\abs,a) \in \underline{E}_k$ for all $a \in \Act_k$ 
  by definition.
  Let us now assume $s^\abs \in \underline{V}$. 
  \begin{itemize}
  \item 
    If $s^\abs \in \underline{V}_{k,-1}$, 
    then By definition of~$\underline{V}_{k,-1}$, we have that for all actions~$a$, and all states~$t^\abs$, if $\Delta^\abs(s^\abs,a,t^\abs)$ then
    $t^\abs \in \underline{V}_{k,-1}$. Thus $(s^\abs,a) \in \underline{E}_k$, and $t^\abs \in \underline{V}_{k,-1}$
    for any such~$t^\abs$, so $t^\abs \in \underline{V}$.

  \item 
    If $s \in \underline{V}_{k,1}$, then 
    there exists~$a$ such that $(s^\abs,a,t^\abs) \in \Delta^\abs$ implies
    $t^\abs \in \underline{V}_{k,1}$. So $(s^\abs,a) \in \underline{E}_k$,
    and $t^\abs \in \underline{V}_{k,1}$. Moreover this holds
    for all~$a$ with $(s^\abs,a) \in \underline{E}_k$, since for such~$a$, $(s^\abs,a,t^\abs) \in
    \Delta^\abs$ implies $t^\abs \in \underline{V}_{k,1}$ by definition of~$\underline{E}_k$.

  \item 
    If $s \in \underline{V}_{k,0}$, then 
    by the greatest fixpoint defining $\underline{V}_{k,0}$, there exists~$a \in \Act_k$
    such that for all $t^\abs$ with $\Delta(s^\abs,a,t^\abs)$, $t^\abs \in \underline{V}_{k,0}$.
    Conversely, for all $(s^\abs,a) \in \underline{E}_k$, $a$ ensures staying
    inside $\underline{V}_{k,0} \cup \underline{V}_{k,1}$.
    Thus for any such~$a$, $(s^\abs,a) \in \underline{E}_k$, and any~$t^\abs$, $\Delta(s^\abs,a,t^\abs)$ means
    $t^\abs \in \underline{V}_{k,0}$.
  \end{itemize}

\end{proof}

Recall that~$\underline{E}_k$ does not constrain the actions outside the set~$\underline{V}$;
thus strategies in $\strat_k(\underline{E}_k)$ can actually choose dominated actions outside~$\underline{V}$.
To prove that $\strat_k(\underline{E}_k)$ is an under-approximation of~$\strat_k(E_k)$ when started in~$\underline{V}$,
we need to formalize the fact that admissible strategies may choose arbitrary actions at states
that are not reachable by any outcome. Intuitively, such strategies cannot be dominated since
the dominated behavior is never observed.

For any strategy~$\sigma$, let $\reach(\Game,\sigma)$ denote the set of states reachable from~$\sinit$ by runs compatible with~$\sigma$.
We show that if one arbitrarily modifies an admissible strategy outside the set $\reach(\Game,\sigma)$, the resulting strategy is still admissible.
\begin{lemma}
  \label{lemma:admissible-non-reachable}
  Let~$\sigma$ be a strategy in $\adm_i(\Game)$ and $\sigma'$ a strategy in $\Sigma_i(\Game)$.
  If for all histories~$h$ such that $\last(h) \in \reach(\Game,\sigma)$, $ \Rightarrow \sigma(h) = \sigma'(h)$, then $\sigma' \in \adm_i(\Game)$.  
\end{lemma}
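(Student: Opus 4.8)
The plan is to show that modifying an admissible strategy outside the set of states it can actually reach leaves all of its outcomes unchanged, and then to transfer admissibility from~$\sigma$ to~$\sigma'$ verbatim. Throughout, recall that a full profile determines a unique run from~$\sinit$, so $\outcome_\Game(\sigma,\sigma_{-i})$ is a single run for every $\sigma_{-i}\in\Sigma_{-i}$.

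First I would establish the key claim: for every $\sigma_{-i}\in\Sigma_{-i}$ one has $\outcome_\Game(\sigma,\sigma_{-i}) = \outcome_\Game(\sigma',\sigma_{-i})$. This is proved by induction on the length of the common prefix of the two runs. The base case is immediate since both runs start at $\sinit\in\reach(\Game,\sigma)$. For the inductive step, assume $\rho_{\leq k}$ is a prefix of both $\outcome_\Game(\sigma,\sigma_{-i})$ and $\outcome_\Game(\sigma',\sigma_{-i})$; then $\rho_{\leq k}$ is in particular compatible with~$\sigma$, so each of its states, and hence $\rho_k=\last(\rho_{\leq k})$, belongs to $\reach(\Game,\sigma)$. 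If $\rho_k\notin\Stat_i$, the action at position~$k$ is dictated by $\sigma_{-i}$ in both runs, so the common prefix extends to length $k+1$. If $\rho_k\in\Stat_i$, then by the hypothesis of the lemma $\sigma(\rho_{\leq k})=\sigma'(\rho_{\leq k})$ (the hypothesis applies precisely because $\last(\rho_{\leq k})\in\reach(\Game,\sigma)$), so again both runs play the same action and the common prefix extends. Hence the two runs coincide, proving the claim. As an immediate consequence, for every $\sigma_{-i}\in\Sigma_{-i}$ we have $\Game,\sigma,\sigma_{-i}\models\phi_i$ if and only if $\Game,\sigma',\sigma_{-i}\models\phi_i$.

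Next I would conclude by contradiction. Suppose $\sigma'$ is dominated by some $\tau\in\Sigma_i(\Game)$. By the definition of dominance, $\Game,\sigma',\sigma_{-i}\models\phi_i$ implies $\Game,\tau,\sigma_{-i}\models\phi_i$ for all $\sigma_{-i}\in\Sigma_{-i}$, and there is a witness $\sigma_{-i}^\star$ with $\Game,\tau,\sigma_{-i}^\star\models\phi_i$ and $\Game,\sigma',\sigma_{-i}^\star\not\models\phi_i$. Using the equivalence $\Game,\sigma,\sigma_{-i}\models\phi_i \Leftrightarrow \Game,\sigma',\sigma_{-i}\models\phi_i$ from the claim, we may substitute $\sigma$ for $\sigma'$ everywhere: $\Game,\sigma,\sigma_{-i}\models\phi_i$ implies $\Game,\tau,\sigma_{-i}\models\phi_i$ for all $\sigma_{-i}$, while $\Game,\tau,\sigma_{-i}^\star\models\phi_i$ and $\Game,\sigma,\sigma_{-i}^\star\not\models\phi_i$. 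That is exactly the statement that $\tau$ dominates $\sigma$, contradicting $\sigma\in\adm_i(\Game)$. Therefore $\sigma'$ is not dominated, i.e.\ $\sigma'\in\adm_i(\Game)$.

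I do not expect a genuine obstacle here; the only point demanding care is the induction in the first paragraph, where one must invoke reachability with respect to~$\sigma$ (the common prefix witnesses that each of its states lies in $\reach(\Game,\sigma)$) rather than with respect to~$\sigma'$, so that the hypothesis $\sigma(h)=\sigma'(h)$ can legitimately be applied at player-$i$ positions.
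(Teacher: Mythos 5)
Your proof is correct and follows essentially the same route as the paper's (one-line) argument: observe that $\sigma$ and $\sigma'$ have identical outcomes against every $\sigma_{-i}$, and then transfer any dominance of $\sigma'$ to $\sigma$ for a contradiction. You have merely spelled out, correctly, the prefix-induction and the substitution step that the paper leaves implicit.
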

\begin{proof}
  For all profiles $\sigma_{-k} \in \Sigma_{-k}(\Game)$, we have $\outcome_\Game(\sigma_{-k},\sigma) = \outcome_\Game(\sigma_{-k},\sigma')$
  so if $\sigma'$ is dominated, then $\sigma$ would also be dominated, which is a contradiction.
  
\end{proof}

\lmadmstrat*
\begin{proof}
  Since $E_k \subseteq \gamma(\overline{E}_k)$ by Lemma~\ref{lem:Ek},
  we have  $\strat_k(E_k) \subseteq \gamma(\strat_k(\gamma(\overline{E}_k)))$.

  Assume $\sinit \in \gamma(\underline{V})$. 
  The fact that $\strat_k(\underline{E}_k)$, thus also $\gamma(\strat_k(\underline{E}_k))$ are non-empty
  follows from Lemma~\ref{lem:Ek} too, since for any state $s^\abs$  there is $a \in \Act_k$ with
  $(s^\abs,a) \in \underline{E}_k$.

  We prove that $\reach(\AbsA^{\Agt\setminus\{k\}},\sigma) \subseteq \underline{V}$ for all $\sigma \in \strat_k(\underline{E}_k)$.
  We already know, by Lemma~\ref{lem:Ek}, that for all $s^\abs \in \underline{V}$, if $(s^\abs,a) \in \underline{E}_k$ then 
  all successors $t^\abs$ with $\Delta(s^\abs,a,t^\abs)$ satisfies $t^\abs \in \underline{V}$.
  We are going to show that for all $s^\abs \in \underline{V} \cap \Stat_j^\abs$ with $j\ne k$, for all~$a \in \Act_j$,
  $\Delta^\abs(s^\abs,a,t^\abs)$ implies $t^\abs \in \underline{V}$.

  Consider $s^\abs \in \underline{V}$.
  If~$s^\abs \in \underline{V}_{k,1}$, then for all~$a \in \Act$, $\Delta^\abs(s^\abs,a,t^\abs)$ implies
  that $t^\abs \in \underline{V}_{k,1}$, since $\Agt\setminus\{k\}$ resolves non-determinism.
  The situation is similar if $s^\abs \in \underline{V}_{k,-1}$; for all $a \in \Act_j$, $\Delta^\abs(s^\abs,a,t^\abs)$
  implies ${t}^\abs \in \underline{V}_{k,-1}$. 
  If~$s^\abs \in \underline{V}_{k,0}$, then, by the definition of the outer fixpoint, for all~$a \in \Act_j$,
  $\Delta^\abs(s^\abs,a,{t}^\abs)$ implies that ${t}^\abs \in \underline{V}$.

  Thus $\reach(\AbsA^{\Agt\setminus\{k\}},\sigma) \subseteq \underline{V}$ for all $\sigma \in \strat_k(\underline{E}_k)$.
  It then follows that $\reach(\Game,\gamma(\sigma)) \subseteq \gamma(\underline{V})$.
  So, by Lemma~\ref{lemma:admissible-non-reachable}, and by the fact that $\gamma(\underline{E}_k) \subseteq E_k$, 
  all strategies in $\gamma(\strat_k(\underline{E}_k))$ are value preserving, which is to say,
  belong to $\strat_k(E_k)$.
  
\end{proof}

\paragraph{Help States}
\begin{lemma}
  \label{lemma:approximate-h}
  For all players~$k$,
  \( \gamma(\underline{H}_k) \subseteq H_k \subseteq \gamma(\overline{H}_k)\).
\end{lemma}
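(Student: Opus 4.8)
I would prove the two inclusions independently, in each case unfolding the definition of the relevant set of help states and appealing to Lemma~\ref{lemma:approximate-v} on the value sets, together with two elementary observations about the abstraction. First, since $\gamma$ is compatible with players' states and the sets $\gamma(s^\abs)$ partition $\Stat$, for every concrete state $s$ we have $s \in \Stat_k$ iff $\alpha(s) \in \Stat^\abs_k$; in particular $s \in \Stat \setminus \Stat_k$ iff $\alpha(s) \in \Stat^\abs \setminus \Stat^\abs_k$. Second, for every concrete transition $t = \delta(s,a)$ we have $(\alpha(s),a,\alpha(t)) \in \Delta$ by definition of the abstract transition relation, so $\alpha(t) \in \post_\Delta(\alpha(s),a)$ and $\Delta(\alpha(s),\Act,\alpha(t))$ holds.

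\textbf{Step 1: $H_k \subseteq \gamma(\overline{H}_k)$.} I would fix $s \in H_k$ and show $\alpha(s) \in \overline{H}_k$. From $s \notin \Stat_k$ the first observation gives $\alpha(s) \notin \Stat^\abs_k$; from $\val_k(s) = 0$, i.e. $s \in V_{k,0}$, Lemma~\ref{lemma:approximate-v} gives $\alpha(s) \in \overline{V}_{k,0}$. Let $s' \neq s''$ be the successors of $s$ provided by membership in $H_k$, with $\val_k(s') \ge 0$ and $\val_k(s'') \ge 0$, say $s' = \delta(s,a)$ and $s'' = \delta(s,b)$. Then $s', s'' \in V_{k,0} \cup V_{k,1}$, so by Lemma~\ref{lemma:approximate-v}, $\alpha(s'), \alpha(s'') \in \overline{V}_{k,0} \cup \overline{V}_{k,1}$, while the second observation gives $\Delta(\alpha(s),\Act,\alpha(s'))$ and $\Delta(\alpha(s),\Act,\alpha(s''))$. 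Taking $t^\abs = \alpha(s')$ and $u^\abs = \alpha(s'')$ this is exactly the defining property of $\overline{H}_k$; this step uses that $\overline{H}_k$ does not require $t^\abs \neq u^\abs$, which is essential since $\alpha$ need not separate $s'$ and $s''$.

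\textbf{Step 2: $\gamma(\underline{H}_k) \subseteq H_k$.} Conversely I would fix $s$ with $s^\abs := \alpha(s) \in \underline{H}_k$ and show $s \in H_k$. From $s^\abs \notin \Stat^\abs_k$ the first observation gives $s \notin \Stat_k$, and from $s^\abs \in \underline{V}_{k,0}$ Lemma~\ref{lemma:approximate-v} gives $s \in V_{k,0}$, i.e. $\val_k(s)=0$. Let $a \neq b$ be the actions witnessing $s^\abs \in \underline{H}_k$, and put $s' = \delta(s,a)$, $s'' = \delta(s,b)$ (well defined since $\delta$ is total on $\Stat \times \Act$). By the second observation $\alpha(s') \in \post_\Delta(s^\abs,a)$ and $\alpha(s'') \in \post_\Delta(s^\abs,b)$; since these post-sets are disjoint, $\alpha(s') \neq \alpha(s'')$ and hence $s' \neq s''$. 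Since moreover $\post_\Delta(s^\abs,a) \cup \post_\Delta(s^\abs,b) \subseteq \underline{V}_{k,0} \cup \underline{V}_{k,1}$, we get $\alpha(s'),\alpha(s'') \in \underline{V}_{k,0} \cup \underline{V}_{k,1}$, and Lemma~\ref{lemma:approximate-v} yields $s',s'' \in V_{k,0} \cup V_{k,1}$, i.e. $\val_k(s') \ge 0$ and $\val_k(s'') \ge 0$. Thus $s', s''$ witness $s \in H_k$. The whole argument is short; the only point that needs care is the treatment of distinctness — for the over-approximation one relies on $\overline{H}_k$ not demanding distinct abstract successors, while for the under-approximation the disjointness clause built into $\underline{H}_k$ is precisely what recovers two genuinely distinct concrete successors after concretization.
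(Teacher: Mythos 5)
Your proof is correct and follows essentially the same route as the paper's: both directions unfold the definitions of $H_k$, $\overline{H}_k$, $\underline{H}_k$, invoke Lemma~\ref{lemma:approximate-v} to transfer membership in the value sets, and use the disjointness of $\post_\Delta(s^\abs,a)$ and $\post_\Delta(s^\abs,b)$ to recover two distinct concrete successors in the under-approximation direction. Your write-up is in fact slightly more explicit than the paper's on the bookkeeping points (that $s \notin \Stat_k$ iff $\alpha(s) \notin \Stat^\abs_k$, and that $\overline{H}_k$ deliberately does not require $t^\abs \neq u^\abs$), but the substance is identical.
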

\begin{proof}
  Let $s^\abs \in \underline{H}_k$, and let~$a,b \in \Act$ two witnessing actions. 
  For all $s \in \gamma(s^\abs)$, we have $\delta(s,a) \in \gamma(\post_\Delta(s^\abs,a)) \subseteq V_{k,0} \cup V_{k,1}$
  and $\delta(s,b) \in \gamma(\post_\Delta(s^\abs,a)) \subseteq V_{k,0} \cup V_{k,1}$.
  Moreover $\alpha(\delta(s,a)) \in \post_\Delta(s^\abs,a)$, $\alpha(\delta(s,b)) \in \post_\Delta(s^\abs,b)$,
  and $\post_\Delta(s^\abs,a) \cap \post_\Delta(s^\abs,b) = \varnothing$, therefore $\alpha(\delta(s,a)) \ne \alpha(\delta(s,b))$ and thus $\delta(s,a) \neq \delta(s,b)$.
  Hence $s \in H_k$.

  Now, consider any $s \in H_k$; and let $a,b \in \Act$ be such that $\delta(s,a),\delta(s,b) \in V_{k,0} \cup V_{k,1}$
  and $\delta(s,a) \neq \delta(s,b)$.
  If we write $t^\abs = \alpha(\delta(s,a))$ and $u^\abs = \alpha(\delta(s,b))$,
  then $t^\abs,u^\abs \in \overline{V}_{k,0} \cup \overline{V}_{k,1}$,
  and $\Delta(s^\abs,a,t^\abs)$, and $\Delta(s^\abs,b,u^\abs)$; thus $\alpha(s) \in \overline{H}_k$.
  It follows that $H_k \subseteq \gamma(\overline{H}_k)$.
\end{proof}

\paragraph{Synthesizing \AA-winning Strategies}
\begin{lemma}
  \label{lemma:approximate-M}
  We have $\gamma(\underline{M_k'}) \subseteq M_k' \subseteq \gamma(\overline{M_k'})$.
\end{lemma}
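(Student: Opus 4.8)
The plan is to prove the two inclusions by a direct monotonicity argument, the only ingredients being Lemma~\ref{lemma:approximate-v}, Lemma~\ref{lemma:approximate-h}, and the compatibility of the abstraction with the objectives (recall $\rho\in\phi_k$ iff $\alpha(\rho)\in\phi_k^\abs$). First I would reduce the statement, which mentions runs of $\calA_k'$ and $\Game_k'$, to one about runs of $\Game$: a run of $\calA_k'$ projects, after deleting the auxiliary $\{\bot,0,\top\}$-component and the intermediate action-states introduced in the abstract arena, to an abstract $\Game$-run, and $\gamma$ and $\alpha$ commute with this projection; moreover the conditions $M_k'$, $\underline{M_k'}$ and $\overline{M_k'}$ depend only on this projection and are prefix independent. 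It therefore suffices to show, for every run $\rho$ of $\Game$, that $\alpha(\rho)\in\underline{M_k'}$ implies $\rho\in M_k$, and that $\rho\in M_k$ implies $\alpha(\rho)\in\overline{M_k'}$, reading $\underline{M_k'}$ and $\overline{M_k'}$ as conditions on abstract runs.

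For the first implication, assume $\alpha(\rho)\in\underline{M_k'}$ and check the two conjuncts of $M_k$. If $\rho\models\G\F(V_{k,1})$, then since $V_{k,1}\subseteq\gamma(\overline{V}_{k,1})$ every state of $V_{k,1}$ visited by $\rho$ has abstraction in $\overline{V}_{k,1}$, so $\alpha(\rho)\models\G\F(\overline{V}_{k,1})$; the hypothesis then yields $\alpha(\rho)\models\phi_k^\abs$, hence $\rho\models\phi_k$. If $\rho\models\G\F(V_{k,0})$, then similarly $\alpha(\rho)\models\G\F(\overline{V}_{k,0})$, so $\alpha(\rho)\models\phi_k^\abs\lor\G\F(\underline{H}_k)$; in the first case $\rho\models\phi_k$, and in the second case $\gamma(\underline{H}_k)\subseteq H_k$ gives that every state whose abstraction is in $\underline{H}_k$ lies in $H_k$, hence $\rho\models\G\F(H_k)$. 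In both cases $\rho\in M_k$.

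The second implication is symmetric. Assume $\rho\in M_k$ and check $\alpha(\rho)\in\overline{M_k'}$. If $\alpha(\rho)\models\G\F(\underline{V}_{k,1})$, then $\gamma(\underline{V}_{k,1})\subseteq V_{k,1}$ forces every $\rho$-state whose abstraction is in $\underline{V}_{k,1}$ to lie in $V_{k,1}$, so $\rho\models\G\F(V_{k,1})$, hence $\rho\models\phi_k$ by $M_k$ and $\alpha(\rho)\models\phi_k^\abs$. If $\alpha(\rho)\models\G\F(\underline{V}_{k,0})$, then $\rho\models\G\F(V_{k,0})$, so $\rho\models\phi_k\lor\G\F(H_k)$; in the first case $\alpha(\rho)\models\phi_k^\abs$, and in the second case $H_k\subseteq\gamma(\overline{H}_k)$ gives $\alpha(\rho)\models\G\F(\overline{H}_k)$. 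Hence $\alpha(\rho)\in\overline{M_k'}$.

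No step here is genuinely hard; the point requiring attention is the interplay between the directions of the inclusions and the monotonicity of $\Rightarrow$: the over-approximations $\overline{V}_{k,x}$ and the under-approximation $\underline{H}_k$ sit on the ``larger hypothesis, smaller help set'' side in $\underline{M_k'}$, so that $\underline{M_k'}$ entails $M_k$ along $\gamma$, while the under-approximations $\underline{V}_{k,x}$ and the over-approximation $\overline{H}_k$ are placed so that $M_k$ entails $\overline{M_k'}$ along $\alpha$; one also uses that $\G\F(\cdot)$ is monotone, so that set inclusions lift to implications between the corresponding $\omega$-regular conditions. The projection-and-commutation bookkeeping of the first paragraph, while routine, is the most error-prone part.
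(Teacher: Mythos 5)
Your proof is correct and follows essentially the same route as the paper's: both reduce to the inclusions $\gamma(\underline{V}_{k,x})\subseteq V_{k,x}\subseteq\gamma(\overline{V}_{k,x})$ and $\gamma(\underline{H}_k)\subseteq H_k\subseteq\gamma(\overline{H}_k)$ together with $\gamma(\phi_k^\abs)=\phi_k$, and then lift them through the implications by monotonicity. You merely spell out the run-by-run verification and the projection bookkeeping that the paper leaves implicit.
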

\begin{proof}
  We have $\gamma(\phi_k^\abs) = \phi_k$ by assumption on~$\gamma$. Thus, 
  by Lemma~\ref{lemma:approximate-v},
  \[\gamma(( \G\F(\overline{V}_{k,1}) \Rightarrow \phi_k^\abs)) \subseteq 
  \G\F(V_{k,1}) \Rightarrow \phi_k) \subseteq \gamma(( \G\F(\underline{V}_{k,1}) \Rightarrow \phi_k^\abs)).\]
  Similarly, by Lemma~\ref{lemma:approximate-h}, we get 
\(
\gamma\left( \G\F(\overline{V}_{k,0}) \Rightarrow ({\phi}^\abs_k \lor \G\F(\underline{H}_k)) \right)
\subseteq 
\G\F({V}_{k,0} \Rightarrow (\phi_k \lor \G\F(H_k))
\subseteq 
\gamma\left( \G\F(\underline{V}_{k,0}) \Rightarrow ({\phi}^\abs_k \lor \G\F(\overline{H}_k)) \right).\)
It follows that $\gamma(\underline{M'}_k) \subseteq M_k' \subseteq \gamma(\overline{M'}_k)$.


\end{proof}

The following lemma proves Theorem~\ref{thm:abstract}.
\begin{lemma}\label{lemma:winning-under-omega}
  Let $k \in \Agt$ be a player and $\sigma_k$ a strategy of player~$k$.
  If $\sinit^\abs \in \underline{V}$, and $\sigma_k$ is winning for objective $\underline{\Omega'}_k$ in $\AbsA_k'$, 
  then $\gamma(\sigma_k)$ is winning for $\Omega_k'$ in~$\Game'_k$.
\end{lemma}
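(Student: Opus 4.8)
The plan is to prove that \emph{every} outcome of $\gamma(\sigma_k)$ in $\Game'_k$ satisfies $\Omega'_k$. So I would fix an arbitrary strategy profile $\sigma_{-k}$ for the coalition $\Agt\setminus\{k\}$ in $\Game'_k$, set $\rho'=\outcome_{\Game'_k}(\gamma(\sigma_k),\sigma_{-k})$, and write $\rho=\pi(\rho')$ for its projection to $\Game$. The main device is an \emph{abstract shadow}: I build an opponent strategy $\hat\sigma_{-k}$ in $\AbsA'_k$ that copies $\sigma_{-k}$'s action at every $(\Agt\setminus\{k\})$-state and, at the intermediate non-determinism-resolving states, picks the abstraction of the concrete successor; by the simulation argument behind Lemma~\ref{lem:winning-abstract}, the outcome $\hat\rho=\outcome_{\AbsA'_k}(\sigma_k,\hat\sigma_{-k})$ has underlying state sequence $\alpha(\rho)$. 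Since $\sigma_k$ is winning for $\underline{\Omega'}_k$, $\hat\rho\models\underline{\Omega'}_k$; as $\underline{M'}_k$ is a conjunct of \emph{both} disjuncts of $\underline{\Omega'}_k$ and depends only on the $\Stat^\abs$-component, we get $\alpha(\rho)\models\underline{M'}_k$, and Lemma~\ref{lemma:approximate-M} then gives $\rho'\models M'_k$. This secures the $M'_k$-clause that occurs in both disjuncts of $\Omega'_k$.

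Next I would pin down the $\{\bot,0,\top\}$-component of $\rho'$. As long as this component is $0$, player~$k$ must play edges of $\underline{E}_k$: otherwise $\hat\rho$ would enter the absorbing $\bot$-component and fail both $\G\F(\Stat^\abs\times\{0\})$ and $\G\F(\Stat^\abs\times\{\top\})$, contradicting that $\sigma_k$ is winning for $\underline{\Omega'}_k$. Using $\sinit^\abs\in\underline{V}$, Lemma~\ref{lem:Ek}(3), and the fact (established in the proof of Lemma~\ref{lemma:under-admissible-strat}) that from states of $\underline{V}$ every move of every player keeps the play inside $\underline{V}$, an induction shows that while the component is $0$ the shadow stays in $\underline{V}$; then Lemma~\ref{lem:Ek}(1) converts each $\underline{E}_k$-move of player~$k$ into an $E_k$-move, so in $\Game'_k$ player~$k$ never drives the component to $\bot$. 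Hence $\rho'$ is either trapped in $\Stat\times\{0\}$ or eventually absorbed in $\Stat\times\{\top\}$.

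Then I would split on these two cases. If $\rho'$ reaches $\Stat\times\{\top\}$, then $\rho'\models\G\F(\Stat\times\{\top\})$, which together with $\rho'\models M'_k$ already gives the second disjunct of $\Omega'_k$. If $\rho'$ stays in $\Stat\times\{0\}$, then no player $j\ne k$ ever leaves $E_j$; since $E_j\subseteq\gamma(\overline{E}_j)$ by Lemma~\ref{lem:Ek}, and player~$k$ stays in $\underline{E}_k$, the shadow $\hat\rho$ also stays in $\Stat^\abs\times\{0\}$, hence satisfies the first disjunct of $\underline{\Omega'}_k$, in particular $\hat\rho\models\bigwedge_{j\ne k}\overline{M'}_j\Rightarrow\phi^\abs_k$. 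Assuming $\rho'\models\bigwedge_{j\ne k}M'_j$, Lemma~\ref{lemma:approximate-M} yields $\alpha(\rho)\models\bigwedge_{j\ne k}\overline{M'}_j$, hence $\alpha(\rho)\models\phi^\abs_k$, hence $\rho\models\phi_k$ by compatibility of $\gamma$ with the objectives, hence $\rho'\models\phi'_k$; together with $\rho'\models\G\F(\Stat\times\{0\})$ and $\rho'\models M'_k$ this is the first disjunct of $\Omega'_k$. Either way $\rho'\models\Omega'_k$, so $\gamma(\sigma_k)$ is winning for $\Omega'_k$ in $\Game'_k$, and (via Lemma~\ref{lem:game-prime-i}) this yields an \AA-winning strategy in $\Game$.

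The main obstacle is the mismatch between the two component-tracking mechanisms: $\Game'_k$ records a visit to $\top$ as soon as some $j\ne k$ leaves $E_j$, whereas $\AbsA'_k$ records it only when $j$ leaves the larger set $\overline{E}_j$, so a concrete $\top$-event need not be mirrored by the shadow $\hat\rho$. The delicate point is therefore to keep $\hat\rho$ genuinely an outcome of $\sigma_k$ \emph{and} keep its first component equal to $\alpha(\rho)$ even across such a mismatch — this is exactly where one must exploit that $\gamma(\sigma_k)$ by construction feeds $\sigma_k$ the abstraction of the \emph{concrete} history, that $\underline{M'}_k$ and $\overline{M'}_j$ are prefix independent and depend only on the $\Stat^\abs$-component, and that $\underline{E}_k\subseteq E_k$ on $\underline{V}$, so that the only clause of $\Omega'_k$ that could be endangered in the $\top$-case, namely $M'_k$, has already been obtained from the shadow in the first step. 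I expect this bookkeeping, rather than any new idea, to be the bulk of the work.
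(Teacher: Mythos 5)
Your proof is correct and follows essentially the same route as the paper's: lift the concrete outcome to an abstract run compatible with $\sigma_k$, extract $\underline{M'}_k$ and push it down to $M'_k$ via Lemma~\ref{lemma:approximate-M}, use $\underline{E}_k\subseteq E_k$ on $\underline{V}$ (Lemmas~\ref{lem:Ek} and~\ref{lemma:under-admissible-strat}) to rule out the $\bot$-component, and then case-split on $\G\F(\Stat\times\{0\})$ versus $\G\F(\Stat\times\{\top\})$, invoking $M'_j\subseteq\gamma(\overline{M'}_j)$ only in the $0$-case. Your explicit ``shadow run'' and your remark that a concrete $\top$-event (leaving $E_j$) need not be an abstract one (leaving $\overline{E}_j$) make precise a point the paper's proof passes over silently, and you correctly observe that it is harmless because $\underline{M'}_k$ appears in both disjuncts and depends only on the $\Stat^\abs$-component.
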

\begin{proof}
  Let us rewrite
  \[
  \underline{\Omega'}_i = \underline{M'}_i \land \left(\left(\G\F(\Stat^\abs\times\{0\}) \land ( \bigwedge_{j \neq i} \overline{M'}_j \Rightarrow {\phi}^\abs_i ) \right)
   \lor \G\F(\Stat^\abs\times\{\top\})\right).
   \]

  Let~$\sigma_k$ be a winning strategy in~$\AbsA_k'$ for~$\underline{\Omega'}_k$.
  We will show that $\Game', \gamma(\sigma_k) \models \Omega_k'$.

  Consider any run~$\rho$ of~$\Game'_k$ compatible with~$\gamma(\sigma_k)$.
  By definition of $\gamma(\sigma_k)$, $\alpha(\rho)$ is a run of $\AbsA_k'$ compatible with~$\sigma_k$.
  Since $\sigma_k$ is a winning strategy, $\alpha(\rho) \in \underline{M'}_k$, and by Lemma~\ref{lemma:approximate-M} $\rho \in M_k'$.

  We now show that $\rho \in \G\F(\Stat \times \{0,\top\})$.
  By assumption, we have $\AbsA_k', \sigma_k \models \G\F(\Stat^\abs \times \{0,\top\})$,
  which means that for all histories~$h^\abs$ of $\AbsA_k'$ compatible with~$\sigma_k$, 
  $(\last(h^\abs), \sigma(h^\abs)) \in \underline{E}_k$ (otherwise the transition relation of $\AbsA_k'$ would lead  to a $\bot$ state).
  Moreover, since $\sinit^\abs \in \underline{V}$, 
  it follows from Lemma~\ref{lemma:under-admissible-strat} that $(\last(h), \gamma(\sigma)(h)) \in E_k$
  for all histories~$h$ compatible with~$\gamma(\sigma_k)$.  
  Thus no state $(\ast, \bot)$ is reachable under~$\gamma(\sigma)$ in~$\Game'_k$.

  Because of the structure of $\Game'_k$ this means that $\rho$ either visits states of $\Stat\times  \{0\}$ or states of $\Stat\times\{\top\}$ infinitely often:
  \begin{itemize}
    \item If $\rho \in \G\F(\Stat\times \{0\})$, then $\alpha(\rho) \in \G\F(\Stat^\abs \times \{0\})$; so 
      $\alpha(\rho) \in \bigwedge_{j \neq k} \overline{M'}_j \Rightarrow \phi_k^\abs$;
      it follows, by Lemma~\ref{lemma:approximate-M} and the compatibility of the abstraction with players' objectives, that
      $\rho \in \bigwedge_{j \neq k} M'_j \Rightarrow \phi_k$. Thus $\rho \in \Omega_k'$.
    \item
      Otherwise $\rho \in \G\F(\Stat\times\{\top\})$, so $\rho \in \Omega_k'$.
  \end{itemize}
  Thus any outcome $\rho$ of $\gamma(\sigma_k)$ belongs to $\Omega_k'$ which shows it is winning.
\end{proof}

\section{Algorithm for Assume-Guarantee Synthesis}
The assume-guarantee-$\land$ rule was studied in~\cite{CH07} for particular games with three players. However, the given proofs are based on \emph{secure equilibria} which do not actually 
capture assume-guarantee synthesis, so the correctness of the algorithm is not clear.
Here, we give an alternative algorithm for deciding assume-guarantee-$\land$ for multiplayer games, and prove its correctness.

For any game~$\Game$, and state~$s$, we denote by $G_s$ the game obtained making~$s$ the initial state.
Assuming that for each player~$i$ has an objective~$\phi_i$ which is prefix independent, 
let us define $W_i = \{ s \in \Stat \mid \exists \sigma_i.\ G_s , \sigma_i \models \bigwedge_{j\in\Agt \setminus \{i\}} \phi_j \Rightarrow \phi_i \}$.

The following lemma gives a decidable characterization of assume-guarantee  synthesis:
\begin{lemma}\label{lem:assume-guarantee}
  Let~$\phi_i$ be a prefix-independent objective.
  Rule $\AG^\land$ has a solution if, and only if, there is a run~$\rho$ which visits only states of $\bigcap_{i\in\Agt} W_i$ and such that $\rho \models \bigwedge_{i\in\Agt} \phi_i$.
\end{lemma}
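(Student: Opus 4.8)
The plan is to prove the two implications directly, exploiting the fact that $W_i$ is exactly the winning region of player~$i$ in the two‑player zero‑sum game that opposes player~$i$ (controlling $\Stat_i$) to the coalition of all remaining players, for the $\omega$‑regular objective $\psi_i := \bigl(\bigwedge_{j\neq i}\phi_j\bigr)\Rightarrow\phi_i$. Two standard facts will be used throughout. First, $\psi_i$ is prefix‑independent: each $\phi_j$ is, and prefix‑independence is preserved by complement, union and intersection. Second, by determinacy of $\omega$‑regular games, from every state outside $W_i$ the coalition has a strategy enforcing $\neg\psi_i$; consequently, for every state $s\in\Stat\setminus\Stat_i$, if $s\in W_i$ then every successor of $s$ lies in $W_i$ (otherwise the coalition would move to a non‑$W_i$ successor and win $\neg\psi_i$ from $s$, contradicting $s\in W_i$).

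For the ($\Rightarrow$) direction, let $\sigma_\Agt$ witness $\AG^\land$ and put $\rho:=\out_\Game(\sigma_\Agt)$. Condition~1 of $\AG^\land$ gives $\rho\models\bigwedge_{i}\phi_i$, so it remains to show $\rho_k\in\bigcap_i W_i$ for every position $k$. Fix $k$ and a player $i$, and consider the shifted strategy $\sigma_i\circ\rho_{\leq k}$ in $G_{\rho_k}$. For any run $\eta$ of $G_{\rho_k}$ compatible with it, prepending $\rho_{\leq k}$ yields a run $\tilde\eta$ of $\Game$ from $\sinit$ that is compatible with $\sigma_i$ (it agrees with $\rho$, hence with $\sigma_i$, before position $k$, and follows $\sigma_i\circ\rho_{\leq k}$ afterwards); thus $\tilde\eta\in\psi_i$ by condition~2, and $\eta\in\psi_i$ by prefix‑independence. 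Hence $\sigma_i\circ\rho_{\leq k}$ is winning for $\psi_i$ from $\rho_k$, i.e. $\rho_k\in W_i$; as $i$ is arbitrary, $\rho_k\in\bigcap_i W_i$.

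For the ($\Leftarrow$) direction, let $\rho$ be a run starting at $\sinit$ that stays in $W:=\bigcap_i W_i$ and satisfies $\bigwedge_i\phi_i$. For each player $i$ and each $s\in W_i$, fix by definition of $W_i$ a strategy $\tau_i^{s}$ in $G_s$ winning for $\psi_i$. Define $\sigma_i$ thus: while the current history is a prefix of $\rho$, play the action prescribed by $\rho$; as soon as the history first leaves $\rho$ — say at the largest position $k$ with $h_{\leq k}=\rho_{\leq k}$, reaching a state $s:=h_{k+1}$ — switch to $\tau_i^{s}$ run from that occurrence of $s$. The outcome of $\sigma_\Agt$ from $\sinit$ is exactly $\rho$, so condition~1 holds. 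For condition~2, fix $i$ and any run $\rho'$ compatible with $\sigma_i$. If $\rho'=\rho$ then $\rho'\models\phi_i$, hence $\rho'\models\psi_i$. Otherwise $\rho'$ deviates from $\rho$ at some largest position $k$; since $\sigma_i$ follows $\rho$ on prefixes of $\rho$, the deviating move at $\rho_k$ is not player~$i$'s, so $\rho_k\in\Stat\setminus\Stat_i$, and $\rho_k\in W$ together with fact~two gives $\rho'_{k+1}=:s\in W_i$. From position $k+1$ on, $\rho'$ is a run of $G_{s}$ compatible with $\tau_i^{s}$, hence in $\psi_i$; by prefix‑independence $\rho'\models\psi_i$. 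Thus $\sigma_\Agt$ witnesses $\AG^\land$.

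The main obstacle is the ($\Leftarrow$) direction: one must show that, whatever deviation the coalition $-i$ performs, the play re‑enters player~$i$'s winning region — which is precisely where it matters that $\rho$ stays inside $\bigcap_j W_j$ (rather than merely inside $W_i$, or inside the set of states winning for the joint objective $\bigwedge_j\phi_j$), combined with the closure of $W_i$ under opponent moves. The remaining work — correctly gluing and shifting strategies so that $\sigma_i$ is well defined and the compatibility arguments go through — is routine, but should be written out with care.
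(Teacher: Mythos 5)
Your proof is correct and follows essentially the same route as the paper: outcome prefixes of an $\AG^\land$ witness stay in each $W_i$ by shifting $\sigma_i$, and conversely one follows $\rho$ and switches to a winning strategy for $\psi_i$ after a deviation, using that a deviation must occur at a state not controlled by player~$i$ and hence lands back in $W_i$. The only difference is that you explicitly justify the closure of $W_i$ under opponent moves (via determinacy / strategy shifting), a step the paper's proof uses implicitly.
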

\begin{proof}
  \fbox{$\Rightarrow$}
  Let $\sigma_\Agt$ be a solution of $\AG^\land$.
  Let $\rho$ be its outcome.
  We have that $\rho \models \bigwedge_{i\in\Agt} \phi_i$ by hypothesis of
	$\AG^\land$.
  Let $i$ be a player, we show that $\rho$ only visits states of $W_i$.
  This is because $\sigma_i$ is winning for $\bigwedge_{j\in\Agt \setminus \{i\}} \phi_j \Rightarrow \phi_i$.
  For all $k$, $\rho_{\le k}$ is a finite outcome of $\sigma_i$, and the strategy played by $\sigma_i$ after this history is winning for $\bigwedge_{j\in\Agt \setminus \{i\}} \phi_j \Rightarrow \phi_i$, which means that $\rho_{k}$ belongs to $W_i$.
  Hence $\rho$ satisfies the desired conditions.
  
  \fbox{$\Leftarrow$}
  If there is such a run $\rho$, we define the strategy profile $\sigma_\Agt$ to follow this run if no deviation has occurred and otherwise each player $i$ plays a strategy which is winning for $\bigwedge_{j\in\Agt \setminus \{i\}} \phi_j \Rightarrow \phi_i$ if possible.
  We show that such a strategy profile satisfies the assumption of assume-guarantee.
  Obviously $\sigma_\Agt \models \bigwedge_{i\in\Agt} \phi_i$.
  Let $\rho'$ be an outcome of $\sigma_i$ and $k$ the first index such that $\rho'_k \ne \rho_k$.
  The state~$\rho'_{k-1}=\rho_{k-1}$ is not controlled by player~$i$, because $\sigma_i$ follows $\rho$.
  As $\rho_{k-1}$ is in $W_i$ and not controlled by player~$i$, this means that $\rho'_{k} \in W_i$.
  Therefore $\sigma_i$ plays a winning strategy from $\rho'_k$ for the objective $\bigwedge_{j\in\Agt \setminus \{i\}} \phi_j \Rightarrow \phi_i$;
  thus~$\rho'$ satisfies this objective.
  Hence $\sigma_\Agt$ is a solution of $\AG^\land$.
  
\end{proof}

We deduce a polynomial-space algorithm for the $\AG^\land$ rule with Muller objectives:
\begin{theorem}
  For multi-player games with Muller objectives, deciding whether $\AG^\land$
	has a solution is \PSPACE-complete.
\end{theorem}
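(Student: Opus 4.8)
The plan is to obtain both directions from the characterization in Lemma~\ref{lem:assume-guarantee}. First I would observe that this lemma is applicable here: a Muller condition (and any Boolean combination of Muller conditions) is prefix-independent, and if the $\phi_i$ are given by circuits on the inputs~$\Stat$, then so is every Boolean combination of them, with a circuit of polynomially bounded size. So all the objectives we will need ($\bigwedge_{j\neq i}\phi_j \Rightarrow \phi_i$, and $\bigwedge_{i}\phi_i$) stay inside the class of circuit-representable Muller conditions required by~\cite{hunter07}.

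\textbf{Membership in \PSPACE.} For each player~$i$, I would compute the set $W_i=\{s\mid \exists\sigma_i.\ G_s,\sigma_i\models(\bigwedge_{j\neq i}\phi_j)\Rightarrow\phi_i\}$. This $W_i$ is exactly the winning region of player~$i$ in the two-player zero-sum game played on~$\A$ in which player~$i$ is opposed by the coalition $\Agt\setminus\{i\}$, with objective $(\bigwedge_{j\neq i}\phi_j)\Rightarrow\phi_i$; by the remark above this objective is a Muller condition with a polynomial-size circuit, so solving this game and computing its winning region is in \PSPACE by~\cite{hunter07}, and doing it for the $n$ players keeps us in \PSPACE. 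Let $D=\bigcap_{i\in\Agt}W_i$. By Lemma~\ref{lem:assume-guarantee}, it then remains to decide whether there is a run from~$\sinit$ all of whose states lie in~$D$ and with $\inff{\rho}\in\bigcap_{i}\mathcal{F}_i$ (again a Muller family with a polynomial-size circuit, obtained as the conjunction of the input circuits). Such a run exists iff there is a nonempty $S\subseteq D$ that (i)~is reachable from~$\sinit$ by a path all of whose states are in~$D$, (ii)~induces a strongly connected subgraph of~$\A$ containing at least one edge, and (iii)~satisfies $S\in\bigcap_i\mathcal{F}_i$: from such an $S$ one builds a lasso run (reach~$S$, then cycle through all of~$S$ forever), and conversely $S=\inff{\rho}$ has these three properties. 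A nondeterministic procedure guesses~$S$ using polynomial space, checks (i) and (ii) by graph reachability and (iii) by evaluating the $n$ circuits, all in polynomial time; by Savitch's theorem this last step is in \PSPACE as well, so the whole algorithm runs in polynomial space.

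\textbf{\PSPACE-hardness.} I would reduce from deciding the winner of a two-player zero-sum Muller game, which is \PSPACE-hard~\cite{hunter07}. Given such a game~$G$ with players~$1,2$ and player~$1$'s objective~$\phi$, build the two-player game~$\Game'$ on the same arena with $\phi_1=\phi$ and $\phi_2=\texttt{true}$ (the constant-true circuit, a trivial Muller condition). For~$\Game'$, condition~2 of $\AG^\land$ applied to player~$1$ reads $\Game',\sigma_1\models\texttt{true}\Rightarrow\phi_1$, i.e.\ $\sigma_1$ must be winning for~$\phi_1$, while condition~2 for player~$2$ and condition~1 are automatically met by any profile extending such a~$\sigma_1$. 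Hence $\AG^\land$ has a solution for~$\Game'$ if and only if player~$1$ wins~$G$; the construction is clearly polynomial. Combined with the membership argument, this gives \PSPACE-completeness.

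\textbf{Main obstacle.} The delicate point is the last step of the membership argument: one has to state and prove precisely that the sets occurring as $\inff{\rho}$ for runs of~$\A$ confined to~$D$ are exactly the nonempty subsets of~$D$ that are reachable within~$D$ and induce a strongly connected subgraph with at least one edge, and only then invoke Lemma~\ref{lem:assume-guarantee}; everything else (the circuit manipulations for the auxiliary objectives, the complexity bookkeeping) is routine once this is in place.
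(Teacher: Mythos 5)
Your proof is correct and the membership argument is essentially the paper's: compute each $W_i$ by solving a zero-sum Muller game against the coalition $\Agt\setminus\{i\}$ with a polynomial-size circuit objective, intersect, and look for a run confined to the intersection satisfying $\bigwedge_{i}\phi_i$, with correctness supplied by Lemma~\ref{lem:assume-guarantee}; your SCC-guessing step merely makes explicit the ``find a run satisfying a Muller condition in polynomial space'' step that the paper treats as routine. Your hardness reduction (setting $\phi_2=\texttt{true}$ so that $\AG^\land$ degenerates to player~1 winning the zero-sum Muller game) is correct and in fact supplies a direction the paper's proof of this theorem omits entirely, even though the statement claims \PSPACE-completeness.
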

\begin{proof}
  The algorithms proceed by computing the set $W_i$ for each player $i$ with an algorithm that computes winning regions and then checks whether there is an infinite run in the intersection $\bigcap_{i\in\Agt} W_i$ which satisfies $\bigwedge_{i\in\Agt} \phi_i$.
  This algorithm is correct thanks to Lemma~\ref{lem:assume-guarantee}.

  This is in \PSPACE~because the objective~$\bigwedge_{j\in \Agt\setminus\{i\}} \phi_j \Rightarrow \phi_i$ can be expressed by a Muller condition encoded by a circuit~\cite{dawar13} of polynomial size.
  We can decide in polynomial space if a given state is winning for a Muller condition given by a circuit.
  Thus, the set~$\bigcap_{i \in \Agt} W_i$ can be computed in polynomial space; let us denote by $\Game'$ the game restricted to this set.
  The algorithm then consists in finding a run in $\Game'$ satisfying $\bigwedge_{i\in\Agt} \phi_i$;
  that is, finding a run satisfying a Muller condition, which can be done in polynomial space.
  
\end{proof}

\end{document}